\documentclass[11pt]{article}
\usepackage[T1]{fontenc}
\usepackage{geometry} \geometry{margin=1in}
\usepackage{graphicx}
\usepackage{amsmath,amssymb,amsthm,amsfonts,dsfont,mathtools}
\usepackage{relsize}
\usepackage[margin=1cm]{caption}
\usepackage{wrapfig}
\usepackage{frame,color}
\usepackage{environ,subfig}
\usepackage{hyperref}
\usepackage{xspace}
\usepackage{framed}
\usepackage{comment}
\usepackage{changepage}
\usepackage{algorithm}
\usepackage[noend]{algpseudocode}
\usepackage{mathrsfs}
\usepackage{soul}

\algnewcommand{\Inputs}[1]{%
  \State \textbf{Inputs:}
  \Statex \hspace*{\algorithmicindent}\parbox[t]{.8\linewidth}{\raggedright #1}
}
\algnewcommand{\Initialize}[1]{%
  \State \textbf{Initialize:}
  \Statex \hspace*{\algorithmicindent}\parbox[t]{.8\linewidth}{\raggedright #1}
}
\algnewcommand{\TurnOne}[1]{%
  \State \textbf{Timestep 1:}
  \Statex \hspace*{\algorithmicindent}\parbox[t]{.8\linewidth}{\raggedright #1}
}

\makeatletter
 {\par\unskip\endMakeFramed}
\makeatother

\newtheorem{fact}{Fact}
\newtheorem{remark}[fact]{Remark}
\newtheorem{theorem}[fact]{Theorem}
\newtheorem{lemma}[fact]{Lemma}

\newtheorem{corollary}[fact]{Corollary}

\newcommand{\broadcast}{\mathsf{Broadcast}}
\newcommand{\cd}{\mathsf{CD}}
\newcommand{\nocd}{\mathsf{No}{\text -}\mathsf{CD}}
\newcommand{\sr}{\mathsf{SR}{\text -}\mathsf{communication}}
\newcommand{\leader}{\mathsf{Leader Election}}
\newcommand{\approxcounting}{\mathsf{Approximate Counting}}

\newcommand{\ignore}[1]{}

\newcommand{\Expect}{\operatorname{E}}

\newcommand{\Prob}{\operatorname{Pr}}
\newcommand{\bydef}{\stackrel{\rm def}{=}}

\newcommand{\fr}[2]{\mbox{$\frac{#1}{#2}$}}

\newcommand{\poly}{{\operatorname{poly}}}

\newcommand{\dist}{\operatorname{dist}}
\newcommand{\Color}{\operatorname{Color}}

\newcommand{\zero}[1]{\makebox[0mm][l]{$#1$}}

\newcommand{\rb}[2]{\raisebox{#1 mm}[0mm][0mm]{#2}}
\newcommand{\istrut}[2][0]{\rule[- #1 mm]{0mm}{#1 mm}\rule{0mm}{#2 mm}}

\newcommand{\ID}{\operatorname{ID}}
\newcommand{\CID}{\operatorname{CID}}
\newcommand{\LOCAL}{\mathsf{LOCAL}}

\NewEnviron{fcodebox}[1]%
{\fbox{%
\begin{minipage}{#1\linewidth}
\begin{codebox}
\BODY
\end{codebox}
\end{minipage}
}}

\title{The Energy Complexity of Broadcast\thanks{Supported by NSF grants CCF-1514383 and CCF-1637546.}}

\author{\istrut[0]{6}Yi-Jun Chang\\
University of Michigan
\and
Varsha Dani\\
University of New Mexico\\
\ \\
\and
Thomas P. Hayes\thanks{Supported by NSF CAREER award CCF-1150281.}\\
University of New Mexico\\
\and
Qizheng He\\
IIIS, Tsinghua University
\and
Wenzheng Li\\
IIIS, Tsinghua University
\and
Seth Pettie\\
University of Michigan
}

\begin{document}
\date{}
\maketitle
\thispagestyle{empty}
\setcounter{page}{0}

\begin{abstract}
\emph{Energy} is often the most constrained resource in networks of battery-powered devices,
and as devices become smaller, they spend a larger fraction of their energy on \emph{communication} (transceiver usage)
not computation.
As an imperfect proxy for true energy usage, we define \emph{energy complexity}
to be the number of time slots a device transmits/listens; idle time and computation are free.

In this paper we investigate the energy complexity of fundamental communication primitives such as $\broadcast$
in \emph{multi-hop} radio networks.  We consider models with collision detection ($\cd$) and without ($\nocd$),
as well as both randomized and deterministic algorithms.  Some take-away messages from this work include:

\begin{itemize}
\item The \emph{energy} complexity of $\broadcast$ in a multi-hop network
is intimately connected to the \emph{time} complexity of
$\leader$ in a single-hop (clique) network.
Many existing lower bounds on time complexity
immediately transfer to energy complexity.
For example, in the $\cd$ and $\nocd$ models, we
need $\Omega(\log n)$ and $\Omega(\log^2 n)$ energy, respectively.

\item The energy lower bounds above \emph{can} almost be achieved, given sufficient ($\Omega(n)$) time.
In the $\cd$ and $\nocd$ models we can solve $\broadcast$ using
$O(\frac{\log n\log\log n}{\log\log\log n})$ energy and $O(\log^3 n)$ energy, respectively.

\item The complexity measures of Energy and Time are in conflict, and it is an open problem whether both can be minimized simultaneously.
We give a tradeoff showing it is possible to be \emph{nearly} optimal in both measures simultaneously.
For any constant $\epsilon>0$, $\broadcast$ can be solved
in $O(D^{1+\epsilon}\log^{O(1/\epsilon)} n)$ time with
$O(\log^{O(1/\epsilon)} n)$ energy, where $D$ is the diameter of the network.
\end{itemize}
\end{abstract}
\newpage

\section{Introduction}\label{sect:intro}

In many networks of small wireless devices the scarcest resource is energy,
and the majority of energy is often spent on \emph{radio transceiver usage}---sending and receiving packets--- not
on computation~\cite{PolastreSC05,BarnesCMA10,Lee+13,SivalingamSA97}.
Rather than account for the energy profile of every mode of operation,
we simply assume that devices spend one unit of energy to send/listen and nothing for computation.
It is not uncommon to use transceiver usage as a proxy for total energy
(see, e.g.,~\cite{ChangKPWZ17,JurdzinskiKZ02,JurdzinskiKZ02b,JurdziskiKZ03,JurdzinskiKZ02c}).

\paragraph{The Model.}  The network is a connected undirected graph $G=(V,E)$ with devices associated with vertices.  Vertices know nothing about $G$, 
except for some general parameters such as the number of vertices $n = |V|$, the maximum degree $\Delta = \max_v \deg(v)$, and the diameter $D = \max_{u,v} \operatorname{dist}(u,v)$.\footnote{Each of $\Delta$ and $D$
can be upper bounded by $n$ if it is unknown.}
\emph{Time} is partitioned into discrete slots, and all vertices agree on time slot zero.
In each time slot, each device can choose to either
(i) send a message,
(ii) listen, or
(iii) remain idle,
where (i) and (ii) cost one unit of energy and (iii) is free.
If a device chooses to send a message or remain idle, it gets no feedback from the environment.
If a device chooses to listen and \emph{exactly one} neighbor sends a message $m$, it receives $m$.
The other cases depend on how the model handles collisions.
\begin{description}
\item[$\nocd$]  If zero or at least two neighbors transmit, a listener hears a signal $\lambda_S$, indicating \emph{silence}.

\item[$\cd$] If zero neighbors transmit, a listener hears $\lambda_S$; if at least two neighbors transmit, a listener hears $\lambda_N$, indicating \emph{noise}.

\item[$\LOCAL$] Every listener hears \emph{every} message transmitted by any neighbor.  There are no collisions.\footnote{Lower bounds in the $\LOCAL$ model are robust since
they capture the difficulty of synchronization, not on the subtleties of any particular collision-detection model.  This model bears the same name as Linial's $\LOCAL$ model~\cite{Linial92,Peleg00}
and is very similar to it.  In the traditional $\LOCAL$ model vertices do not have to choose between transmitting and listening, and there is no cost associated with communication.}
\end{description}

Finally, all the models come in randomized and deterministic variants.  In the deterministic setting, vertices are assigned distinct IDs in $\{1,\ldots,N\}$ and can use them to break symmetry.
Randomized algorithms can generate private random bits to break symmetry, e.g., they can assign themselves $O(\log n)$-bit IDs, which are distinct w.h.p.

\paragraph{Our Contribution.}
Previous work on energy complexity
has focussed on fundamental problems in single-hop (clique) networks like $\leader$ and
$\approxcounting$\footnote{(approximating `$n$' to within a constant factor)}~\cite{BenderKPY16,ChangKPWZ17,JurdzinskiKZ02,JurdzinskiKZ02b,JurdziskiKZ03,JurdzinskiKZ02c,kardas2013energy,NakanoO00},
where it is typical to assume that $n$ is unknown.
In this paper we consider fundamental problems in arbitrary \emph{multi-hop}
network topologies, primarily $\broadcast$.  At time zero there is a distinguished \emph{source} device $s\in V$
holding a message $m$.  By the end of the computation all vertices should know $m$.
We establish lower and upper bounds on $\broadcast$ in all collision-detection models,
both randomized and deterministic.
Table~\ref{table:broadcast} lists our results.  Some of the more interesting findings are as follows.
\begin{itemize}
\item \emph{Time} lower bounds on $\leader$ in single-hop networks extend to \emph{energy} lower bounds on $\broadcast$ in multihop networks.
As a consequence, we get energy lower bounds on $\broadcast$ of $\Omega(\log n)$ and $\Omega(\log\Delta\log n)$ in $\cd$ and $\nocd$, respectively.  
These lower bounds
reflect the difficulty of local \emph{contention resolution}, not on broadcasting \emph{per se}.  
We give a more robust energy lower bound of $\Omega(\log D)=\Omega(\log n)$
that reflects the difficulty of getting a message across a long path.  It applies to any collision-detection model, even $\LOCAL$.

\item Given sufficient time ($\Omega(n)$, regardless of the diameter $D$), these energy lower bounds can almost be achieved.
We give algorithms for $\cd$ and $\nocd$ using energy $O(\frac{\log n\log\log\Delta}{\log\log\log\Delta})$ and $O(\log\Delta\log^2 n)$, respectively.
Moreover, we show that on constant degree graphs, there is an energy-efficient preprocessing step that allows $\nocd$ to simulate $\LOCAL$.
This leads to an $O(n\log n)$ time, $O(\log n)$ energy $\broadcast$ algorithm when $\Delta=O(1)$.

\item Even with an infinite energy budget we need $\Omega(D)$ time.  We show that it is possible to achieve near optimality in both energy and time
simultaneously.  For any $\epsilon>0$, there is a $\broadcast$ algorithm taking $O(D^{1+\epsilon}\log^{O(1/\epsilon)} n)$ time and $O(\log^{O(1/\epsilon)} n)$ energy.

\item An interesting special case is when $G$ is a path, but the nodes do not know their position within this path, or the orientation of their neighbors.
In this setting, we are able to provide a provably optimal $\broadcast$ algorithm, taking $O(n)$ time and expected $O(\log n)$ energy.  
Neither time nor energy can be improved, even sacrificing the other.

\end{itemize}

\begin{table}
\centering
\begin{tabular}{|l|l|l|l|}
\multicolumn{4}{l}{\textbf{\textsc{Randomized Models}}}\\
\multicolumn{1}{c}{\textbf{\textsc{Model}}} & \multicolumn{1}{c}{\textbf{\textsc{Time}}} & \multicolumn{1}{c}{\textbf{\textsc{Energy}}} 			& \multicolumn{1}{c}{\textbf{\textsc{Notes}}}\\\cline{1-4}
			 & $O(n\log\Delta \log^2 n)$ 						& $O(\log\Delta \log^2 n)$ & $$ \istrut[2]{4}\\\cline{2-4}
$\nocd$ 		& $O(D^{1+\epsilon} \log^{O(1/\epsilon)} n)$ 	& $O(\log^{O(1/\epsilon)}n)$		 						&  $\epsilon>0$ \istrut[2]{4}\\\cline{2-4}
            & $O(n \log n)$ 										& $O(\log n)$ 								& $\Delta = O(1)$ \istrut[2]{4}\\\cline{2-4}
			& any 										& $\Omega(\log\Delta \log n)$ 								& \cite{Newport14}, bipartite graph $K_{2,k}$, $1\leq k\leq \Delta$ \istrut[2]{4}\\\hline
			& $O\left(\frac{n \log \Delta \log^{2+\epsilon} n}{\epsilon \log \log n}\right)$ & $O\left(\frac{\log^2 n}{\epsilon \log \log n}\right)$ 	&  $\epsilon \in (0,1)$ \istrut[2]{4}\\\cline{2-4}
$\cd$ 		& $O\left(\Delta n^{1+\xi}\right)$  					& $O\left(\frac{\log n (\log \log \Delta + \xi^{-1})}{\log \log \log \Delta}\right)$ &  $\xi = \omega(\log \log n / \log n)$ \istrut[2]{4}\\\cline{2-4}
	 		& any										& $\Omega(\log n)$ 										& \cite{Newport14}, bipartite graph $K_{2,k}$, $1\leq k\leq \Delta$\istrut[2]{4} \\\hline
		 	& $O(n\log n)$ 									& $O(\log n)$ 											& $$\istrut[2]{4} \\\cline{2-4}
\rb{2.6}{$\LOCAL$} & any										& $\Omega(\log n)$ 										& Path graph \istrut[2]{4}\\\hline
\multicolumn{4}{l}{}\\
\multicolumn{4}{l}{\textbf{\textsc{Deterministic Models}}}\\\cline{1-4}
$\nocd$ 		& any										& $\Omega(\Delta)$ 										& \cite{JurdzinskiKZ02}, bipartite graph $K_{2,k}$, $1\leq k\leq \Delta$ \istrut[2]{4}\\\hline
$\cd$ 		& $O(N^2 n\log n \log N)$ 								& $O(\log^3 N\log n)$ 									& $$\istrut[2]{4} \\\hline
$\LOCAL$ 	& $O(n\log n \log N)$								 & $O(\log n \log N)$ 		 								& $$ \istrut[2]{4}\\\hline
\end{tabular}
\caption{\label{table:broadcast}A summary of our results.  We are aware of no prior work on the energy complexity of $\broadcast$.
Parameters: $n$ is the number of vertices, $\Delta$ the maximum degree, $D$ the diameter, and $\{1,\ldots,N\}$ the ID space;
some algorithms are parameterized by constants $\epsilon$ and $\xi$.}
\end{table}

\subsection{Related Work}
Energy saving is a critical issue for sensor networks, and it has attracted a lot of attentions in networking and systems research.
Most commercial devices in a sensor network, such as MICAz and SunSPOT, can switch between {\em active} and {\em sleep} modes~\cite{Taniguchi12}; the energy consumption of a device in sleep mode is significantly smaller than in active mode. In~\cite[Section 9.1]{AKYILDIZ2007921},
{\em idle listening} (i.e.,  a device is active, but no message is received) and {\em packet collisions} are identified as major causes of
energy loss. An approach to this issue is to adaptively set the work/sleep cycle of the devices~\cite{vanDam2003,Ye2004,Wang2006}; based on this approach, practical energy-efficient algorithms for $\broadcast$ have been designed~\cite{Hong2009,Hong13}. 
Another route to reducing the energy cost is via {\em Time Division Multiple Access}
(TDMA) algorithms, which reduce collisions by properly assigning time slots to the devices~\cite{Herman2004,Kulkarni2004}.

\medskip

Despite its importance in practice, energy complexity has not received much study in theory research.
Most prior work that measured energy complexity/channel accesses considered only {\em single-hop} networks.
Nakano and Olariu~\cite{NakanoO00} showed that in $\nocd$, $n$ initially identical devices
can assign themselves distinct IDs in $\{1,\ldots,n\}$ with $O(\log\log n)$ energy per device.
Bender, Kopelowitz, Pettie, and Young~\cite{BenderKPY16} gave a randomized method for
circuit-simulation in the $\cd$ model, which led to algorithms for $\leader$ and $\approxcounting$
using $O(\log(\log^* n))$ energy and $n^{o(1)}$ time, w.h.p.
An earlier algorithm of Kardas et al.~\cite{kardas2013energy} solves the problem in
$O(\log^\epsilon n)$ time using $O(\log\log\log n)$ energy, but only in expectation.
Chang et al.~\cite{ChangKPWZ17} proved that for these problems,
$\Theta(\log(\log^* n))$ and $\Theta(\log^* n)$ energy are optimal in $\cd$ and $\nocd$, respectively,
for $\poly(n)$-time algorithms.
They also give tradeoffs between time and energy, e.g., in $\nocd$, with $O(\log^{2+\epsilon} n)$ time we can use just $O(\epsilon^{-1}\log\log\log n)$ energy.
For deterministic $\leader$ protocols, $\Theta(\log N)$ is optimal in $\cd$ and $\nocd$~\cite{ChangKPWZ17,JurdzinskiKZ02c},
but if senders can also detect collisions, the energy complexity drops to $\Theta(\log\log N)$~\cite{ChangKPWZ17}.
See also~\cite{JurdzinskiKZ02b,JurdziskiKZ03,JurdzinskiKZ02c}.

\medskip

$\broadcast$ is a well-studied problem in multi-hop networks, but nearly all prior research focused solely on \emph{time} complexity.
The seminal \emph{decay} algorithm of Bar-Yehuda et al.~\cite{bar1992time} solves $\broadcast$ in $\nocd$
in $O(D\log n + \log^2 n)$ time.  This bound was later improved to $O(D \log \frac{n}{D} + \log^2 n)$~\cite{CR06,KP05}.
The $\log^2 n$ term is known to be necessary~\cite{ABLP91}, and the $D\log\frac{n}{D}$ term is known to be optimal~\cite{KM98} for a restricted
class of algorithms that forbid ``spontaneous transmission.''\footnote{Vertices that have yet to learn the message are forbidden from transmitting.}

Haeupler and Wajc \cite{haeupler2016faster} recently gave an $O(D \frac{\log n \log \log n}{\log D} + \log^{O(1)}n)$-time
$\broadcast$ algorithm in $\nocd$, demonstrating that spontaneous transmissions \emph{are} useful.
Czumaj and Davies~\cite{CzumajD17} improved the bound to $O(D \frac{\log n}{\log D} + \log^{O(1)}n)$
and gave a $\leader$ algorithm of the same complexity, improving~\cite{GhaffariH13}.
See~\cite{GhaffariHK15} for an $O(D + \log^6 n)$-time $\broadcast$ algorithm in the $\cd$ model.

\subsection{Organization and Technical Overview}

In Section~\ref{sect:lowerbound} we show two simple lower bounds.
We prove that even for a simple network topology---a path---and the strongest model---randomized $\LOCAL$---the
$\broadcast$ problem still requires  $\Omega(\log n)$ energy. 
We then present a generic reduction showing that the \emph{energy} complexity of $\broadcast$ 
in a multi-hop network is at least the \emph{time} complexity of $\leader$ in a single-hop network,
with the other aspects of the model being the same ($\cd$ or $\nocd$, deterministic or randomized).  
The take-away message from these lower bounds is that
the cost of $\broadcast$ arises from two causes: (i) the cost of synchronization, for propagating
messages along long paths (when $D$ is large), and (ii) the cost of contention-resolution in a vertex's 2-hop neighborhood (when $\Delta$ is large).

In Section~\ref{sect:localsim} we prove a general simulation theorem showing that any algorithm for the 
$\LOCAL$ model may be simulated in the $\cd$ and $\nocd$ models, with some blow-up in time and energy costs.
In Section~\ref{sect:bb} we introduce the basic building block $\sr$ used by all our algorithms.
In Section~\ref{section:rand} we show a simple randomized algorithm in $\nocd$ based on iterative clustering.
Our algorithm can be viewed as a mutual speed-up procedure. On the one hand, maintaining a clustering help 
conserve energy for broadcast. 
On the other hand, we use broadcast to get a better clustering. 
For graphs of unbounded degree, the energy cost of our algorithm is $O(\log^3 n)$ in $\nocd$, which is actually the product
of our two lower bounds.  Its runtime is $O(n\log^3 n)$.

In Section~\ref{section:diameter}, we improve the runtime of our randomized algorithms to
$O(D^{1+\epsilon} \poly(\log n))$. Our algorithm offers a continuous tradeoff between time and energy.
For any $\epsilon>0$, $\broadcast$ is solved using $O(\log^{O(\frac{1}{\epsilon})}n)$ energy
in $O(D^{1+\epsilon} \log^{O(\frac{1}{\epsilon})} n)$ time.

In the randomized $\cd$ model, we can almost achieve our energy lower bound for $\broadcast$
by spending more time.  In Section~\ref{sect:improve-cd} we show that for graphs of unbounded degree,
our algorithm uses $O(\frac{\log n\log\log \Delta}{\log\log\log \Delta})$ energy and
takes $O(\Delta n^{1+\xi})$ time, for any constant $\xi>0$.

In Section~\ref{section:path}, we present an algorithm for $\broadcast$ on a path,
and prove it has nearly optimal performance in terms of time and energy use.

In Appendix~\ref{section:det} we present a deterministic algorithm for the $\cd$ model,
which is also based on the idea of iterative clustering. We use \emph{ruling sets} to build each clustering. 
The energy complexity is $O(\log^3 N \log n)$ but the runtime is $O(nN^2 \cdot \log n \log N)$.

\section{$\broadcast$ Lower Bounds}\label{sect:lowerbound}

In this section we prove two lower bounds on the energy-complexity of $\broadcast$.


\begin{theorem}\label{thm:lb-path}
Consider an $n$-vertex path graph $P=(v_1, \ldots, v_n)$, where each vertex $v_i$ does not know its position $i$.
Suppose that $v_1$ attempts to broadcast a message $m$.
For any randomized $\LOCAL$ $\broadcast$ algorithm $\mathcal{A}$, with probability $1/2$,
at least one vertex spends $\frac15\log n$ energy before receiving the message $m$.
\end{theorem}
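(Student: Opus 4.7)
The plan is to argue by contradiction: suppose that with probability greater than $1/2$, every vertex $v_i$ spends fewer than $k := \frac{1}{5}\log n$ energy units before first receiving $m$; let $\mathcal{G}$ denote this good event, which I aim to show is impossible.

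The first step is to exploit the locality of the LOCAL model on a path. Because $m$ can propagate at most one hop per time slot, the first receipt time $\sigma_i$ at $v_i$ satisfies $\sigma_i \geq i-1$; in particular $0 = \sigma_1 < \sigma_2 < \cdots < \sigma_n$ is a strictly increasing sequence that must reach at least $n-1$. Moreover, since vertices are anonymous (do not know their positions) and use independent random bits, $v_i$'s behavior during the window $[1, i-2]$ is source-oblivious: it can be coupled with the execution on an infinite anonymous path where $v_i$ has no distinguished source, and inside that window $v_i$'s schedule is mutually independent of every other vertex's.

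The second step is to translate $\mathcal{G}$ into a requirement on listen/transmit coincidences. For $v_i$ to receive $m$ at slot $\sigma_i$, two things must happen: (a) $\sigma_i$ appears among $v_i$'s listen slots, and (b) a neighbor of $v_i$ that already holds $m$ transmits at $\sigma_i$. Under $\mathcal{G}$, the number of $v_i$'s listen slots in $[1, \sigma_i]$ is at most $k$, so the chain $(\sigma_i)_{i=2}^{n}$ encodes $n-1$ coincidences between size-at-most-$k$ schedules drawn i.i.d.\ from the protocol's anonymous distribution.

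The crux, and main obstacle, is a quantitative argument showing this joint event has probability below $1/2$ unless $k \geq \frac{1}{5}\log n$. My intended approach is a counting/entropy bound: under anonymity, the schedules are i.i.d.\ samples from a common distribution on size-$\le k$ subsets of $\mathbb{N}$, so there are at most roughly $2^{O(k)}$ effectively distinct ``early'' schedules. Arranging $n$ such samples to admit a strictly increasing, neighbor-compatible chain of $n$ coincidences with maximum element $\geq n-1$ is too constrained to occur with probability $\ge 1/2$ unless the schedule universe has size $\Omega(n^{1/5})$, i.e., $k \geq \frac{1}{5}\log n$. The main technical difficulty is that schedules can adapt to silences and non-$m$ messages from neighbors, so they are not literally independent outside each vertex's locality horizon; I would circumvent this by applying the counting bound strictly inside the source-oblivious windows $[1, i-2]$ where the coupling from Step 1 guarantees independence, and then using the fact that $\sigma_n \geq n-1$ forces the terminal coincidences of the chain to lie inside such a horizon for some intermediate $v_i$, yielding the contradiction.
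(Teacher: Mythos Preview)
Your proposal has a genuine gap in the independence and counting steps. The claim that in the source-oblivious window $[1,i-2]$ vertex $v_i$'s schedule is ``mutually independent of every other vertex's'' is false: even absent the message $m$, $v_i$ adapts to what it hears from $v_{i-1}$ and $v_{i+1}$, whose behavior in turn depends on their neighbors, and so on. By time $t$, $v_i$'s state can depend on the random bits of every vertex within distance $t$. You acknowledge this (``schedules can adapt to silences and non-$m$ messages''), but the proposed fix---restricting to the source-oblivious window---only decouples $v_i$ from the \emph{source}, not from its neighbors. So the ``i.i.d.\ schedules'' premise of your counting argument never holds.

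The counting bound itself is also not well-posed. A size-$\le k$ schedule is a subset of $\{1,\ldots,T\}$ for some time horizon $T$, and the number of such schedules is $\binom{T}{\le k}$, not $2^{O(k)}$; nothing in your argument bounds $T$. And the leap from ``schedule universe has size $\Omega(n^{1/5})$'' to ``$k\ge\frac{1}{5}\log n$'' is asserted rather than proved.

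The paper's proof handles these obstacles by working with \emph{intervals} rather than individual vertices. It defines, for an interval $I$ of length $32^i$, an event $E_i[I]$ saying that some vertex in $I$ still has all its information confined to $I$ after its $i$th wakeup, \emph{even if vertices outside $I$ behave in the most helpful way possible}. Because of this pessimistic convention, $E_i[I]$ depends only on the randomness of vertices inside $I$, so these events are genuinely independent across disjoint intervals. The induction partitions $I$ into $32$ sub-intervals, uses a binomial bound to get at least $11$ of them satisfying $E_i$ with probability $>3/4$, and then exploits the fact that the $(i{+}1)$st wakeup times of the eleven witness vertices are i.i.d.\ (each determined by a disjoint sub-interval's randomness): by symmetry, the middle witness is unlikely to have the largest wakeup time on either side, so information cannot have crossed all the way to it. This interval-based decomposition is precisely what manufactures the independence your argument needs but cannot obtain from individual vertices' schedules.
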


\begin{proof}
Without loss of generality, we assume the algorithm $\mathcal{A}$ works as follows. Initially each vertex $v$ generates a random string $r_v$ from the same distribution. During the execution of $\mathcal{A}$, each vertex $v_i$ maintains an interval $[\alpha, \beta]$ such that $v_i$ knows all random strings $r_{\alpha}, \ldots, r_{\beta}$ (initially $\alpha=\beta=i$).  Each time a vertex $v_i$ wakes up it either (i) sends all random strings it has to its two neighbors, or (ii) listens to the channel to receive random strings from its two neighbors. After each wakeup, $v_i$ decides, based on all information it has, when to wake up next and whether to send or listen.

Let $I$ be an interval of $(v_1,\ldots,v_n)$.
Define the event $E_i[I]$ as follows.  Suppose (contrary to reality) that the vertices outside of $I$
take the most advantageous actions to maximize the probability that all $I$-vertices learn the random string of
some vertex outside $I$.\footnote{For example, they may transmit in \emph{every} round.  We place no energy constraint on their behavior.}
Even given this help, there exists some
vertex in $I$ that, after its $i$th wake-up, only knows random strings of vertices in the interval $I$.
If $E_i[I]$ occurs, we write $v^\star[I]$ to denote the rightmost vertex in $I$ satisfying the statement above.
Notice that $E_i[I]$ depends solely on random strings of vertices in $I$, and thus $E_i[I_1]$ and $E_i[I_2]$ are independent for any two disjoint intervals $I_1$ and $I_2$.
We prove by induction that for any interval $I$ with $L_i \bydef (32)^i$ vertices,
$E_i[I]$ happens with probability at least $1/2$,
and this immediately implies the desired $\log_{32} n=\frac15\log n$ energy lower bound.

The base case of $i=0$ is trivial. Suppose the claim holds for intervals of length $L_i$.
Let $I$ be an interval of length $L_{i+1}$, partitioned into 32 intervals
$(I_1$, $\ldots$, $I_{32})$ with length $L_{i}$.
By the induction hypothesis, $E_{i}[I_j]$ happens with probability at least $1/2$.
By the independence of the intervals,
the probability that $E_i$ happens on at least $11$ of them is at least
$\Prob[\text{Binomial}(32,1/2) \geq 11] > 0.97 > \frac34$.
Conditioning on this event happening, we denote those 11 intervals as $J_1$, $\ldots$, $J_{11}$, from left to right.
Let $t_s$ be the time of the $(i+1)$th wake-up of $v^\star[J_s]$.
In order for $v^\star[J_6]$ to receive some random string of a vertex outside $I$ during its
$(i+1)$th wake-up, we need $t_6$ to be the largest number among either $\{t_1,\ldots.t_6\}$ or $\{t_6,\ldots,t_{11}\}$.
By symmetry and independence, $\{t_s\}_{1 \leq s \leq 11}$ are i.i.d.~random variables.
Thus, the probability that $v^\star[J_6]$ receives information from outside $I$ during its $(i+1)$th
wake-up is at most $\frac16+\frac16=\frac13$,
and so the probability of $E_{i+1}[I]$ happening is at least $\frac34\times\frac23=\frac12$.
This confirms the inductive hypothesis at $i+1$.
\end{proof}

Next, we prove $\broadcast$ lower bounds for the $\nocd$ and $\cd$ models,
which hold even in constant diameter graphs.

\begin{theorem}\label{thm:lb-via-leaderelection}
We have the following energy lower bounds for $\broadcast$:
(i) deterministic $\nocd$: $\Omega(\Delta)$,
(ii) randomized $\nocd$: $\Omega(\log\Delta \log n)$,
(iii) randomized $\cd$: $\Omega(\log n)$.
\end{theorem}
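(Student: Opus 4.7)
The plan is to reduce $\leader$ in a single-hop clique to $\broadcast$ in the bipartite graph $K_{2,k}$ with $k\le\Delta$. Let the ``top'' side of $K_{2,k}$ consist of the source $u_1$ and one other vertex $u_2$, and let $v_1,\ldots,v_k$ be the ``bottom'' side. First I would pin down an adversarial behaviour for the top vertices: $u_1$ transmits the message $m$ at round $0$---so that every $v_i$ instantly learns $m$---and from that point on both $u_1$ and $u_2$ merely listen, never transmitting. Since the $v_i$'s are pairwise non-adjacent in $K_{2,k}$ and their only neighbours $u_1,u_2$ fall silent, no $v_i$ receives any further signal from the channel after round $0$; hence each $v_i$'s transmit/listen/idle schedule is a deterministic function of its private randomness (or of its ID in the deterministic case). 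For the broadcast to succeed, $u_2$ must eventually receive $m$, which can only happen at a round $r^\star$ in which exactly one $v_i$ transmits---a singleton transmission that naturally elects a leader among the $v_i$'s.

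The second step is to turn any broadcast algorithm $\mathcal{A}$ of energy $E$ into a $\leader$ protocol $\mathcal{B}$ on a clique of size $k$ whose round complexity is $O(E)$. Because the $v_i$'s act obliviously under our adversarial top schedule, each $v_i$'s ordered list of at most $E$ active actions can be re-indexed along a common integer time axis of length $E$ in the clique; all channel-relevant events (silence, singleton, collision) are determined by these actions alone, so the compressed clique execution sees exactly the same event sequence as the oblivious $K_{2,k}$ execution. Whichever $v_i$ is the unique transmitter at the compressed round corresponding to $r^\star$ declares itself leader---a condition detectable in $\cd$ by one extra listen, and equivalently handled in $\nocd$. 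Thus the probability that $\mathcal{B}$ elects a single leader is at least the probability that $\mathcal{A}$ succeeds, and $\mathcal{B}$ uses $O(E)$ rounds.

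Finally, I plug in the known $\leader$ lower bounds for a clique of size $k=\Delta$: Jurdzi\'nski--Kuty\l{}owski--Zatopia\'nski~\cite{JurdzinskiKZ02} give $\Omega(k)=\Omega(\Delta)$ for deterministic $\nocd$, proving (i); Newport~\cite{Newport14} gives $\Omega(\log k\cdot \log(1/\delta))$ for randomized $\nocd$ with failure probability $\delta$, and $\Omega(\log(1/\delta))$ for randomized $\cd$. Taking $\delta=1/n$ (so that the induced $\leader$ protocol must succeed as often as a correct $\broadcast$) yields $E=\Omega(\log\Delta\log n)$ and $E=\Omega(\log n)$ respectively, proving (ii) and (iii). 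The main obstacle I anticipate is justifying the re-indexing step cleanly: one must verify that compressing each $v_i$'s active slots to a common integer axis of length $E$ really produces a legitimate clique execution whose success distribution dominates $\mathcal{A}$'s on $K_{2,k}$, and that the resulting lower bound is genuinely on $E$ rather than on some weaker parameter such as the sum of active slots over all vertices. The obliviousness guaranteed by the silent-top-side scenario is what makes the re-indexing work.
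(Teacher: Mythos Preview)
Your reduction has two real gaps, and the paper's proof works around both by a different mechanism.

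\textbf{First gap: you cannot ``pin down'' the behaviour of $u_1,u_2$.} In a lower-bound argument the algorithm $\mathcal{A}$ is fixed and \emph{every} vertex---including $u_1$ and $u_2$---runs it. The adversary may choose the graph and the source vertex, but not override the protocol at selected nodes. If $\mathcal{A}$ has $u_1$ or $u_2$ transmit after round~$0$ (which an algorithm designer is free to do), the $v_i$'s receive feedback and will \emph{not} act obliviously; your conclusion that each $v_i$'s schedule is a function of its private randomness alone then fails, and with it the whole compression argument. (Note also that having $u_1,u_2$ listen in every round is itself an unbounded-energy behaviour, so it certainly is not an execution of an $E$-energy algorithm.)

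\textbf{Second gap: the re-indexing does not compress to $E$ rounds.} Even granting obliviousness, different $v_i$'s may be active at disjoint time slots. If $v_1$ transmits at $\{1,3,5\}$ and $v_2$ at $\{2,4,6\}$, compressing each to $\{1,2,3\}$ manufactures collisions that never occurred; preserving the collision pattern requires keeping the \emph{union} of all active slots, whose size can be $\Theta(kE)$ rather than $E$. Obliviousness alone does not make a length-$E$ common axis exist, so the obstacle you flag is fatal, not merely technical.

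The paper avoids both issues by reversing the roles of ``simulated'' and ``simulating'' vertices. The clique devices are the $v_i$'s, equipped with \emph{shared randomness} (Newport's lower bounds hold even then). They use the shared coins to simulate $s$ and $t$ faithfully---possible because $s,t$'s actions depend only on their random bits and on channel feedback, and in the full-duplex clique each $v_i$ always hears the channel and can therefore reconstruct what $s,t$ would have heard. One then deletes every slot in which neither $s$ nor $t$ listens; such slots are irrelevant to whether $t$ ever receives $m$. Since $s$ and $t$ each spend at most $E$ energy, at most $2E$ slots survive. Thus the compression is driven by the \emph{top} vertices' energy budget, not the bottom vertices', and that is exactly what makes the $O(E)$-time clique simulation valid.
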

\begin{proof}
Consider the $\leader$ problem in a single-hop network, where the number of vertices is unknown, but is guaranteed to be at most $n'$.
Suppose that solving this problem with probability at least $1-f$ requires $T(n',f)$ time, even if the vertices have {\em shared randomness}, and they are allowed to send and listen simultaneously (the \emph{full duplex} model.)

Let $G_k=(\{s, v_1, \ldots, v_k, t\}, \{(s,v_1),\ldots (s,v_k), (t,v_1),\ldots (t,v_k)\})$ be isomorphic to $K_{2,k}$,
where the vertex $s$ attempts to broadcast a message.
We claim that  any $\broadcast$ algorithm that applies to the graphs $G_k$, for all $1\leq k\leq \Delta$,
with failure probability $f$, has an energy lower bound of $T(\Delta,f) / 2$.

The claim is proved by the following generic reduction.
Let $\mathcal{A}$ be any such $\broadcast$ algorithm that takes $E$ energy.
We transform it to a $\leader$ algorithm $\mathcal{A}'$ in a single-hop network that takes $2E$ time.
The idea is to treat $\{v_1, \ldots, v_k\}$ as the vertices of a single-hop network,
and treat $\{s,t\}$ as the communication channel.
Notice that the two vertices $s$ and $t$ do not know anything beyond (i) the local random bits in $s$ and $t$, and (ii) the feedback from the communication channel so far.
Therefore, each vertex in $\{v_1, \ldots, v_k\}$ can perfectly predict the future actions of the two vertices $s$ and $t$, if (i) all vertices in $\{v_1, \ldots, v_k\}$ know the local random bits generated by $s$ and $t$, and (ii) they always listen to the channel.
Hence we may simulate the algorithm $\mathcal{A}$ on a single-hop network by using the shared randomness to simulate the local random bits generated by $s$ and $t$.
Any time slot where both $s$ and $t$ are not listening is meaningless, so we can skip it.
Therefore, the simulation takes at most $2E$ time.
Notice that in order to have $t$ receive the message from $s$, there must be one time slot where
exactly one vertex in $\{v_1, \ldots, v_k\}$ transmits.  This is precisely the termination condition of
a $\leader$ algorithm in the full duplex model.  Hence, we obtain the the desired $\leader$
algorithm $\mathcal{A}'$.

Notice that the above reduction works in both
the $\cd$ and $\nocd$ models.
It has been shown in~\cite{Newport14} that
$T(n',f) = \Omega(\log\log n'+\log \frac1f)$
in randomized $\cd$, and
$T(n',f) =\Omega(\log n' \log \frac1f)$ in randomized
$\nocd$, with or without full duplex.
Thus, we obtain the two desired
energy lower bounds for $\broadcast$ in the randomized model.

For the deterministic model,
an $\Omega(N)$ time lower bound has been shown in~\cite{JurdzinskiKZ02}.
In their setting, the size of the single-hop network is unknown but is at most $N$, and each vertex has a distinct ID in $\{1, \ldots, N\}$.
This lower bound, together with the above generic reduction, implies the $\Omega(\Delta)$ energy lower bound for $\broadcast$ in the deterministic model. Some minor modifications are needed. We let the IDs of the vertices in $\{v_1, \ldots, v_k\}$ be chosen from the range $\{1, \ldots, \Delta\}$, and let the IDs of $s$ and $t$ be $\Delta+1$ and $\Delta+2$. Since the parameter $\Delta$ is common knowledge, the IDs of $s$ and $t$ are known to all vertices initially. Hence each vertex in $\{v_1, \ldots, v_k\}$ can perfectly predict the future actions of the two vertices $s$ and $t$ solely according to the channel feedback.
\end{proof}

Theorem~\ref{thm:lb-via-leaderelection} complements
Theorem~\ref{thm:lb-path} by showing another
$\Omega(\log n)$ energy lower bound (by setting $f=1/\poly(n)$) in $\cd$, even when $D=O(1)$.
On graphs with unbounded degree,
Theorem~\ref{thm:lb-via-leaderelection} implies
$\Omega(\log^2 n)$ energy lower bounds in $\nocd$,
and $\Omega(n)$ lower bounds in \emph{deterministic} $\nocd$.



\section{Simulation of $\LOCAL$ Algorithms}\label{sect:localsim}

In this section, we show that with a preprocessing step, it is possible
to simulate any $\LOCAL$ algorithm in $\nocd$ (and therefore also in $\cd$) by scheduling all transmissions to avoid collisions.
There is a cost in both time and energy to run the simulation, which makes it most efficient when $\Delta$ is constant.

\begin{theorem}\label{thm:simlocal}
Any algorithm $\mathcal{A}$ for the $\LOCAL$ model taking time $T$ and energy $E$ 
can be simulated in randomized $\nocd$ using $O(\Delta^2 T + \Delta \log \Delta \log n)$ time and $O(\Delta (E + \log \Delta \log n))$ energy.
This result holds even when $\mathcal{A}$ is full-duplex, but simultaneous transmission and reception are not allowed in the simulation.
\end{theorem}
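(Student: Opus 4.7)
The plan is to simulate $\mathcal{A}$ by TDMA-scheduling transmissions according to a proper coloring of the square graph. Let $G^2$ denote the graph on $V$ in which two distinct vertices are adjacent iff they are at $G$-distance at most $2$; since $G^2$ has maximum degree at most $\Delta^2$, it admits a proper vertex coloring $\chi : V \to \{1,\ldots,c\Delta^2\}$ for a small constant $c$. Given $\chi$, I would simulate one round of $\mathcal{A}$ by $c\Delta^2$ consecutive sub-slots: in sub-slot $i$, a vertex $u$ with $\chi(u)=i$ transmits iff $\mathcal{A}$ has $u$ transmit in this round, and any vertex $v$ that wishes to listen listens in exactly the sub-slots $\{\chi(w) : w \in N(v)\}$. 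Because no two neighbors of $v$ share a color, at most one of $v$'s neighbors is active in any sub-slot, so $v$ hears precisely the messages it would hear in $\LOCAL$. Separating transmit and listen into distinct sub-slots also handles the full-duplex case, since $\chi(u) \ne \chi(w)$ whenever $w \in N(u)$, so no vertex ever has to transmit and listen in the same sub-slot.

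The accounting of the simulation step is immediate: each simulated $\LOCAL$ round costs $O(\Delta^2)$ time; each vertex pays $1$ energy per round of $\mathcal{A}$ in which it transmits and at most $\deg(u)\le \Delta$ energy per round in which it listens; so the simulation adds $O(\Delta^2 T)$ time and $O(\Delta E)$ energy beyond the cost of producing $\chi$. The remaining budget of $O(\Delta \log \Delta \log n)$ time and energy is reserved for that preprocessing.

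I would construct $\chi$ by a Luby-style iterative randomized coloring of $G^2$. In each iteration, every still-uncolored vertex samples a uniform tentative color from a palette of size $\Theta(\Delta^2)$, commits iff no vertex within $G$-distance $2$ chose the same tentative color, and otherwise retries in the next iteration. A constant-factor palette blow-up over the at most $\Delta^2$ competitors guarantees a constant per-vertex success probability per iteration, so $O(\log n)$ iterations color all vertices w.h.p. A single iteration needs every vertex to learn the tentative colors of all its 2-hop neighbors, which I would implement via the $\sr$ primitive of Section~\ref{sect:bb}: one invocation to exchange colors with 1-hop neighbors and one relay step to forward them one further hop, at $O(\Delta \log \Delta)$ time and $O(\Delta)$ energy per iteration per vertex.

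The principal obstacle is this preprocessing phase: verifying uniqueness of a tentative color within a 2-hop ball already demands reliable local communication in $\nocd$, yet no schedule has been built. Overcoming this is exactly the job of $\sr$, which provides energy-efficient local contention resolution in $\nocd$ without any pre-existing schedule; once $\sr$ is available as a black box, the rest of the theorem reduces to the elementary time/energy accounting above.
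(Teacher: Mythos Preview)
Your high-level plan coincides with the paper's: build a distance-$2$ proper coloring with $\Theta(\Delta^2)$ colors, then TDMA-simulate each $\LOCAL$ round in $\Theta(\Delta^2)$ sub-slots, listening only in the sub-slots of one's neighbors. Your accounting for the simulation phase ($O(\Delta^2 T)$ time, $O(\Delta E)$ energy, full-duplex handled for free) is correct and matches the paper exactly.

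The gap is in the preprocessing. The $\sr$ primitive of Section~\ref{sect:bb} guarantees only that each receiver $v\in R$ hears a message from \emph{some} sender in $N(v)\cap S$; it does not deliver \emph{all} neighbors' messages. To check that a tentative color is unique in the $2$-ball, a vertex must learn \emph{every} neighbor's current color (and, via relay, every $2$-hop neighbor's), so a single $\sr$ call does not suffice, and repeating $\sr$ $\Delta$ times does not obviously work either, since senders get no acknowledgment and cannot drop out. The paper therefore does not use $\sr$ here at all. Instead it runs a bare coupon-collector protocol: in each of $O(\Delta\log\Delta)$ slots every vertex transmits its current state $(\ID,c,L)$ with probability $1/\Delta$ and listens otherwise; after this many slots, with constant probability a vertex has heard from every neighbor (and, one more round of the same, from every $2$-hop neighbor via relayed lists $L(w)$). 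Iterating $O(\log n)$ times gives the coloring w.h.p. Note also that the per-iteration energy is $O(\Delta\log\Delta)$, not $O(\Delta)$ as you wrote, since a vertex is active in every one of those $O(\Delta\log\Delta)$ slots; this still fits the stated $O(\Delta\log\Delta\log n)$ budget, but your $O(\Delta)$ claim is not achievable before a schedule exists.
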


The main idea of the simulation is that if we were given a $k$-coloring of $G$ with the property that for every vertex $v$, the vertices in $N^{+}(v) = N(v) \cup \{v\}$ are all distinct colors (i.e., it is a $k$-coloring of $G+G^2$), then we could divide time up into blocks of length $k$, each block representing a single time step of the $\LOCAL$ model.
A vertex with color $j$ would transmit only in the $j$th time step of any block, and listen in only the slots corresponding to its neighbors' colors. Then the property of the coloring ensures that no two vertices that are within distance $2$ will ever transmit in the same time step, thus eliminating collisions altogether. This increases the complexity of the simulated algorithm by a factor $k$ for time and a factor $\Delta$ for energy.

In what follows, we show how to generate a coloring with $k=2\Delta^2$ in a distributed manner, using $O(\Delta \log \Delta \log n)$ time and energy in the $\nocd$ model.

\subsection{Distributed Coloring of $G+G^2$}

We assume that at the beginning, each vertex has an $O(\log n)$-bit distinct ID.
For the purpose of generating the coloring, the vertices will each need to know their own degree.

\paragraph{Algorithm: {\sf Learn-degree.}} For $C\Delta \log n $ time steps, independently in each time step, each vertex $v$ sends 
$\ID(v)$ with probability $1/\Delta$; otherwise it listens.

\begin{lemma}
By the end of {\sf Learn-degree}, with high probability all the vertices learn their degree and the IDs of all their neighbors.
\end{lemma}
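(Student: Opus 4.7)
\medskip
\noindent\textbf{Proof proposal.} The plan is to show that for each edge $(u,v)$, $v$ successfully receives $u$'s $\ID$ in at least one of the $C\Delta\log n$ rounds with probability $1-n^{-\Omega(C)}$, and then take a union bound over all edges. Once $v$ has received each neighbor's $\ID$, it trivially knows both its full neighbor list and its degree.

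First I would compute the single-round success probability. Fix a directed edge $(u,v)$ with $u\in N(v)$. For $v$ to receive $u$'s message in a given round, three independent events must occur: $v$ listens (probability $1-1/\Delta$), $u$ transmits (probability $1/\Delta$), and every other neighbor $w\in N(v)\setminus\{u\}$ stays silent (each with probability $1-1/\Delta$, independently). Since $|N(v)\setminus\{u\}|\le\Delta-1$, the joint probability is at least
\[
\frac{1}{\Delta}\paren{1-\frac{1}{\Delta}}^{\Delta}\ \ge\ \frac{1}{4\Delta},
\]
for $\Delta\ge 2$ (using $(1-1/\Delta)^\Delta\ge 1/4$). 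Denote this lower bound by $p\ge 1/(4\Delta)$.

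Next I would iterate over the $C\Delta\log n$ independent rounds. The probability that $v$ never hears from $u$ is at most
\[
(1-p)^{C\Delta\log n}\ \le\ \exp\paren{-\tfrac{C}{4}\log n}\ =\ n^{-C/4}.
\]
Taking a union bound over the at most $\binom{n}{2}<n^2$ ordered neighbor pairs $(u,v)$, the probability that some vertex fails to receive some neighbor's $\ID$ during the execution is at most $n^{2-C/4}$, which is polynomially small in $n$ for $C$ a sufficiently large constant. On the complementary high-probability event, every vertex $v$ has received $\ID(w)$ for each $w\in N(v)$, so it knows both $N(v)$ and $\deg(v)=|N(v)|$.

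Honestly there is no hard step here; the main thing to be careful about is the per-round calculation, specifically isolating the constant in $(1-1/\Delta)^{\Delta}$ so that $p=\Omega(1/\Delta)$ and choosing $C$ large enough that the union bound over edges still yields high probability. If the theorem is later needed with a stated success probability $1-n^{-c}$, one simply takes $C\ge 4(c+2)$.
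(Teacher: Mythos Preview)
Your proposal is correct and follows essentially the same approach as the paper: compute a $\Omega(1/\Delta)$ per-round success probability for each ordered neighbor pair via the same independence argument, iterate over $C\Delta\log n$ rounds, and finish with a union bound. The only cosmetic difference is that the paper bounds the number of pairs by $n\Delta$ rather than $n^2$, which does not affect the conclusion.
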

\begin{proof}
The proof is via a coupon collector tail bound.
Consider a vertex $v$.  Let $w$ be a neighbor of $v$. At any time step, since $v$ listens with probability
$1-1/\Delta$, each neighbor speaks independently with probability $1/\Delta$ and and exactly one neighbor must speak in order for $v$ to hear, the probability that $v$ learns the ID of $w$ is  $(1-1/\Delta)^{\deg(v)}/\Delta   \ge 1/4\Delta$.
Thus the probability that $v$ has not heard from $w$ after $T$ time steps is at most  $(1-1/4\Delta)^T \le e^{-\frac{T}{4\Delta} }$. Taking  union bounds over all neighbors of all vertices, the probability that there is a pair of neighbors $v$ and $w$ such that  $v$ has not heard from $w$ after $T$ steps is at most  $ n\Delta e^{-\frac{T}{4\Delta} }$. Since the algorithm runs for $C\Delta \log n$ time steps, as long as $C >4$, it succeeds with  high probability.
\end{proof}

Now we return to the problem of generating the desired coloring. We may assume that each vertex already knows its own degree, and that of each of its immediate neighbors. The {\sf Two-Hop-Coloring} algorithm runs for $C \log n$ iterations, where each iteration is described below.

\paragraph{Algorithm: {\sf Two-Hop-Coloring}, Single Iteration.} Each vertex $v$ does the following steps in parallel.
\begin{enumerate}
\item If the color of $v$, $c(v)$, was fixed in a previous iteration, it remains unchanged.
Otherwise, randomly sample a new \emph{proposed color} $c(v) \in [2\Delta^2]$.

\item Vertex $v$ maintains a vector, $L(v)$, of the most recently announced color for each of his neighbors in $G$.  Initially,
the entries in this list are all ``undefined.''  During the protocol, $v$ will announce $L(v)$, together with the label $\ID(v)$, 
to her neighbors, at random times chosen at rate $1/\Delta$.  Vertex $v$ will also maintain for her own records, 
a copy, for each neighbor $w$, of the most recently heard version of $L(w)$.

\item \label{step:xxxx} For $C\Delta \log \Delta$ time steps, independently in each time step, 
\begin{itemize}
\item[$\bullet$] with probability $1/\Delta$, $v$ sends $(\ID(v), c(v), L(v))$;
\item[$\bullet$] otherwise, $v$ listens. If $v$ hears $(\ID(w), c(w), L(w))$ from a neighbor $w$, she uses $c(w)$ to update $L(v)$ and updates her local record of $L(w)$ to match the message. 
\end{itemize}
\item \label{step:xxxxx} Suppose $v$ has yet to permanently fix $c(v)$. 
The current candidate will be rejected if either of the following conditions hold: 
\begin{enumerate}
\item[(i)] some entry of $L(v)$ equals $c(v)$ or is undefined, or
\item[(ii)] for some neighbor, $w$, of $v$, some entry of $L(w)$ is undefined, or at least two entries of $L(w)$ equal $c(v)$.
\end{enumerate}
Otherwise, $v$ permanently colors itself $c(v)$, confident that no other vertex within distance two of $v$ in $G$ has chosen the same color.
\end{enumerate}

\begin{lemma}\label{lem:two-hop-sr}
A single iteration of {\sf Two-Hop-Coloring} results in vertex $v$ having fixed its color with constant probability.
\end{lemma}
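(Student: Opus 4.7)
The plan is to decompose the success event into three sub-events and bound each independently. Let $A$ be the event that, at the end of the $C\Delta \log \Delta$ slots, $v$'s local copy of $L(v)$ is up to date -- i.e., for every neighbor $u\in N(v)$, $v$ has heard the current-iteration proposed color $c(u)$. Let $B$ be the event that for every neighbor $w$ of $v$, the most recent version of $L(w)$ that $v$ has received from $w$ is itself complete and reflects current-iteration colors. Let $C$ be the event that the uniform sample $c(v)\in[2\Delta^2]$ differs from every proposed or fixed color held by any vertex within distance two of $v$ in $G$. If $A\wedge B\wedge C$ holds, then condition~(i) of step~4 fails ($L(v)$ is complete and contains no copy of $c(v)$), and condition~(ii) fails for every neighbor $w$ ($L(w)$ is complete and contains $c(v)$ at most once, in the entry recording $v$), so $v$ permanently fixes its color. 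Since $A$ and $B$ depend only on the transmit/listen choices while $C$ depends only on the color samples, these two sources of randomness are independent, and I can analyze the two kinds of events separately.

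For $C$, there are at most $\Delta+\Delta(\Delta-1)\le \Delta^2$ vertices within distance two of $v$, and each carries a color in $[2\Delta^2]$ (either a previously fixed value, or the fresh uniform sample if still uncolored). A union bound gives $\Prob[\neg C]\le \Delta^2/(2\Delta^2)=1/2$.

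For $A$ and $B$, I would split the $C\Delta\log\Delta$ slots into two halves and apply the coupon-collector calculation already used in the {\sf Learn-degree} analysis. In any one slot, a fixed vertex $u$ listens and exactly one fixed neighbor $u'$ transmits with probability $(1-1/\Delta)^{\deg(u)}/\Delta\ge 1/(4\Delta)$, so the probability that $u$ fails to hear $u'$ throughout one half is at most $(1-1/(4\Delta))^{C\Delta\log\Delta/2}=\Delta^{-\Omega(C)}$. Union-bounding over the at most $\Delta^2$ ordered (listener, transmitter) pairs inside the $2$-hop neighborhood of $v$ shows that with probability $1-\Delta^{-\Omega(C)}$ every $L(w)$ (for $w\in N(v)$, and $L(v)$ itself) holds the current-iteration color of every relevant vertex by the end of phase~1. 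A second union bound over $v$'s at most $\Delta$ neighbors shows that, during phase~2, $v$ hears at least one fresh message from each neighbor except with probability $\Delta^{-\Omega(C)}$. Any such phase-2 message from $w$ carries the already-complete $L(w)$, so $A\wedge B$ holds with probability $1-\Delta^{-\Omega(C)}$.

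The main subtlety is event $B$: $v$ only sees snapshots of $L(w)$ taken at the instants $w$ transmits, so we must force $L(w)$ to become complete \emph{before} the last message $v$ receives from $w$. The two-phase split is the cleanest way to guarantee that ordering; any analysis that does not separate the ``make $L(w)$ complete'' stage from the ``$v$ hears $w$'' stage seems to require a more delicate coupling. Choosing $C$ to be a sufficiently large absolute constant, $\Prob[A\wedge B\wedge C]\ge \Prob[C]-\Prob[\neg A]-\Prob[\neg B]\ge 1/2-o(1)$, which is a positive constant independent of $n$ and $\Delta$, proving the lemma.
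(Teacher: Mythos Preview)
Your proposal is correct and follows essentially the same approach as the paper: a two-phase split of the $C\Delta\log\Delta$ slots (phase~1 to make every $L(w)$ with $w\in N^+(v)$ complete, phase~2 to have $v$ hear each neighbor at least once more), a coupon-collector bound with union bounds over the relevant pairs, and finally the $|N^2(v)|<\Delta^2$ vs.\ $2\Delta^2$-color comparison to get the $\ge 1/2$ probability of no collision. Your write-up is in fact slightly more careful than the paper's in explicitly isolating the snapshot subtlety for event~$B$; the only cosmetic issue is that you reuse the symbol $C$ for both the constant in $C\Delta\log\Delta$ and the color-collision event, which you should rename to avoid confusion.
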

\begin{proof}
Once more we prove this via a coupon collector tail bound. Fix vertex $v$ and its neighbor $w \in N(v)$. As before, the probability that $v$ has not heard from $w$ in $T$ time steps (of Step~\ref{step:xxxx}) is less than $e^{-\frac{T}{4\Delta} }$.  Upon hearing from a neighbor, $v$ learns that neighbor's proposed color. Taking a union bound over the neighbors of $v$, the probability that in $T$ time steps $v$ has not learned the colors of all of its neighbors is at most
$\Delta e^{-\frac{T}{4\Delta} }$. Now, taking another union bound over $N^+(v)$, the probability that after $T$ steps, there is some vertex in $N^+(v)$ who has not learned the colors of all their neighbors is at most $\Delta (\Delta+1) e^{-\frac{T}{4\Delta}}$. Thus, within  $O(\Delta \log\Delta)$ steps, with probability at least $1/2$, \emph{everyone in} $N^+(v)$ has learned all the colors in their neighborhood. Another $O(\Delta \log\Delta)$ steps allows $v$ to hear from each of its neighbors one more time 
(with constant probability), thereby learning all the colors in its distance-2 neighborhood.

Assume that $v$ has learned all of $\{c(w) \;|\; w \in N^2(v)\}$. Observe that $|N^2(v)| < \Delta^2$, and  $c(v)$ was sampled uniformly at random from $2\Delta^2$ colors. Thus,  $c(v)$ is different from all of  $\{c(w) \;|\; w \in N^2(v)\}$ with probability at least $1/2$. In this event, $v$ permanently fixes its color $c(v)$ in Step~\ref{step:xxxxx}.
\end{proof}

\begin{lemma}
Algorithm {\sf Two-Hop-Coloring} produces a proper coloring of $G+G^2$ with high probability.
\end{lemma}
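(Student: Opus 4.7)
The plan is to split the statement into two parts: (a) correctness, namely that whenever a vertex $v$ permanently fixes a color $c(v)$ in Step~\ref{step:xxxxx} the color differs from $c(u)$ for every $u$ within distance~2 of $v$ in $G$; and (b) termination, namely that after $C\log n$ iterations every vertex has fixed its color with high probability.

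For (a) I would examine the two rejection conditions case by case. Condition (i) directly rules out collisions with immediate neighbors: if $v$ adopts $c(v)$, every entry of $L(v)$ is defined and unequal to $c(v)$, so $c(v)\ne c(u)$ for every $u\in N(v)$. For a strict distance-2 neighbor $u\in N^2(v)\setminus N(v)$, I would fix a common neighbor $w\in N(v)\cap N(u)$; since $v$ has announced its candidate $c(v)$ to $w$, the entry $L(w)[v]$ already equals $c(v)$, and condition (ii) then forces every other entry of $L(w)$ (in particular $L(w)[u]$) to differ from $c(v)$, which gives $c(u)\ne c(v)$. The delicate case is two vertices $u,v$ within distance~2 that attempt to fix in the \emph{same} iteration; here I would appeal to the coupon-collector bound inside the proof of Lemma~\ref{lem:two-hop-sr}, which shows that Step~\ref{step:xxxx} runs long enough that with constant probability every $L$-vector inspected in Step~\ref{step:xxxxx} reflects the current iteration's candidate colors. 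Under that event, if $c(u)=c(v)$ then either $L(v)[u]=c(v)$ triggers condition (i) at $v$ (when $u\in N(v)$), or a common neighbor $w$ has two entries of $L(w)$ equal to $c(v)$, triggering condition (ii) at both $u$ and $v$; in either case both candidates are rejected, so no colliding pair can be simultaneously committed.

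For (b) I would iterate Lemma~\ref{lem:two-hop-sr}. Each unfixed vertex $v$ draws a fresh uniform candidate $c(v)\in[2\Delta^2]$ at the start of each iteration, and the constant-probability communication-success and constant-probability color-avoidance events in the proof of Lemma~\ref{lem:two-hop-sr} depend only on the randomness used in that iteration. Consequently, conditioned on any history in which $v$ is still unfixed entering iteration $i$, the probability that $v$ fixes its color in iteration $i$ is at least some absolute constant $p>0$. Iterating the bound, the probability that $v$ is still unfixed after $C\log n$ iterations is at most $(1-p)^{C\log n}\le n^{-pC}$; choosing $C$ so that $pC\ge 2$ and taking a union bound over the $n$ vertices gives overall failure probability at most $1/n$, which is the desired high-probability conclusion.

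The main obstacle I expect is the simultaneous-fix case of (a): I must verify that the $L$-vectors read in Step~\ref{step:xxxxx} carry current-iteration colors rather than stale values from earlier iterations or from before a neighbor resampled. The randomized communication schedule of Step~\ref{step:xxxx} was designed precisely to guarantee this within a single iteration, so I expect the argument to reduce to reusing the same coupon-collector estimate already established in the proof of Lemma~\ref{lem:two-hop-sr}, at which point the rest of the proof is a routine rule-based verification combined with the union bound described above.
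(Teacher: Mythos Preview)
Your plan coincides with the paper's proof. For termination the paper does exactly your part (b): it reads Lemma~\ref{lem:two-hop-sr} as saying the iteration in which $v$ fixes is geometric with constant mean, so after $\Omega(\log n)$ iterations every vertex has fixed w.h.p.\ by a union bound. For correctness the paper is far terser than you are---it simply asserts that the resulting coloring is ``straightforward to see'' proper ``in view of Step~\ref{step:xxxxx}''---so your case analysis of conditions (i) and (ii) and your attention to the simultaneous-fix case already go beyond what the paper writes out.

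One caution on the obstacle you flag at the end: the coupon-collector estimate inside the proof of Lemma~\ref{lem:two-hop-sr} only guarantees that the $L$-vectors are fresh with \emph{constant} probability per iteration (the $C\Delta\log\Delta$ rounds of Step~\ref{step:xxxx} give per-pair staleness probability $\Delta^{-\Theta(C)}$, not $n^{-\Theta(1)}$), so reusing it verbatim does not by itself yield a $1/\poly(n)$ bound on improper commitments across all $O(n\Delta^2)$ pairs and $O(\log n)$ iterations. The paper's proof does not address this point at all; if you want a clean w.h.p.\ correctness argument you will need an extra ingredient---e.g., combining staleness with the independent $1/(2\Delta^2)$ probability of a color collision, or observing that an already-fixed neighbor keeps broadcasting the same color so its $L$-entry can only be stale if it has \emph{never} been heard (which condition (i) already rejects as ``undefined'').
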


\begin{proof}
Lemma~\ref{lem:two-hop-sr} indicates that the number of iterations until $v$ fixes its color is a geometric random variable with constant mean.  Since we run for $\Omega(\log n)$ iterations, it follows that $v$ has fixed its color with
high probability. Moreover, a union bound shows that in $\Omega(\log n)$ iterations (with a bigger constant) \emph{all} vertices in $G$ have fixed their color with high probability. Thus, with high probability, a coloring has been generated. It is straightforward to see that the resulting coloring is a proper coloring of $G+G^2$ (in view of Step~\ref{step:xxxxx}).
\end{proof}

\section{Basic Building Blocks}\label{sect:bb}

Let $S$ and $R$ be two disjoint vertex sets.
In the $\sr$ task,
each vertex $u \in S$ attempts to transmit a message $m_u$,
and each vertex in $R$ attempts to receive one message.
An $\sr$ algorithm guarantees that for
every $v \in R$ with $N(v) \cap S \neq \emptyset$,
with probability $1-f$, $v$ receives a message
$m_u$ from \emph{some} vertex $u \in N(v) \cap S$.

\begin{lemma}\label{lemma:1}
In the randomized $\nocd$ model, $\sr$ can be solved with high probability, i.e., $f = 1/\poly(n)$, in time $O(\log \Delta \log n)$ and energy $O(\log \Delta \log n)$.
\end{lemma}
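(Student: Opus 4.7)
\medskip
\noindent\textbf{Proof proposal.} The plan is to use a standard ``decay''-style protocol in the spirit of Bar-Yehuda et al., nested inside $\Theta(\log n)$ independent repetitions. In a single \emph{phase}, for each $i = 0, 1, \ldots, \lceil \log \Delta \rceil$, each $u \in S$ transmits its message $m_u$ with probability $2^{-i}$ (independently across vertices and time slots), while every $v \in R$ listens in all $\lceil \log \Delta \rceil + 1$ slots. We repeat this phase $c \log n$ times for a sufficiently large constant $c$, and a receiver halts once it has successfully received some message.

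For correctness, fix any $v \in R$ with $k := |N(v) \cap S| \ge 1$. Let $i^\star \in \{0,1,\ldots,\lceil \log \Delta \rceil\}$ be the unique index with $2^{-i^\star} \in [1/(2k), 1/k]$; such an index exists because $k \le \Delta$. In the slot corresponding to $i^\star$, the probability that exactly one vertex in $N(v) \cap S$ transmits is
\[
k \cdot 2^{-i^\star} \cdot (1 - 2^{-i^\star})^{k-1} \;\ge\; \tfrac{1}{2}\bigl(1-\tfrac{1}{k}\bigr)^{k-1} \;\ge\; \tfrac{1}{2e},
\]
a positive constant. Hence in any single phase, $v$ receives a message with probability at least $1/(2e)$. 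Over $c \log n$ independent phases, the failure probability is at most $(1 - 1/(2e))^{c \log n} = n^{-\Omega(c)}$, which is $1/\poly(n)$ for $c$ large. A union bound over at most $n$ receivers yields the high-probability guarantee.

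For the complexity bounds, time is immediate: the protocol runs for $O(\log \Delta \log n)$ slots. The energy cost for a receiver is also $O(\log \Delta \log n)$ since it listens in every slot (and can stop early once it succeeds). For a sender $u \in S$, the expected number of transmissions in one phase is $\sum_{i=0}^{\lceil \log \Delta \rceil} 2^{-i} = O(1)$, so the expected total across phases is $O(\log n)$, and a Chernoff bound gives $O(\log n)$ transmissions w.h.p. Thus every vertex spends $O(\log \Delta \log n)$ energy w.h.p.

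The main technical point is the constant-probability guarantee that \emph{exactly one} neighbor in $S$ transmits in the slot tuned to the unknown value $k$; this is the standard decay calculation above. A minor subtlety is that the protocol must be oblivious to $k$ at each receiver, which is precisely why we cycle through all transmission rates $2^{-i}$ rather than guess a single one. Everything else (union bounds, Chernoff for the sender energy) is routine.
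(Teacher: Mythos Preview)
Your proposal is correct and is precisely the approach the paper intends: the paper's entire proof is a one-line citation to the $O(\log\Delta\log(1/f))$ \emph{decay} protocol of Bar-Yehuda et al., and you have simply spelled out that algorithm and its standard analysis. One tiny quibble: the index $i^\star$ with $2^{-i^\star}\in[1/(2k),1/k]$ need not be unique (e.g.\ $k=1$), but existence is all you use.
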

\begin{proof}
Use the $O(\log \Delta \log 1/f)$-time algorithm of~\cite{bar1992time}, which is also known as {\em decay}.
\end{proof}

\paragraph{A Generic Transformation.}
Suppose that there is an algorithm $\mathcal{A}$ which elects a leader in time $T(n',f)$ with probability $1-f$ in a single-hop network, where the number of vertices is unknown, but is guaranteed to be at most $n'$. We assume that the algorithm $\mathcal{A}$ and the underlying single-hop network have the following properties.
\begin{itemize}
\item We allow the vertices in the single-hop network to simultaneously send and listen. Since we do not measure the energy of $\mathcal{A}$, we can assume that all vertices (including the ones that are transmitting) are always listening to the channel. Thus, a leader is elected once a message is successfully sent.
\item We assume that algorithm $\mathcal{A}$ is {\em uniform} in the following sense. For each time slot $t$, there is a number $k_t \in \{0, 1, \ldots, \lceil \log n \rceil\}$ such that each vertex transmits with the same probability $p=2^{-k_t}$ independently at the time slot $t$. The number $k_t$ depends only on the history of the algorithm execution
before time $t$. Since all vertices are always listening to the channel, they have the same information.
\end{itemize}

We claim that there is a randomized algorithm $\mathcal{A'}$ that solves $\sr$ in time $T(\Delta,f) \cdot \lceil \log \Delta \rceil$ with energy cost $2 \cdot T(\Delta,f)$, but in a multi-hop network, where vertices cannot simultaneously
send and listen.
The algorithm $\mathcal{A'}$ consists of $T(\Delta,f)$ epochs, each of which consists of $\lceil \log \Delta \rceil$ time slots. Each vertex $v \in S$ transmits at the $i$th time slot with probability $2^{-i}$ in such a way that the total number of transmissions of $v$ during an epoch is at most $2$ (since $1 + \frac12 + \frac14 + \frac18 + \ldots = 2$). Each vertex $u \in R$ simulates the algorithm $\mathcal{A}$ as follows. During the $i$th epoch, $u$ only listens at the $k_i$th slot; and $u$ calculates $k_{i+1}$ based on the information it receives from the channel so far. By the last epoch, each vertex $u \in R$ receives a message with probability $1-f$ (since $|N(u) \cap S|\leq \Delta$). Based on this generic transformation, we obtain Lemma~\ref{lemma:1-mod}.

\begin{lemma}\label{lemma:1-mod}
In the randomized $\cd$ model, $\sr$ can be solved with energy $O(\log \log \Delta + \log 1/f)$ and runtime $O(\log \Delta (\log \log \Delta + \log 1/f))$. For the special case where each $v \in S$ is adjacent to at most one vertex in $R$, the energy cost is $O(\log\log \Delta) + X$, where $X$ is a random variable drawn from an exponential distribution $\operatorname{Exponential}(\lambda)$, for some $\lambda = O(1)$.
\end{lemma}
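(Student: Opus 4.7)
The plan is to apply the generic transformation described in the preceding paragraphs to an off-the-shelf $\cd$ leader election algorithm. A Willard-style $\cd$ leader election in a single-hop network of unknown size $n' \le n$ elects a leader in time $T(n',f) = O(\log\log n' + \log(1/f))$ with success probability $1-f$, and it satisfies the required uniformity property: in each round $t$, every active vertex transmits with a common probability $2^{-k_t}$, where $k_t$ is determined by the channel history (which is common knowledge in the full-duplex variant). Substituting $T(\Delta,f)$ into the bounds of the transformation immediately yields energy $O(\log\log\Delta + \log(1/f))$ per vertex and runtime $O(\log\Delta \cdot (\log\log\Delta + \log(1/f)))$, giving the first half of the lemma.

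For the special case where each $v \in S$ has at most one neighbor in $R$, I would exploit that the simulated leader election instances at different receivers are non-interfering: no sender appears in two receivers' contention sets, so each receiver can run and terminate its own copy independently. A Willard-type $\cd$ protocol naturally decomposes into (i) an $O(\log\log\Delta)$-round refinement phase that drives the active population down to a constant-sized group, and (ii) a finalization phase in which each round independently produces a singleton transmission with some constant probability $p = \Omega(1)$. Rather than running for a fixed $T(\Delta,f)$ epochs, I would have each receiver halt its instance as soon as it hears a singleton, so the number of post-refinement epochs is a geometric random variable stochastically dominated by $X \sim \operatorname{Exponential}(\lambda)$ for some constant $\lambda > 0$. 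Summing the two phases, and recalling that under the generic transformation a sender's expected energy per epoch is $O(1)$, the total per-vertex energy is $O(\log\log\Delta) + X$.

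The main point to nail down is how senders learn to stop transmitting once their unique receiver has succeeded. In $\cd$ this is cheap to arrange: append an $O(1)$-cost acknowledgment slot to each epoch (the receiver transmits a short ack, its $\le \Delta$ senders listen), or alternatively have each sender listen in the one relevant slot of each epoch and self-terminate upon detecting a singleton round. Either way the added $O(1)$-per-epoch overhead is absorbed into the geometric tail of the finalization phase. The main obstacle I anticipate is purely bookkeeping: verifying that the particular $\cd$ leader election invoked really obeys the uniformity hypothesis required by the generic transformation, and that the sender/receiver termination-synchronization step adds only constant overhead per epoch so as not to disturb the asymptotic energy bound.
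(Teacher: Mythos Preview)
Your approach is essentially the paper's: apply the generic transformation to a uniform $\cd$ leader-election algorithm with runtime $O(\log\log n') + X$ (the paper cites Nakano--Olariu), and for the special case add a per-epoch acknowledgment slot so that senders terminate once their receiver succeeds. Your ack-slot idea is exactly the paper's second modification.

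One small but necessary detail you omit: the hypothesis is that each $v \in S$ has \emph{at most} one $R$-neighbor, not exactly one. A sender with zero $R$-neighbors will never receive an ack and would keep spending energy indefinitely, breaking the $O(\log\log\Delta)+X$ bound. The paper handles this with an extra initial round in which every $R$-vertex transmits and every $S$-vertex listens; any $S$-vertex hearing silence terminates immediately at $O(1)$ cost. Add that step (or an equivalent) and your argument goes through. Your alternative suggestion of having each sender ``listen in the one relevant slot'' is shakier, since the relevant slot index $k_i$ depends on the receiver's private channel history, which the sender does not automatically share; the explicit ack slot is the clean fix.
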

\begin{proof}
Apply the above generic transformation to the $O(\log\log n' + \log 1/f)$-time uniform leader election algorithm of~\cite{nakano2002}.
The runtime of this algorithm is $O(\log\log n') + X$, where $X$ is a random variable drawn from an exponential distribution $\operatorname{Exponential}(\lambda)$, for some $\lambda = O(1)$. Thus, to have failure probability at most $f$, it needs $O(\log\log n' + \log 1/f)$ time.
For the special case where each $v \in S$ is adjacent to at most one vertex in $R$, consider the following modifications.
\begin{itemize}
 \item In the first round, all vertices in $R$ speak, and all vertices in $S$ listen. This allows each vertex in $S$ to check whether it is adjacent to a vertex in $R$. Those vertices in $S$ that are not adjacent to a vertex in $R$ terminates after the first round.
 \item We allocate an additional time slot at the end of each epoch (in the generic transformation) to let each vertex $v$ in $R$ inform all its neighbors in $S$ whether $v$ has received a message. If $v$ has received a message, then all vertices in $S \cap N(v)$ can terminate.\qedhere
\end{itemize}
\end{proof}

\begin{remark}\label{rem:cd}
In  Lemma~\ref{lemma:1-mod}, if a vertex $v$ satisfies either (i) $v \in S$ and $N(v) \cap R = \emptyset$, or (ii) $v \in R$ and $N(v) \cap S = \emptyset$, then the energy cost of $v$ can be lowered to $O(1)$ in the $\cd$ model. Due to the ability of a vertex to distinguish between noise and silence, in $O(1)$ time, each $v \in S$ can check whether $N(v) \cap R = \emptyset$, and similarly each $v \in R$ can check whether $N(v) \cap S = \emptyset$ in $O(1)$ time. We will make use of this observation to reduce the energy cost of algorithms in the $\cd$ model.
\end{remark}

\section{Basic Energy-Efficient Randomized Algorithms}\label{section:rand}

In this section we design energy-efficient algorithms for the $\broadcast$ problem in multi-hop networks.
In particular, we show that $\broadcast$ can be solved using $O(\log^3 n)$ energy in randomized $\nocd$.

\paragraph{Layers of Vertices.} A labeling $\mathcal{L}: V(G) \mapsto \{0, \ldots, n-1\}$ is said to be {\em good} if it has the following property. Each vertex $v$ with $\mathcal{L}(v) > 0$ must have a neighbor $u$ such that $\mathcal{L}(u)=\mathcal{L}(v)-1$. With respect to a good labeling $\mathcal{L}$, a vertex $v$ is called a {\em layer-$i$} vertex if $\mathcal{L}(v) = i$. The intuition underlying the definition of a good labeling is that it represents a clustering of vertices. If we let each layer-$i$ vertex select a layer-$(i-1)$ neighbor as its parent, then we obtain a partition of $V(G)$ into
$|\mathcal{L}^{-1}(0)|$ clusters.
Each cluster $C$ is a rooted tree $T$, where the root $r$ is the unique layer-0 vertex in the cluster $C$. However, it is possible that a vertex has multiple choices of its parent, so the clustering resulting from a good labeling is, in general, not unique.

We say that two layer-0 vertices $u$ and $v$ are {\em $\mathcal{L}$-adjacent} if there exists a path $P = (u, u_1, \ldots, u_a$, $v_b, \ldots, v_1, v)$ such that $\mathcal{L}(u_i)=i$ for all $i \in [a]$ and $\mathcal{L}(v_j)=j$ for all $j \in [b]$. The graph $G_{\mathcal{L}}$ is on vertex set $\mathcal{L}^{-1}(0)$ and
edge set $\{\{u,v\} \;|\; \text{$u$ and $v$ are $\mathcal{L}$-adjacent}\}$.

In the following lemma we show that  $\broadcast$ can be solved
energy-efficiently if we already have a good labeling $\mathcal{L}^\star$
with small number of layer-0 vertices.

\begin{lemma}\label{lem:reduce}
Let $\mathcal{L}^\star$ be a good labeling of $G$.  Each vertex knows its $\mathcal{L}^\star$-label
and two integers $d,L \geq 1$ such that
(i) $d$ is an upper bound on the diameter of $G_{\mathcal{L}^\star}$,
and
(ii) $L$ is an upper bound of the number of layers.
Then $\broadcast$ can be solved by a randomized algorithm
with high probability in time $T(n,d,L)$ using energy $E(n,d,L)$.
\begin{align*}
\LOCAL:& & & T(n,d,L) = O(L d) & &E(n,d,L) = O(d)\\
\cd:& & & T(n,d,L) = O(L d \log n \log \Delta) & &E(n,d,L) = O(d + \log n)\\
\nocd:& & & T(n,d,L) = O(L d  \log n \log \Delta) & &E(n,d,L) = O(d \log n \log \Delta)
\end{align*}
\end{lemma}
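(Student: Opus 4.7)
The plan is to exploit the layered structure of $\mathcal{L}^\star$ to pipeline the broadcast in $O(d)$ super-rounds of length $O(L)$ each, so that each vertex spends only $O(1)$ energy per super-round and hence $O(d)$ in total. First, if the source $s$ does not already lie in layer $0$, I push the message up $s$'s parent chain (a chain of neighbors at strictly decreasing layers, guaranteed to exist because $\mathcal{L}^\star$ is good) to the root of $s$'s cluster in $O(L)$ time and $O(1)$ energy per vertex on the chain, and without loss of generality assume afterwards that the initial informed vertex is a layer-$0$ root.

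In the $\LOCAL$ model, each super-round consists of an \emph{upward wave} of length $L$ followed by a \emph{downward wave} of length $L$. In the upward wave, at time $i\in\{1,\ldots,L\}$ every informed layer-$(i-1)$ vertex transmits and every uninformed layer-$i$ vertex listens; the downward wave is symmetric. Because $\LOCAL$ has no collisions, any uninformed listener whose informed neighbor transmits in the matching slot learns the message. Each vertex participates in at most a constant number of slots per super-round (one listen and one transmit in each wave), giving $O(1)$ energy per super-round. Correctness is by induction on the BFS distance $k$ in $G_{\mathcal{L}^\star}$: after $k$ super-rounds, every root within distance $k$ of the source's root is informed along with every vertex in its cluster. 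In the inductive step, an $\mathcal{L}$-adjacent path $u,u_1,\ldots,u_a,v_b,\ldots,v_0$ is traversed within a single super-round by combining the upward wave (carrying the message along the $u$-arm), the jump edge (firing when the two waves align at the appropriate slot), and the downward wave (carrying it down the $v$-arm). After $d$ super-rounds every root is informed, and the final downward wave informs all remaining cluster members.

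To obtain the $\cd$ and $\nocd$ bounds, each scheduled single-slot transmission in the $\LOCAL$ algorithm is replaced by an $\sr$ invocation (Lemma~\ref{lemma:1} for $\nocd$, Lemma~\ref{lemma:1-mod} for $\cd$), with the senders being the informed vertices of the transmitting layer and the receivers the uninformed vertices of the listening layer; the failure probability is set to $1/\poly(n)$ so a union bound over the $O(Ld)$ simulated rounds still gives high probability overall. In $\nocd$ each call costs $\Theta(\log\Delta\log n)$ in both time and energy, multiplying through to $O(Ld\log\Delta\log n)$ time and $O(d\log\Delta\log n)$ energy. In $\cd$ each call takes $O(\log\Delta(\log\log\Delta+\log 1/f))$ time; the per-vertex per-call energy is $O(\log\log\Delta)+X$ with $X\sim\operatorname{Exponential}(O(1))$, and by Remark~\ref{rem:cd} vertices with no relevant neighbors pay only $O(1)$ per call. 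Summing over the $O(d)$ calls in which a given vertex actively participates and applying concentration for a sum of exponentials yields the stated $O(d+\log n)$ energy bound.

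The main delicate point I expect will be the correctness of the pure $\pm 1$ waves when the jump edge $(u_a,v_b)$ of an $\mathcal{L}$-adjacent path has $a=b$ or $|a-b|\geq 2$, since neither wave directly crosses such an edge in one super-round. The fix I plan to use is to augment each super-round with a constant-length same-layer handshake — for instance, a pair of slots per layer in which a random (or ID-based) tie-breaker splits each layer's vertices into transmitters and listeners — which preserves the $O(L)$ time and $O(1)$ energy per super-round. The random choices ensure that each $\mathcal{L}$-adjacent edge is traversed with constant probability per super-round, and standard amplification over the $d$ super-rounds, combined with the union bound already budgeted above, makes the BFS analysis go through with high probability inside the claimed time and energy envelopes.
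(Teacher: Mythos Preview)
Your overall architecture matches the paper's: first push the message to a layer-$0$ vertex, then iterate $d$ ``super-rounds,'' each consisting of a sweep down the layers and a sweep back up, with one $\sr$ call per layer per sweep. The paper calls these {\sf Down-cast} and {\sf Up-cast}, and the energy accounting is the same as yours: a vertex at layer $\ell$ is in $S$ or $R$ in only $O(1)$ of the $O(L)$ calls per super-round.

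The genuine gap is the treatment of the jump edge $\{u_a,v_b\}$ when $|a-b|\ge 2$. You correctly identify this as the delicate point, but your proposed fix---a per-layer ``same-layer handshake'' where layer-$\ell$ vertices talk to layer-$\ell$ vertices---only addresses the case $a=b$. When $|a-b|\ge 2$, the edge $\{u_a,v_b\}$ is never crossed by either the $\pm 1$ waves or by a same-layer exchange, and since the crossing probability per super-round is $0$ (not merely small), amplification over $d$ super-rounds cannot help. Consequently the induction on BFS distance in $G_{\mathcal{L}^\star}$ fails: two $\mathcal{L}$-adjacent roots whose only witnessing path has such a jump are never connected.

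The paper's fix is a single additional step, {\sf All-cast}, inserted between {\sf Down-cast} and {\sf Up-cast} in each super-round: one $\sr$ call with $S$ equal to \emph{all} informed vertices and $R$ equal to \emph{all} uninformed vertices, regardless of layer. After {\sf Down-cast} the vertex $u_a$ is informed; the {\sf All-cast} then delivers the message across $\{u_a,v_b\}$ for arbitrary $a,b$; {\sf Up-cast} carries it down to the root $v$. This costs one extra $\sr$ per super-round, preserving all of your time and energy bounds. For the $\cd$ energy bound $O(d+\log n)$, note also that the ``$O(\log\log\Delta)+X$'' cost from Lemma~\ref{lemma:1-mod} applies only in the special case where each sender has at most one receiving neighbor, which is not the situation here; the paper instead argues via Remark~\ref{rem:cd} that each vertex is \emph{relevant} (has a nontrivial $S$- or $R$-neighborhood) in only $O(1)$ of its $O(d)$ calls, each of which costs $O(\log n)$, and the rest cost $O(1)$.
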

\begin{proof}
Let $v$ be the vertex that attempts to broadcast some message $m$.
The goal of the $\broadcast$ problem is to relay the message $m$ to all vertices in the graph.
This can be solved by first (1) do {\sf Up-cast} to relay the message from $v$ to some layer-0 vertex; (2) repeat ({\sf Down-cast}, {\sf All-cast}, {\sf Up-cast}) for $d$ times to let all layer-0 vertices receive the message $m$; and then (3) do {\sf Down-cast} to ensure that all vertices in the graph have the message $m$.
\begin{itemize}
\item {\sf Down-cast.} For $i = 0, \ldots, L-2$, do $\sr$ with $S$ being the set of layer-$i$ vertices that hold the message $m$, and $R$ being the set of all layer-$(i+1)$ vertices that have not received the message $m$. Each vertex in $S$ attempts to broadcast the message $m$.
\item {\sf All-cast.} Do $\sr$ with $S$ being the set of all vertices that hold the message $m$, and $R$ being the set of all vertices that have not received the message $m$. Each vertex in $S$ attempts to broadcast the message $m$.
\item {\sf Up-cast.} For $i = L-1, \ldots, 1$, do $\sr$ with $S$ being the set of layer-$i$ vertices that hold the message $m$, and $R$ being the set of all layer-$(i-1)$ vertices that have not received the message $m$. Each vertex in $S$ attempts to broadcast the message $m$.
\end{itemize}

We use $\sr$ with $f = 1/\poly(n)$.
Thus, the $\broadcast$ problem can be solved in $O(L d) \cdot T'(n,\Delta)$ time and $O(d) \cdot E'(n,\Delta)$ energy, where $T'(n,\Delta)$ and $E'(n,\Delta)$ are the runtime and the energy cost of $\sr$; see Lemmas~\ref{lemma:1} and~\ref{lemma:1-mod}.
By the observations made in Remark~\ref{rem:cd},
the energy cost can be further reduced to $O(d + E'(n,\Delta)) = O(d + \log n)$.
In the above algorithm, each vertex $v$ is involved in $O(d)$ invocations
of $\sr$, and all but $O(1)$ of them satisfy either (i) $v \in S$ and
$N(v) \cap R = \emptyset$, or (ii) $v \in R$ and $N(v) \cap S = \emptyset$.
\end{proof}

In what follows, we show that a good labeling $\mathcal{L}^\star$ with small number of layer-0 vertices can be computed efficiently.
Our strategy is to begin with the trivial all-0 good labeling, and then in each iteration use the current good labeling 
$\mathcal{L}$ to obtain a new good labeling $\mathcal{L}'$ such that 
(i) each layer-0 vertex remains layer-0 with some probability (to be determined), 
and 
(ii) no new layer-0 vertex is created.

\paragraph{Computing a New Labeling $\mathcal{L}'$ from $\mathcal{L}$.} Let $p\in (0,1)$ and $s\geq1$ be two parameters to be chosen later.
The algorithm for computing  $\mathcal{L}'$  is as follows: 
(1) initially, set $\mathcal{L}'(v)=\bot$ for all vertices, but each layer-0 vertex $v$ sets $\mathcal{L}'(v)=0$ independently with probability ${p}$; 
(2) repeat ({\sf Down-cast}, {\sf All-cast}, {\sf Up-cast}) $s$ times, and then do {\sf Down-cast}; 
(3) any vertex $v$ that has yet to obtain a new $\mathcal{L}'$ label (i.e., $\mathcal{L}'(v)=\bot$) retains its old label: $\mathcal{L}'(v)=\mathcal{L}(v)$.

\begin{itemize}
\item {\sf Down-cast.} For $i = 0, \ldots, n-2$, do $\sr$ with $S$ being the set of layer-$i$ vertices of $\mathcal{L}$ such that its $\mathcal{L}'$ label is not $\bot$, and $R$ being the set of all layer-$(i+1)$ vertices of $\mathcal{L}$ such that its $\mathcal{L}'$ label is $\bot$. Each vertex in $S$ attempts to broadcast its $\mathcal{L}'$ label. Each vertex in $R$ that receives the message $m$ sets its $\mathcal{L}'$ label to be $m+1$.

\item {\sf All-cast.} Do $\sr$ with $S$ being the set of all vertices such that its $\mathcal{L}'$ label is not $\bot$, and $R$ being the set of all vertices such that its $\mathcal{L}'$ label is $\bot$. Each vertex in $S$ attempts to broadcast its $\mathcal{L}'$ label. Each vertex in $R$ that receives the message $m$ sets its $\mathcal{L}'$ label to be $m+1$.

\item {\sf Up-cast.} For $i = n-1, \ldots, 1$, do $\sr$ with $S$ being the set of layer-$i$ vertices of $\mathcal{L}$ such that its $\mathcal{L}'$ label is not $\bot$, and $R$ being the set of all layer-$(i-1)$ vertices of $\mathcal{L}$ such that its $\mathcal{L}'$ label is $\bot$. Each vertex in $S$ attempts to broadcast its $\mathcal{L}'$ label. Each vertex in $R$ that receives the message $m$ sets its $\mathcal{L}'$ label to be $m+1$.
\end{itemize}

We use $\sr$ with $f = 1/\poly(n)$.
It is straightforward to verify that the algorithm indeed computes a good labeling $\mathcal{L}'$, w.h.p.
The algorithm takes $O(ns) \cdot T'(n,\Delta)$ time and $O(s) \cdot E'(n,\Delta)$ energy, 
where $T'(n,\Delta)$ and $E'(n,\Delta)$ are the runtime and the energy cost of $\sr$; see Lemmas~\ref{lemma:1} and~\ref{lemma:1-mod}.
In the $\cd$ model, the energy cost is $O(s + \log n)$; see Remark~\ref{rem:cd}.

We show that each layer-0 vertex in $\mathcal{L}$ remains layer-0 in $\mathcal{L}'$ with probability at most $p+(1-p)^{\min\{s+1,w\}} + 1/\poly(n)$, where $w=|\mathcal{L}^{-1}(0)|$.
Assuming all invocations of $\sr$ succeed, which happens with probability $1-1/\poly(n)$,
there are two ways for a layer-0 vertex $v$ in $\mathcal{L}$ to remain layer-0 in $\mathcal{L}'$.
\begin{itemize}
\item The vertex $v$ sets $\mathcal{L}'(v)=0$ at Step~(1), and this occurs with probability $p$.
\item All vertices $u$ within distance $s$ to $v$ (in $G_{\mathcal{L}}$) have $\mathcal{L}'(u) = \bot$ at Step~(1),
and this occurs with probability at most $(1-p)^{\min\{s+1,w\}}$.
\end{itemize}
We are in a position to prove the main theorems of this section.

\begin{theorem}\label{theorem:3}
The $\broadcast$ problem can be solved by a randomized algorithm with high probability in the following runtime $T(n,\Delta)$ and energy cost $E(n,\Delta)$.
\begin{align*}
\LOCAL:& & & T(n,\Delta) = O(n \log n) & &E(n,\Delta) = O(\log n)\\
\cd:& & & T(n,\Delta) = O(n \log \Delta \log^2 n) & &E(n,\Delta) = O(\log^2 n)\\
\nocd:& & & T(n,\Delta) = O(n \log \Delta \log^2 n) & &E(n,\Delta) = O(\log \Delta \log^2 n)
\end{align*}
\end{theorem}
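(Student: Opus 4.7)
The plan is to compute a good labeling $\mathcal{L}^\star$ of $G$ with only a single layer-$0$ vertex via the iterative refinement procedure introduced just above, and then invoke Lemma~\ref{lem:reduce} with $\mathcal{L}^\star$ to execute the broadcast. We start from the trivial labeling $\mathcal{L}_0 \equiv 0$ (so $w_0 \bydef |\mathcal{L}_0^{-1}(0)| = n$) and repeatedly apply one refinement step with \emph{constant} parameters $p \in (0,1)$ and $s \geq 1$ chosen so that $\rho \bydef p + (1-p)^2 < 1$; for instance $p = 1/2$, $s = 1$ gives $\rho = 3/4$. Write $\mathcal{L}_i$, $w_i = |\mathcal{L}_i^{-1}(0)|$ for the labeling and layer-$0$ count after $i$ refinement steps, and set $\mathcal{L}^\star \bydef \mathcal{L}_k$ for $k = \Theta(\log n)$ to be fixed in the analysis.

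For correctness I will condition on all $O(n \log n)$ $\sr$ invocations issued across the $k$ refinements succeeding, which occurs with probability $1 - 1/\poly(n)$ by a union bound after setting the failure parameter $f = 1/\poly(n)$ in each $\sr$. The bound derived just before the theorem yields $\E[w_{i+1} \mid w_i] \leq w_i\bigl(p + (1-p)^{\min\{s+1,\, w_i\}}\bigr) \leq \rho\, w_i$ whenever $w_i \geq 2$, since $s \geq 1$ forces $\min\{s+1, w_i\} \geq 2$; meanwhile $w = 1$ is clearly a fixed point of the refinement. Setting $X_i \bydef w_i - 1$, this gives $\E[X_{i+1} \mid X_i] \leq \rho\, X_i$, hence $\E[X_k] \leq \rho^k n$. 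Choosing the constant in $k = \Theta(\log n)$ large enough to force $\rho^k n \leq 1/n^c$ for any desired $c$, Markov's inequality gives $\Prob[w_k \geq 2] \leq 1/n^c$. Therefore $w_k = 1$ w.h.p., in which case the cluster graph $G_{\mathcal{L}^\star}$ is a single vertex (diameter $d = 0 \leq 1$) and the number of layers satisfies $L \leq n$ trivially.

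The costs then combine as follows. Applying Lemma~\ref{lem:reduce} with $(d, L) = (1, n)$ yields a broadcast phase of $O(n \cdot T'(n,\Delta))$ time and $O(E'(n,\Delta))$ per-vertex energy (or $O(\log n)$ energy in $\cd$), where $T', E'$ are the time and energy of $\sr$. Each refinement iteration costs $O(ns)\cdot T'(n,\Delta) = O(n \cdot T'(n,\Delta))$ time and $O(s)\cdot E'(n,\Delta) = O(E'(n,\Delta))$ energy per vertex (or $O(s + \log n) = O(\log n)$ in $\cd$ by Remark~\ref{rem:cd}), so the $\Theta(\log n)$ refinements contribute $O(n \log n \cdot T'(n,\Delta))$ time and $O(\log n \cdot E'(n,\Delta))$ energy (respectively $O(\log^2 n)$ energy in $\cd$). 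Plugging in $T', E'$ from Lemmas~\ref{lemma:1} and~\ref{lemma:1-mod} for $\cd$ and $\nocd$, and the trivial $T' = E' = O(1)$ for $\LOCAL$, recovers exactly the three $(T(n,\Delta), E(n,\Delta))$ pairs in the theorem.

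The main obstacle is the recurrence analysis: I must ensure that the multiplicative factor $p + (1-p)^{\min\{s+1,\, w_i\}}$ is strictly below $1$ for \emph{every} value $w_i \geq 2$ (not merely in the large-$w$ regime), which forces $s \geq 1$ so that $\min\{s+1, w_i\} \geq 2$ and the relevant $\rho = p + (1-p)^2$ is strictly less than $1$. Once this uniform decay is in place, combining the geometric contraction of $X_i = w_i - 1$, the $w = 1$ absorbing state, and the $1/\poly(n)$ per-$\sr$ failures via a single union bound yields $\mathcal{L}^\star$ w.h.p., after which the cost accounting is routine.
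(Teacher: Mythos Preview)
Your proposal is correct and follows essentially the same approach as the paper: both set $p=1/2$, $s=1$, observe that each layer-$0$ vertex survives with probability at most $3/4$ (conditioned on all $\sr$ calls succeeding), iterate $O(\log n)$ times to reduce to a single layer-$0$ vertex w.h.p., and then invoke Lemma~\ref{lem:reduce} with $L=n$ and $d=O(1)$. Your explicit use of $X_i = w_i - 1$ together with Markov's inequality is a clean way to formalize the geometric decay that the paper leaves implicit, but the underlying argument and the cost accounting are identical.
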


\begin{proof}
Set $p=1/2$ and $s=1$. As long as the number of layer-0 vertices in $\mathcal{L}$ is greater than 1,
each layer-0 vertex in $\mathcal{L}$ remains layer-0 in $\mathcal{L}'$ with probability at most
$p+(1-p)^{\min\{s+1,w\}} + 1/\poly(n) \leq 1/2 + 1/4 + 1/\poly(n) = 3/4 + 1/\poly(n)$.
Thus, after $O(\log n)$ iterations of computing a new labeling from an old labeling,
we obtain a good labeling $\mathcal{L}^\star$ such that the number of layer-0 vertices is exactly 1,
with high probability.
Applying Lemma~\ref{lem:reduce} (with $L=n$ and $d=0$) gives the theorem.
\end{proof}

Recall that the energy cost for computing $\mathcal{L}'$ from $\mathcal{L}$ is
$O(s + \log n)$ (instead of $O(s \log n)$) in the $\cd$ model.
Using this fact, the energy cost can be improved in the $\cd$ model without affecting the time too much.

\begin{theorem}\label{theorem:improved-1}
In the $\cd$ model, $\broadcast$ can be solved by a randomized algorithm with high probability in 
$O\left(\frac{n \log \Delta \log^{2+\epsilon} n}{\epsilon \log \log n}\right)$ time with 
energy cost $O\left(\frac{\log^2 n}{\epsilon \log \log n}\right)$, for any $\epsilon\in(0,1)$.
\end{theorem}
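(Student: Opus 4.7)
The plan is to rerun the iterative clustering algorithm underlying Theorem~\ref{theorem:3}, but re-balance the parameters $p$ and $s$ to exploit the favorable $O(s+\log n)$ (rather than $O(s\log n)$) energy per iteration that the $\cd$ model affords via Remark~\ref{rem:cd}. Specifically, I would set $s = \lceil \log^{\epsilon} n \rceil$ and $p = (\epsilon \log\log n)/\log^{\epsilon} n$. With these settings, the per-iteration survival probability of a layer-$0$ vertex of $\mathcal{L}$ in the new labeling $\mathcal{L}'$ satisfies
\[
q \;\leq\; p + (1-p)^{s+1} + \frac{1}{\poly(n)} \;=\; O\!\left(\frac{\log\log n}{\log^{\epsilon} n}\right),
\]
so each iteration shrinks the expected number of layer-$0$ vertices by a factor of $\log^{\Omega(\epsilon)} n$, and in particular $\log(1/q) = \Omega(\epsilon \log\log n)$.

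Running $k = \Theta(\log n / (\epsilon \log\log n))$ iterations (with a sufficiently large constant) drives the expected count down to $nq^k = 1/\poly(n)$. Combined with the $1/\poly(n)$ failure probability of every $\sr$ call (as in Theorem~\ref{theorem:3}), a union bound over the $n$ initial candidate vertices yields a final good labeling $\mathcal{L}^\star$ with only $O(1)$ layer-$0$ vertices with high probability, and hence the diameter of $G_{\mathcal{L}^\star}$ is $d = O(1)$. Applying Lemma~\ref{lem:reduce} with this $d$ and $L=n$ completes the $\broadcast$ using $O(\log n)$ additional energy and $O(n\log\Delta\log^2 n)$ additional time, both of which are absorbed by the costs computed below.

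For the energy bound, each iteration costs $O(s+\log n) = O(\log n)$ in the $\cd$ model (Remark~\ref{rem:cd}), since $s = \log^{\epsilon} n = o(\log n)$ when $\epsilon < 1$. Summing over $k$ iterations gives total energy $k\cdot O(\log n) = O\!\left(\log^2 n/(\epsilon \log\log n)\right)$, as required. For the runtime, each iteration performs $O(ns)$ invocations of $\sr$, each taking $O(\log\Delta\log n)$ time in the $\cd$ model (Lemma~\ref{lemma:1-mod} with $f=1/\poly(n)$), so the time per iteration is $O(n\log^{\epsilon} n\log\Delta\log n)$; multiplying by $k$ yields the claimed $O\!\left(n\log\Delta\log^{2+\epsilon} n/(\epsilon\log\log n)\right)$.

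The main obstacle is the concentration argument. The events ``$v$ remains layer-$0$'' are not independent across $v$, since $v$'s survival depends on the Step-(1) coin flips of all layer-$0$ vertices within distance $s$ of $v$ in $G_{\mathcal{L}}$. However, the marginal per-iteration survival probability of any fixed $v$ is at most $q$, and because the coin flips across iterations are independent, conditioning on the sequence of labelings gives $\Pr[v\text{ is layer-}0\text{ after }k\text{ iterations}] \leq q^k$. Choosing the constant in $k$ large enough so that $q^k \leq 1/n^c$ and then union bounding over the $n$ candidate vertices establishes that $|(\mathcal{L}^\star)^{-1}(0)| = O(1)$ with high probability, which is exactly what Lemma~\ref{lem:reduce} needs to close the argument.
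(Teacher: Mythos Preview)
Your approach and parameter regime are essentially the paper's (the paper takes $p=\log^{-\epsilon/2}n$ rather than your $p=(\epsilon\log\log n)/\log^{\epsilon}n$, but either choice gives $\log(1/q)=\Theta(\epsilon\log\log n)$), and your time and energy accounting is correct.

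The gap is in the survival bound you invoke. The per-iteration survival probability established in Section~\ref{section:rand} is
\[
p+(1-p)^{\min\{s+1,\,w\}}+\tfrac{1}{\poly(n)},
\]
where $w=|\mathcal{L}^{-1}(0)|$; you have silently dropped the $\min$ with $w$. While $w>s$ your bound $q=O(\log\log n/\log^{\epsilon}n)$ is valid, but once $w\le s$ it collapses: for $w=2$ the survival probability is $p+(1-p)^2=1-p+p^2$, which with your tiny $p$ is essentially $1$. Hence the chain of inequalities giving $\Pr[v\text{ layer-}0\text{ after }k\text{ iterations}]\le q^k$ fails precisely in the last stretch, and you cannot conclude that the final number of layer-$0$ vertices is $O(1)$ w.h.p.

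The fix is exactly what the paper does: argue only that after $k=O(\log n/(\epsilon\log\log n))$ iterations the number of layer-$0$ vertices has dropped to at most $s=\log^{\epsilon}n$ w.h.p.\ (your $q^k$ argument is valid for this, since the bound $q$ holds as long as $w>s$, and $w$ is monotone nonincreasing). Then invoke Lemma~\ref{lem:reduce} with $d=\log^{\epsilon}n$ rather than $d=O(1)$. In the $\cd$ model this final call costs $O(d+\log n)=O(\log n)$ energy and $O(Ld\log n\log\Delta)=O(n\log^{1+\epsilon}n\log\Delta)$ time, both absorbed by the totals you computed.
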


\begin{proof}
Set $p=\log^{-\epsilon/2} n$ and $s=\log^{\epsilon} n$. As long as the number of layer-0 vertices in $\mathcal{L}$ is greater than $\log^{\epsilon} n$, each layer-0 vertex in $\mathcal{L}$ remains layer-0 in $\mathcal{L}'$ with probability at most $p+(1-p)^{\min\{s+1,w\}} = O(\log^{-\epsilon/2}n)$. 
Thus, after $O\left(\frac{\log n}{\epsilon \log \log n}\right)$ iterations of computing new labeling from old labeling, we obtain a good labeling $\mathcal{L}^\star$ such that the number of layer-0 vertices is at most $\log^{\epsilon} n$, with high probability.
Notice that the energy cost of each iteration is $O(s + \log n) = O(\log n)$.
Applying Lemma~\ref{lem:reduce} (with $L=n$ and $d=\log^{\epsilon} n$) gives the theorem.
\end{proof}

By Theorem~\ref{thm:simlocal}, we can simulate the $\LOCAL$ algorithm of Theorem~\ref{theorem:3} in $\nocd$ with $\poly(\Delta)$ 
overhead in time and energy, and thereby provide an overall improvement (i.e., in both time and energy) 
for graphs with $\Delta = o(\sqrt{\log n \log \log n})$,
and an improvement in energy at the expense of time all the way up to $\Delta = o(\log n)$.
In particular, we have the following corollary, which shows that the $\LOCAL$ lower bound on path graphs (Theorem~\ref{thm:lb-path}) is matched by a $\nocd$ algorithm on bounded-degree graphs.

\begin{corollary}
In the $\nocd$ model, for bounded degree graphs, the $\broadcast$ problem can be solved by a randomized algorithm with high probability in $O(n \log n)$ time with energy cost $O(\log n)$.
\end{corollary}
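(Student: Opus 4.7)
The plan is to obtain the corollary as a direct consequence of composing two results already established in the excerpt, namely the $\LOCAL$ broadcast algorithm from Theorem~\ref{theorem:3} and the $\LOCAL$-to-$\nocd$ simulation from Theorem~\ref{thm:simlocal}. First I would invoke Theorem~\ref{theorem:3} in its $\LOCAL$ instantiation, which supplies a randomized $\broadcast$ algorithm $\mathcal{A}$ succeeding with high probability in time $T = O(n \log n)$ and energy $E = O(\log n)$. This algorithm is model-agnostic about collisions, so I can hand it to the simulator.

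Next I would feed $\mathcal{A}$ into Theorem~\ref{thm:simlocal} to obtain a $\nocd$ algorithm. The simulator runs a one-time coloring preprocessing and then replaces each $\LOCAL$ round by a short schedule of length $O(\Delta^2)$ whose slots are chosen by the computed coloring of $G+G^2$. Concretely, the simulation costs $O(\Delta^2 T + \Delta \log \Delta \log n)$ time and $O(\Delta(E + \log \Delta \log n))$ energy, and correctness holds with high probability (the coloring step is w.h.p.~correct and the simulated algorithm is itself w.h.p.~correct, so a union bound preserves high probability).

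Finally I would simply substitute $\Delta = O(1)$, which collapses every $\Delta$ and $\log \Delta$ factor to a constant. The time becomes $O(n \log n) + O(\log n) = O(n \log n)$ and the energy becomes $O(\log n + \log n) = O(\log n)$, matching the statement.

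There is essentially no hard step here: the content of the corollary is the observation that when $\Delta$ is constant the multiplicative overhead of the simulation is constant, so the $\LOCAL$ bounds carry over verbatim to $\nocd$. The only thing worth double-checking is that the preprocessing and the simulation stages both succeed with high probability simultaneously, but this is immediate from a union bound over the two high-probability events, keeping the overall failure probability at $1/\poly(n)$.
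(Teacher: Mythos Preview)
Your proposal is correct and follows exactly the paper's own argument: the corollary is stated immediately after the remark that, by Theorem~\ref{thm:simlocal}, one can simulate the $\LOCAL$ algorithm of Theorem~\ref{theorem:3} in $\nocd$ with only $\poly(\Delta)$ overhead, which for $\Delta = O(1)$ collapses to a constant. There is nothing to add.
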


\section{An $\tilde{O}(D^{1+\epsilon})$-Time $\broadcast$ Algorithm}\label{section:diameter}

In this section, we show that it is possible to achieve near diameter time $O(D^{1+\epsilon} \poly(\log n))$
while keeping relatively low energy complexity $O(\poly (\log n))$.
Throughout this section we are working in the $\nocd$ model for simplicity.  A couple log factors
can be saved by adapting our algorithm to the $\cd$ model.

Our algorithm is based on the following subroutine {\sf Partition($\beta$)},
described by Miller, Peng, and Xu~\cite{miller2013parallel} and further analyzed by
Haeupler and Wajc~\cite{haeupler2016faster}.
The goal of {\sf Partition($\beta$)}
is to produce the following random clustering.  Each vertex $v$ picks
$\delta_v\sim \text{Exponential}(\beta)$, $\beta \in (0,1)$,
and assigns $v$ to the cluster of $u$ that minimizes $\dist(u,v) - \delta_u$.
This algorithm can be implemented in $\nocd$ as follows~\cite{haeupler2016faster}.

\begin{description}
\item{\sf Partition($\beta$)}
Every vertex $v$ picks a value $\delta_v\sim \text{Exponential}(\beta)$.
Let $v$'s start time be $\text{start}_v\gets \frac{2\log n}{\beta}-\lceil \delta_v\rceil$.
There are $\frac{2\log n}{\beta}$ epochs numbered 1 through $\frac{2\log n}{\beta}$.
At the beginning of epoch $t$, if $v$ is not yet in any cluster and
$\text{start}_v=t$, $v$ becomes the cluster center of its own cluster.
During the epoch, we execute $\sr$ with failure probability $f=1/\poly(n)$,
where $S$ is the set of all clustered vertices and $R$ the set of all as-yet unclustered vertices.
Any vertex $v \in R$ receiving a message from $u \in S$ joins the cluster of $u$.
\end{description}

The algorithm {\sf Partition($\beta$)} takes
$O(\frac{\log^3 n}{\beta})$ time and $O(\frac{\log^3 n}{\beta})$ energy in $\nocd$.
Lemma~\ref{lem:property} presents some useful properties of {\sf Partition($\beta$)}.
The cluster graph is defined as the graph resulting from contracting each cluster to a vertex. 
Our strategy for solving $\broadcast$ is to iteratively apply the clustering algorithm {\sf Partition($\beta$)} to the cluster graph
until it has diameter $\poly(\log n)$.
In Lemma~\ref{lemma:7} we prove that the diameter of the cluster graph shrinks by a factor of $O(\beta)$ with high probability.

\begin{lemma}[\cite{miller2013parallel,haeupler2016faster}] \label{lem:property}
The algorithm {\sf Partition($\beta$)} partitions the vertices into clusters with the following properties.
\begin{enumerate}
\item \label{p2} The probability of any edge $\{u,v\}$ having its endpoints $u$ and $v$ contained in different clusters is at most $2\beta$.
\item \label{p3} For any fixed vertex $u$, the probability that vertices in $N^d(u) \cup \{u\}$ are in at least $t$ distinct clusters is at most $\left(1-e^{-(2d+1)\beta}\right)^{t-1}$. As a special case, for $d=1$ (i.e., if we only care about $u$ and its neighbors) this probability is at most $\left(1-e^{-3\beta}\right)^{t-1}$.
\end{enumerate}
\end{lemma}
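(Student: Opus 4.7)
The plan is to prove both properties via the memoryless property of the exponential distribution, in the spirit of Miller--Peng--Xu and its discretized $\nocd$ implementation by Haeupler--Wajc. For every $w\in V$, define the \emph{arrival time} at $v$ as $X_w(v) = d(w,v) - \delta_w$; the cluster center of $v$ is, up to a rounding artifact of at most $1$ coming from the $\lceil \delta_w\rceil$ in the pseudocode, the $w$ minimizing $X_w(v)$.

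For Property~\ref{p2}, I fix an edge $\{u,v\}$ and observe that $|X_w(u)-X_w(v)|\le d(u,v)=1$ for every $w$. Hence, if the minimum arrival time at $u$, say $X_{w^\ast}(u)$, beats the second-smallest arrival time at $u$ by more than $2$ (one unit for the edge length, one for the $\lceil\cdot\rceil$ rounding), then the same $w^\ast$ also minimizes at $v$, so the edge is not cut. Thus the edge-cut event is contained in the event that the smallest and second-smallest arrival times at $u$ lie within distance $2$ of each other. A memoryless-property calculation (described below) bounds this by $1-e^{-2\beta}\le 2\beta$.

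For Property~\ref{p3}, I fix $u$ and let $w^\ast = \arg\min_w X_w(u)$. For any $v\in N^d(u)\cup\{u\}$, if $w$ is the winner at $v$, then $X_w(v)\le X_{w^\ast}(v)$. Combining this with the triangle-inequality bounds $X_w(v)\ge X_w(u)-d$ and $X_{w^\ast}(v)\le X_{w^\ast}(u)+d$, and absorbing the $+1$ rounding error, yields $X_w(u)\le X_{w^\ast}(u)+(2d+1)$. Consequently, the number of distinct cluster centers appearing in $N^d(u)\cup\{u\}$ is at most the number of $w$ whose arrival time at $u$ lies in the window $[X_{w^\ast}(u),\,X_{w^\ast}(u)+(2d+1)]$. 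By the memoryless property, conditioned on having already located $j$ vertices with arrival times in this window, the probability that a further vertex also lands in it is at most $1-e^{-(2d+1)\beta}$; iterating gives the stated bound $(1-e^{-(2d+1)\beta})^{t-1}$.

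The main obstacle is the memoryless-property step: the variables $X_w$ are not i.i.d.\ because they are exponentials shifted by the vertex-dependent offsets $d(w,u)$. The cleanest route is to translate each tail event back to an event on the underlying $\delta_w$'s, where memorylessness applies directly, and to reveal the arrival times in increasing order, at each step using only that a single $\text{Exp}(\beta)$ random variable, conditioned on exceeding some threshold, still has tail $e^{-\beta t}$ above it. Once this bookkeeping is in place, the $2\beta$ and $1-e^{-(2d+1)\beta}$ bounds emerge as the natural rate-$\beta$ window probabilities of a single exponential.
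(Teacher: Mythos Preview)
The paper does not actually prove this lemma: its entire proof reads ``The two properties are due to~[Corollary~3.7] and~[Corollary~3.8] of Haeupler--Wajc, respectively.'' Your proposal supplies precisely the argument contained in those cited corollaries (and in Miller--Peng--Xu): the arrival-time reformulation $X_w(v)=d(w,v)-\delta_w$, the reduction of the edge-cut event to a gap-of-width-$2$ event between the smallest and second-smallest arrival times, and for the second property the observation that every winning center in $N^d(u)$ must have arrival time at $u$ within $2d+1$ of the minimum, followed by an iterated memoryless-tail bound. So your approach is correct and is essentially the same as the one in the references the paper defers to; you have simply unpacked what the paper cites as a black box.
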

\begin{proof}
The two properties are due to~\cite[Corollary 3.7]{haeupler2016faster} and~\cite[Corollary 3.8]{haeupler2016faster}, respectively. 
\end{proof}

\begin{lemma}[Concentration bound on diameter]\label{lemma:7}
Suppose that the diameter of the graph $G$ is  $D = \frac{\alpha \log^2 n}{\beta^4}$, for some number $\alpha$.
Then the diameter of the cluster graph resulting from {\sf Partition($\beta$)} is at most $3\beta{D}$,
with probability $1 - n^{- \Omega(\alpha)}$.
\end{lemma}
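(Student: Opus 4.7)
My plan is to reduce to the pointwise statement: for every fixed pair $u, v \in V(G)$, the distance between the clusters containing $u$ and $v$ in the cluster graph is at most $3\beta D$ with probability $1 - n^{-\Omega(\alpha) - 2}$; a union bound over the fewer than $n^2$ pairs then bounds the diameter of the cluster graph, since that diameter is witnessed by some pair of clusters and hence by some pair of vertices.

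Fix $u, v$, and let $P = (x_0 = u, x_1, \ldots, x_k = v)$ be a shortest $u$-$v$ path in $G$, so $k \le D$. As we walk along $P$, the sequence of clusters containing $x_0, x_1, \ldots, x_k$ is a walk of length $N := \sum_{i=1}^{k} Y_i$ in the cluster graph, where $Y_i$ is the indicator that $x_{i-1}$ and $x_i$ lie in different clusters; hence the distance of interest is at most $N$. Property~1 of Lemma~\ref{lem:property} gives $\Expect Y_i \le 2\beta$, so $\Expect N \le 2\beta D$.

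To turn this first-moment bound into a tail bound, I would exploit the local dependence of MPX: because {\sf Partition($\beta$)} runs for only $\frac{2\log n}{\beta}$ epochs and clusters grow at unit speed, the cluster label of any vertex $x$ is determined by $\{\delta_w : \dist_G(w,x) \le 2\log n/\beta\}$ alone---farther centers have $\text{start}_w$ too small to reach $x$ within the horizon, and equivalently the algorithm's output is unchanged if one truncates each $\delta_v$ at $\frac{2\log n}{\beta}$. Since $P$ is a shortest path, $\dist_G(x_i, x_j) = |i-j|$; consequently, whenever $|i-j|$ exceeds $M := \lceil 4\log n/\beta\rceil + 2$ the indicators $Y_i$ and $Y_j$ depend on disjoint sets of $\delta_w$'s and are therefore independent. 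Partitioning $\{1, \ldots, k\}$ into $M$ arithmetic progressions by residue modulo $M$ writes $N$ as a sum of $M$ partial sums, each a sum of mutually independent $\{0,1\}$ variables of total expectation at most $2\beta D/M$.

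If $N > 3\beta D$, pigeonhole forces at least one such partial sum to exceed $3\beta D/M = \tfrac32 \cdot 2\beta D/M$. The multiplicative Chernoff bound gives probability at most $\exp(-\Omega(\beta D/M)) = \exp(-\Omega(\beta^2 D/\log n)) = \exp(-\Omega(\alpha \log n/\beta^2))$, which is $n^{-\Omega(\alpha)}$ since $\beta \le 1$. Union-bounding over the $M = O(\log n/\beta)$ progressions and then over the $<n^2$ vertex pairs preserves a bound of the form $n^{-\Omega(\alpha)}$ (the polynomial prefactors are absorbed once the constant hidden in $\Omega(\alpha)$ is large; the lemma is vacuous when $3\beta D \ge n$). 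I expect the only real obstacle to be a clean argument for the local-dependence claim---namely that truncating every $\delta_v$ at $\frac{2\log n}{\beta}$ leaves the discrete MPX algorithm's output invariant, and that under this truncation each cluster label depends only on the exponentials inside an $O(\log n/\beta)$-radius ball in $G$. Once that is in hand, the rest is a standard Chernoff bound plus two layers of union bound.
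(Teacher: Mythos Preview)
Your proposal is correct and is essentially the paper's own argument: fix a shortest path, use that edge-crossing indicators at path-distance greater than $\frac{4\log n}{\beta}$ are independent (by $O(\log n/\beta)$-locality of MPX cluster labels), apply a concentration bound, and union over the $O(n^2)$ pairs. The only difference is cosmetic---the paper invokes \cite[Theorem~3.2]{DubhashiPanconesi09} (a Hoeffding-type bound for sums whose dependency graph has bounded chromatic number) in place of your explicit partition into residue classes plus multiplicative Chernoff, and it simply asserts the independence claim you flag rather than arguing it.
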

\begin{proof}
Let $k = 2\cdot \frac{2\log n}{\beta}$ be twice the number of epochs, and so the maximum diameter of any cluster is at most $k$.
Consider any two vertices $u$ and $v$ such that $\dist(u,v) > 3 \beta D = 3 \beta \cdot \frac{\alpha \log^2 n}{\beta^4} = \frac{3\alpha \log^2 n}{\beta^3}$. Let $P=(w_1, w_2, \ldots, w_\ell, w_{\ell+1})$ be a shortest path from $u = w_1$ to $v = w_{\ell+1}$ of length $\ell$. Define $X_i$ to be the indicator random variable that $w_{i}$ and $w_{i+1}$ are contained in different clusters.
Then $X = \sum_{i=1}^{\ell} X_i$ is an upper bound of the distance between the cluster of $u$ and the cluster of $v$ in the cluster graph.

If $|i-j| > k = \frac{4\log n}{\beta}$, then $X_i$ and $X_j$ are independent.
Thus, we can color $\{X_i\}_{i = 1 , \ldots, \ell}$ by $\chi = \frac{4\log n}{\beta}$ colors in such a way that
variables of the same color are independent.
By~\cite[Theorem 3.2]{DubhashiPanconesi09}, we have the following inequality:
$\Prob[X \geq \Expect[X] + t] \leq \exp(-2t^2 / (\chi \cdot \ell)).$
By linearity of expectation and Lemma~\ref{lem:property}(\ref{p2}), $\Expect[X] \le 2\beta \ell$. Thus, by setting $t = \beta \ell$, we have
\[
\Prob[X \geq 3 \beta \ell] \leq \exp(-\Omega(\beta^3 \ell / \log n)) = n^{- \Omega(\alpha)}.
\]
The lemma follows by a union bound over all $O(n^2)$ possible pairs $\{u,v\}$. Notice that if $\dist(u,v)  \leq 3\beta D$, then the distance between the cluster of $u$ and the cluster of $v$ in the cluster graph is already at most $3\beta D$.
\end{proof}

\subsection{Main Algorithm}\label{subsect:main-diam}

We fix the parameter $\beta=\frac{1}{\log^{1/\epsilon} n}$. Our randomized $\nocd$ algorithm for $\broadcast$ consists of two phases. 
The first phase is to iteratively run {\sf Partition($\beta$)} on the current cluster graph $\log_{1/(3\beta)} D$ times. 
The second phase is to apply Lemma \ref{lem:reduce} to the last clustering to solve $\broadcast$.

\paragraph{Details of the First Phase.} After performing one iteration of {\sf Partition($\beta$)} to get a new clustering, we will later see in Section~\ref{subsect:maintain} that the maximum number of layers in any cluster is multiplied by at most $\frac{4 \log n}{\beta} = 4\log^{1+\frac{1}{\epsilon}}n$. Thus, throughout the first phase, the maximum number of layers of the underlying good labeling is upper bounded by
\[
\mathcal{D} = \left( \frac{4 \log n}{\beta} \right)^{\log_{1/(3\beta)} D} 
= D^{\left(\frac{\log \frac{4 \log n}{\beta} }{\log \frac{1}{3 \beta}}\right)}
= D^{\left(\frac{\log (4\log^{1+\frac{1}{\epsilon}}n) }{\log (\frac{1}{3}\log^{\frac{1}{\epsilon}} n)}\right)}
= D^{1+\epsilon(1+ O(1/\log \log n))}.
\]
By Property~\ref{p3} of Lemma~\ref{lem:property}, with high probability, for each vertex $u$, the number of distinct clusters that vertices in $N^+(u) = N(u) \cup \{u\}$ belong to is at most 
\[
\mathcal{C} = O\left(\log_{1/3\beta} n\right) = O\left(\log_{\log^{1/\epsilon} n} n\right)= O\left(\frac{\epsilon\log n}{\log \log n}\right).
\]
We will later see that, based on the implementation of the cluster structure in Section~\ref{subsect:cluster}, we can simulate one round of {\sf Partition($\beta$)} 
on the cluster graph using $O(\mathcal{D}\mathcal{C}\log^3 n)$ rounds and $O(\mathcal{C}\log^3 n)$ energy in the underlying graph $G$. 
The details are described in Section~\ref{subsect:beep} and Section~\ref{subsect:maintain}. In Section~\ref{subsect:beep} we present a simulation of {\sf Partition($\beta$)} on the cluster graph. In Section~\ref{subsect:maintain} we show how we maintain the good labeling underlying the clustering.
Therefore, the runtime of the first phase is
$\log_{1/(3\beta)} D \cdot   O(\log^{3+1/\epsilon} n) \cdot O(\mathcal{D} \mathcal{C} \log^3 n),$
and the energy cost is
$\log_{1/(3\beta)} D \cdot   O(\log^{3+1/\epsilon} n) \cdot O(\mathcal{C}\log^3 n).$

\paragraph{Details of the Second Phase.} In view of Lemma~\ref{lemma:7}, after the first phase, the diameter of the cluster graph is less than $O(\frac{\log^2 n}{\beta^4}) = O(\log^{2+4/\epsilon}n)$. Applying Lemma \ref{lem:reduce} with $d= O(\log^{2+4/\epsilon}n)$ and $L = \mathcal{D} = D^{1+\epsilon(1+O(1/\log \log n))}$, 
$\broadcast$ can be solved in $O(D^{1+\epsilon(1+O(1/\log \log n))} \log^{4+4/\epsilon} n)$ time using $O(\log^{4+4/\epsilon} n)$ energy. 
Notice that the diameter of a cluster graph (for a specific clustering resulting from a good labeling $\mathcal{L}$) is greater than or equal to the diameter of $G_{\mathcal{L}}$.

\medskip

By doing a variable change $\epsilon' = \epsilon(1+O(1/\log \log n))$, we have the following theorem.

\begin{theorem}\label{theorem:1}
For any $\epsilon\in(0,1)$, there is a randomized $\broadcast$ algorithm in $\nocd$ taking
$O(D^{1+\epsilon} \log^{O(\frac{1}{\epsilon})} n)$ time and using
$O(\log^{O(\frac{1}{\epsilon})}n)$ energy, that succeeds with high probability.
\end{theorem}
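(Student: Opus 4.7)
The plan is to simply stitch together the two-phase algorithm described in Section~\ref{subsect:main-diam}, fix the parameter $\beta = 1/\log^{1/\epsilon} n$, and verify that the advertised time and energy bounds hold after a small reparametrization $\epsilon' = \epsilon(1+O(1/\log\log n))$. Phase~1 iteratively invokes {\sf Partition($\beta$)} on the current cluster graph for $\log_{1/(3\beta)} D$ rounds; Phase~2 feeds the resulting good labeling $\mathcal{L}^\star$ into Lemma~\ref{lem:reduce}. The whole argument is then an accounting exercise, modulo two structural facts that must hold inductively across the Phase~1 iterations.

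First I would control the two critical parameters for the final labeling. The diameter of the cluster graph contracts geometrically: by Lemma~\ref{lemma:7}, a single call to {\sf Partition($\beta$)} shrinks the diameter from $D'$ to at most $3\beta D'$ with probability $1 - n^{-\Omega(\alpha)}$, so after $\log_{1/(3\beta)} D$ iterations the cluster diameter is at most $d = O(\log^{2+4/\epsilon} n)$ (taking a union bound over the $O(\log n)$ iterations). Each iteration also multiplies the maximum number of layers of the underlying good labeling by at most the number of epochs $\frac{4\log n}{\beta}$, so the final layer count is at most $\mathcal{D} = (4\log n/\beta)^{\log_{1/(3\beta)} D} = D^{1+\epsilon(1+O(1/\log\log n))}$. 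Finally, Property~\ref{p3} of Lemma~\ref{lem:property} yields that with high probability every closed neighborhood meets at most $\mathcal{C} = O(\epsilon \log n/\log\log n)$ new clusters in any given round, which is what allows the per-round simulation of {\sf Partition($\beta$)} on the cluster graph to be carried out in $O(\mathcal{D}\mathcal{C}\log^3 n)$ time and $O(\mathcal{C}\log^3 n)$ energy inside $G$ (the implementation details are what Sections~\ref{subsect:beep} and~\ref{subsect:maintain} supply).

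Then I would sum the costs. Phase~1 costs
\[
\log_{1/(3\beta)} D \cdot O(\log^{3+1/\epsilon} n) \cdot O(\mathcal{D}\mathcal{C}\log^3 n) = D^{1+\epsilon(1+O(1/\log\log n))} \cdot \log^{O(1/\epsilon)} n
\]
time and $\log^{O(1/\epsilon)} n$ energy, since the $\mathcal{D}$ factor is absent from the energy bound. Applying the $\nocd$ row of Lemma~\ref{lem:reduce} with $L = \mathcal{D}$ and $d = O(\log^{2+4/\epsilon} n)$ contributes another $O(Ld\log n\log\Delta) = D^{1+\epsilon(1+O(1/\log\log n))} \log^{O(1/\epsilon)} n$ time and $O(d\log n\log\Delta) = \log^{O(1/\epsilon)} n$ energy. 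A union bound over the $O(\log n)$ stochastic events (Lemma~\ref{lemma:7}, Property~\ref{p3}, and all $\sr$ calls, each set to fail with probability $1/\poly(n)$) gives overall high-probability success. The substitution $\epsilon' = \epsilon(1+O(1/\log\log n))$ absorbs the $o(1)$ slack in the exponent of $D$ for any constant $\epsilon > 0$, and since $1/\epsilon' = O(1/\epsilon)$, the $\log^{O(1/\epsilon')} n$ factors match the target. The main obstacle in turning this outline into a full proof is not the arithmetic but the inductive verification that (a) Lemma~\ref{lemma:7} and Lemma~\ref{lem:property} remain applicable when {\sf Partition($\beta$)} is run on the \emph{cluster} graph rather than $G$ itself, and (b) a faithful simulation of {\sf Partition($\beta$)} on the cluster graph is achievable at the claimed per-round cost given only the good labeling machinery; these are exactly the points addressed in the two subsections that follow.
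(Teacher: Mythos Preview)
Your proposal is correct and mirrors the paper's own argument essentially line by line: the same choice of $\beta$, the same two-phase structure, the same bounds $\mathcal{D}$, $\mathcal{C}$, and $d$, the same application of Lemma~\ref{lem:reduce}, and the same final reparametrization $\epsilon' = \epsilon(1+O(1/\log\log n))$. The two caveats you flag at the end---applicability of Lemmas~\ref{lem:property} and~\ref{lemma:7} on the cluster graph and the per-round simulation cost---are precisely what Sections~\ref{subsect:cluster}--\ref{subsect:maintain} supply, so nothing is missing.
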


\subsection{Cluster Structure}\label{subsect:cluster}

We assume that each vertex $v$ has a unique number $\ID(v)$, and has a good labeling $\mathcal{L}(v)$.
Recall that a good labeling, in general, does not give rise to a unique clustering.
To fix a specific clustering, consider the following modifications.
We define the cluster id of a cluster $C$ by $\ID(r)$, where $r$ is the unique layer-0 vertex in $C$.
We assume that each vertex $v \in C$ knows the cluster id $\CID(v) = \ID(r)$.
We assume the cluster center $r$ has generated a sufficiently long random string $R(r)$, and each vertex $v \in C$ knows $R(v)\bydef R(r)$.
We call this random string the {\em shared random string} of the cluster $C$.

Suppose that all vertices agree on the two parameters $\mathcal{C}$ and $\mathcal{D}$ meeting the following conditions.
For each vertex $u$, the vertices in $N^+(u)$ belong to at most $\mathcal{C}$ distinct clusters.
The number $\mathcal{D}$ is an upper bound on the number of layers of the good labeling.
We claim that the following two tasks can be done with $O(\mathcal{C}\log^3 n)$ time and $O(\mathcal{C}\log^3 n)$ energy.

\begin{itemize}
\item {\sf Downward transmission.} Let $i \geq 0$ and $V'$ be a subset of layer-$i$ vertices that have some messages to send. 
The goal is to have each layer-$(i+1)$ vertex with at least one $V'$-neighbor \emph{in the same cluster}
receive a message from any such neighbor, with high probability.

\item {\sf Upward transmission.} Let $i > 0$ and $V'$ be a subset of layer-$i$ vertices that have some messages to send. 
The goal is to have each layer-$(i-1)$ vertex with at least one $V'$-neighbor \emph{in the same cluster}
receive a message from any such neighbor, with high probability.
\end{itemize}

\begin{lemma}\label{lemma:cluster-transmit}
In the $\nocd$ model, both {\sf Downward transmission} and  {\sf Upward transmission} can be solved by a randomized algorithm that takes  $O(\mathcal{C}\log^3 n)$ time and $O(\mathcal{C}\log^3 n)$ energy.
\end{lemma}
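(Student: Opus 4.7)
The plan is to use the shared random strings attached to each cluster to isolate intra-cluster $\sr$ from inter-cluster interference via repeated randomized coloring. I will describe only downward transmission; upward transmission is identical after swapping the roles of layer $i-1$ and layer $i+1$. The key observation is that every vertex in a cluster $C$ knows the same string $R(C)$, so by extracting blocks of bits from $R(C)$ all members of $C$ can agree on a random schedule with no additional communication, and in particular a layer-$(i+1)$ listener in $C$ knows exactly when its own cluster is scheduled to broadcast.

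Concretely the algorithm runs $I = \Theta(\log n)$ iterations. In iteration $j$ each cluster $C$ derives a color $c_j^C$ drawn uniformly from $[K]$ with $K = 4\mathcal{C}$ from $R(C)$, and the iteration is split into $K$ blocks of length $\Theta(\log \Delta \log n)$. In block $k$ of iteration $j$, for every cluster $C$ with $c_j^C = k$ we run the $\nocd$ $\sr$ protocol of Lemma~\ref{lemma:1} internal to $C$ with failure probability $1/\poly(n)$: a layer-$i$ sender in $V' \cap C$ transmits its message, and every layer-$(i+1)$ vertex in $C$ listens. In every other block the members of $C$ are silent.

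For correctness, fix a listener $v \in C$ at layer $i+1$ that has a same-cluster neighbor in $V'$. The clusters meeting $N^+(v)$ number at most $\mathcal{C}$, so the colors of the at most $\mathcal{C}-1$ other clusters in iteration $j$ are independent of $c_j^C$ and uniform on $[K]$. By a union bound, the probability that at least one of them equals $c_j^C$ is at most $(\mathcal{C}-1)/K \le 1/4$; call iteration $j$ \emph{good} for $v$ when this does not happen. In a good iteration every transmitter active in $N(v)$ during block $c_j^C$ lies in $C$, so the call reduces to a pure intra-cluster $\sr$ and succeeds with probability $1 - 1/\poly(n)$ by Lemma~\ref{lemma:1}. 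The probability that $v$ fails over all $I$ independent iterations is at most $(1/4)^{I} + I/\poly(n) = 1/\poly(n)$, and a union bound over the $O(n)$ potential listeners gives the high-probability guarantee.

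For complexity, the total time is $I \cdot K \cdot O(\log\Delta \log n) = O(\mathcal{C} \log^3 n)$ using $\log\Delta \le \log n$. Each vertex participates in at most one block per iteration (the block matching its cluster's current color), so its energy is $I \cdot O(\log \Delta \log n) = O(\log^3 n)$, within the stated bound. The main conceptual obstacle is that a single round of random coloring over only $O(\mathcal{C})$ colors gives each listener merely a constant isolation probability; this is what forces the $\Theta(\log n)$ outer iterations to drive the failure probability to $1/\poly(n)$. The shared random string is what makes the scheme cheap: it lets senders and listeners inside a cluster agree on when their cluster is active without spending any energy on coordination.
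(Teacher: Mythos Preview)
Your argument is correct and rests on the same key idea as the paper's proof: use the shared cluster random string so that all members of a cluster agree, with no communication, on a random schedule that isolates intra-cluster $\sr$ from interference by the at most $\mathcal{C}-1$ foreign clusters touching any listener's neighborhood.

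The implementations differ, and yours is the cleaner one. The paper has every cluster independently join $S$ with probability $1/\mathcal{C}$ in each of $O(\mathcal{C}\log n)$ iterations, and lets \emph{every} layer-$(i+1)$ vertex sit in $R$ in every iteration; a fixed listener is isolated with probability roughly $1/(e\mathcal{C})$ per iteration, so $O(\mathcal{C}\log n)$ iterations are needed and the listener spends $O(\mathcal{C}\log^3 n)$ energy. Your coloring with $K=4\mathcal{C}$ colors gives a constant isolation probability per iteration, so $O(\log n)$ iterations suffice; moreover, because a listener knows its own cluster's color it participates in only one block per iteration, and your per-vertex energy is actually $O(\log^3 n)$, a $\mathcal{C}$ factor better than the stated bound. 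One small point you should make explicit (the paper does): messages must carry the sender's $\CID$ so that a listener can discard any message received in a bad iteration from a foreign cluster sharing its color; otherwise a wrong-cluster message could be accepted.
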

\begin{proof}
We only present the proof for {\sf Downward transmission}, since {\sf Upward transmission} can be solved analogously.
The algorithm is as follows. Repeat the following procedure for $O(\mathcal{C}\log n)$ iterations.
Each layer-$i$ vertex $v\in V'$ joins the set $S$ with probability
$\frac{1}{\mathcal{C}}$, using the shared random string $R(v)$.
Thus, for any two layer-$i$ vertices $u,v\in V'$ in a cluster $C$,
we must have either $u,v \in S$ or $u,v \notin S$.
Run $\sr$ with $S$ being the above set, and $R$ being the set of all layer-$(i+1)$ vertices.
This algorithm takes $O(\mathcal{C}\log^3 n)$ time and $O(\mathcal{C}\log^3 n)$ energy.

Now we prove the correctness of this algorithm. Consider any layer-$(i+1)$ vertex $v$ in cluster $C$, let $u_1,\ldots,u_{x}$ be all layer-$i$ neighbors of $v$ in $C$ that are transmitting, and let $u_{x+1},\ldots,u_k$ be all layer-$i$ neighbors of $v$ not in $C$ that are transmitting. The vertices $u_1,\ldots,u_k$ are contained in at most $\mathcal{C}$ distinct clusters. Within $O(\mathcal{C}\log n)$ iterations, with high probability, there is an iteration  where (i) $u_1,\ldots,u_{x} \in S$, and (ii) $u_{x+1},\ldots,u_k \notin S$. Thus, $v$ is able to receive a message from a neighbor in $C$ in this iteration. We assume any message contains the cluster id, so that $v$ can check whether a message it receives comes from a neighbor in $\mathcal{C}$.
\end{proof}

\subsection{Simulating Algorithms on Cluster Graph}\label{subsect:beep}
In view of the definition of $\sr$, we define the $\cd^\star$  model as follows. This model is basically the same as $\cd$; but for the case where at least two neighbors are transmitting, the listener receives any one of these messages (instead of receiving noise). The choice of the message that the listener receives can be arbitrary. Observe that {\sf Partition($\beta$)} works in $\cd^\star$.

Notice that $\cd^\star$ is strictly stronger than the {\em beeping model}~\cite{cornejo2010deploying}, 
which is defined as follows. In each round a vertex can either beep, listen or remain idle. Beeping and idle vertices receive no feedback, and listening vertices can differentiate between (i) the case where at least one of its neighbors are beeping, and (ii) the case where none of its neighbors are beeping. 

\paragraph{Simulation.} Consider one round of $\cd^\star$ on the {\em cluster graph} (the graph resulting from contracting each cluster into a vertex).
Let $\mathcal{S}$ be the set of all clusters that are transmitting, and let $\mathcal{R}$ be the set of all clusters that are listening.
This round can be simulated in the underlying graph $G$ by the following three operations: (i) {\sf Down-cast} allows the center of each cluster $C \in \mathcal{S}$ to broadcast a message to the entire cluster; (ii)  {\sf All-cast} allows messages to be transmitted between the clusters; (iii) {\sf Up-cast} allows the center of each cluster $C \in \mathcal{R}$ to obtain one message sent to the cluster (if there is any). Recall that $\mathcal{D}$ is an upper bound for the number of layers.
\begin{itemize}
\item {\sf Down-cast.} For each $C \in \mathcal{S}$, the center $r$ of $C$ generates some message $m$, and the goal is to let all vertices in $C$ know $m$.
This can be done by transmitting the message layer by layer. The algorithm is as follows. For $i = 0, \ldots, \mathcal{D}-2$, suppose all layer-$i$ vertices have received the message, and then execute {\sf Downward transmission} to let all layer-$(i+1)$ vertices to receive the message. This operation requires $O(\mathcal{D}\mathcal{C}\log^3 n)$ time and $O(\mathcal{C}\log^3 n)$ energy.

\item {\sf All-cast.} Let $S$ be the set of all vertices that belong to a cluster in $\mathcal{S}$, and let $R$ be the set of all vertices that belong to a cluster in $\mathcal{R}$. Each $v \in S$ has a message to transmit, and the goal is to let each $u \in R$ such that $N(u) \cap S \neq \emptyset$ to receive some message.
    This can be solved in a way analogous to Lemma~\ref{lemma:cluster-transmit}. This operation requires $O(\mathcal{C}\log^3 n)$ time and $O(\mathcal{C}\log^3 n)$ energy.

\item {\sf Up-cast.} For each $C \in \mathcal{R}$, some vertices in a cluster $C$ hold a message, and the goal is to let the center know any one of them, if at least one exists.
The algorithm is similar to {\sf Down-cast}. For $i = \mathcal{D}-1, \ldots, 1$, run {\sf Upward transmission} to let layer-$(i-1)$ vertices to receive messages from layer-$i$ vertices. This operation requires $O(\mathcal{D}\mathcal{C}\log^3 n)$ time and $O(\mathcal{C}\log^3 n)$ energy.
\end{itemize}

\begin{lemma}\label{lemma:simul-beeping}
In the $\nocd$ model, we can simulate any $\cd^\star$  algorithm on  the cluster graph, where each round of the $\cd^\star$  
algorithm is simulated in $O(\mathcal{D}\mathcal{C}\log^3 n)$ time using $O(\mathcal{C}\log^3 n)$ energy.
\end{lemma}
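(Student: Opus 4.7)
The plan is to verify that the three-phase simulation spelled out just before the lemma --- {\sf Down-cast}, {\sf All-cast}, {\sf Up-cast} --- correctly implements one round of $\cd^\star$ on the cluster graph within the claimed budgets. Correctness amounts to two things: (i) the messages generated at the centers of clusters in $\mathcal{S}$ reach all vertices of those clusters before the inter-cluster phase; and (ii) every cluster $C' \in \mathcal{R}$ adjacent in the cluster graph to some $C \in \mathcal{S}$ ends up with the center of $C'$ holding some message originating at an $\mathcal{S}$-neighbor of $C'$. Summing the per-phase costs then yields the claimed $O(\mathcal{D}\mathcal{C}\log^3 n)$ time and $O(\mathcal{C}\log^3 n)$ energy.

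For {\sf Down-cast}, I would argue by induction on the layer index $i = 0, 1, \ldots, \mathcal{D}-2$: assuming every layer-$i$ vertex of every $C\in\mathcal{S}$ holds its cluster's message, one invocation of {\sf Downward transmission} (Lemma~\ref{lemma:cluster-transmit}) extends the invariant to layer $i+1$, with high probability. Since that subroutine takes $O(\mathcal{C}\log^3 n)$ time and energy, and is called $O(\mathcal{D})$ times, the cumulative time is $O(\mathcal{D}\mathcal{C}\log^3 n)$. The key observation for energy is that each vertex $v$ is only involved in the two calls corresponding to its own layer and its children's layer, so its energy charge stays at $O(\mathcal{C}\log^3 n)$ despite the $\mathcal{D}$-long loop. {\sf Up-cast} is the mirror image: iterate {\sf Upward transmission} from layer $\mathcal{D}-1$ down to $1$ to funnel, at each layer, at least one of the messages held in a cluster $C \in \mathcal{R}$ one step closer to its center; the same inductive argument and the same energy accounting apply.

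For {\sf All-cast}, I would invoke essentially the argument from the proof of Lemma~\ref{lemma:cluster-transmit}: repeat $O(\mathcal{C}\log n)$ times the following, using each cluster's shared random string so that all vertices of the cluster make consistent decisions --- each $C \in \mathcal{S}$ independently joins the candidate set $S$ with probability $1/\mathcal{C}$, and we run $\sr$ with this $S$ and $R$ equal to the vertices of clusters in $\mathcal{R}$. Fix any receiver $u$ whose transmitting neighborhood spans $k \le \mathcal{C}$ distinct clusters; by a standard coupon-collector argument, with high probability there is at least one iteration in which exactly one of those $k$ clusters is sampled, and in that iteration $\sr$ delivers to $u$ a message from a vertex in the sampled cluster (tagging every message with its cluster id lets $u$ recognize valid deliveries). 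Since $\sr$ costs $O(\log\Delta\log n)$ time and energy per call and we use $O(\mathcal{C}\log n)$ calls, the phase fits in $O(\mathcal{C}\log^3 n)$ time and energy.

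The main obstacle I expect is tightening the energy accounting in {\sf Down-cast} and {\sf Up-cast}: the loop runs $\mathcal{D}$ times, so naively each vertex's energy would blow up by a factor of $\mathcal{D}$, and I must carefully point to the $O(1)$ bound on the number of iterations in which any fixed vertex is either a sender or a receiver inside {\sf Downward/Upward transmission}. A secondary subtlety is that $\cd^\star$ permits the receiver to obtain an arbitrary one of the colliding messages, so I do not need to reproduce $\cd^\star$'s full nondeterminism --- it suffices that {\sf All-cast} delivers \emph{some} valid $\mathcal{S}$-cluster message to every eligible receiver, which is exactly what the sampled $\sr$ achieves.
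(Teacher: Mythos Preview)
Your proposal is correct and follows exactly the paper's own argument: the lemma is simply a summary of the {\sf Down-cast}/{\sf All-cast}/{\sf Up-cast} description given in Section~\ref{subsect:beep}, with the energy bound for {\sf Down-cast} and {\sf Up-cast} coming from the observation that a vertex at a fixed layer participates in only $O(1)$ of the $\mathcal{D}$ calls to {\sf Downward/Upward transmission}. (Minor wording slip: a layer-$\ell$ vertex is active in the iterations $i=\ell-1$ and $i=\ell$, i.e., its parent's layer and its own, not its children's --- but the accounting is unaffected.)
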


\subsection{Maintaining Good Labeling} \label{subsect:maintain}

In this part, we show the details of how we can maintain the good labeling $\mathcal{L}$ as well as other information, 
such as the cluster id $\CID(v)$ and shared random string $R(v)$, 
while some clusters are being merged. 

\medskip

Let $W$ denote the set of all vertices that successfully received ``merging requests'' at some time during an execution of {\sf Partition($\beta$)} (more precisely, at an {\sf All-cast} operation in Section~\ref{subsect:beep}).
We assume that the merging request sent from a vertex $v \in C'$ contains the following information: $\ID(v)$, $\CID(v)$, $R(v)$, and $\mathcal{L}(v)$.  For each $u \in W$, let $\phi(u)$ be the vertex in a neighboring cluster that successfully sent the merging request to $u$.

Each cluster $C$ with $C \cap W \neq \emptyset$ needs to select one vertex $v^\star \in C \cap W$,  re-root the cluster $C$ at $v^\star$, 
and assign a new good labeling $\mathcal{L}'$ to all $C$-nodes.
This can be done by applying {\sf Up-cast} and {\sf Down-cast} (in Section~\ref{subsect:beep}) 
on the old labeling $\mathcal{L}$. 
That is, this task can be accomplished in $O(\mathcal{D}\mathcal{C}\log^3 n)$ time 
using $O(\mathcal{C}\log^3 n)$ energy.  The algorithm is as follows.

\paragraph{Step 1: Electing $v^\star$.} Perform an {\sf Up-cast} to let the cluster center of $C$ elect a vertex $v^\star \in C \cap W$, and then perform a {\sf Down-cast} to let all vertices in $C$ know.

\paragraph{Step 2: Update Labeling $\mathcal{L}'$.} Initially, all vertices $v \in C$ have  $\mathcal{L}'(v) = \bot$, except that $\mathcal{L}'(v^\star)$ is initialized as the layer of $\phi(v^\star)$ plus 1. The $\mathcal{L}'$-label of all vertices in $C$ can be computed by {\sf Up-cast} and {\sf Down-cast} as follows.
\begin{itemize}
\item Perform an {\sf Up-cast}. The message of $v^\star$ is its $\mathcal{L}'$-label. Each vertex $v$ that receives a message $m$ sets $\mathcal{L}'(v) = m+1$, and it will transmit the message $m+1$ during the next {\sf Upward transmission}.
\item Perform a {\sf Down-cast}. For each vertex $v$ that has obtained a $\mathcal{L}'$-label, its message is its $\mathcal{L}'$-label (and it will not reset its $\mathcal{L}'$-label). Each vertex $v$ that has not obtained a $\mathcal{L}'$-label sets $\mathcal{L}'(v) = m+1$, where $m$ is the message it receives.
\end{itemize}

Notice that information about cluster id and shared random string can also be transmitted through the above procedure (Step~2).

\newcommand{\ind}{\operatorname{Ind}}

\newcommand{\Active}{\mathsf{Active}}
\newcommand{\Wait}{\mathsf{Wait}}
\newcommand{\Halt}{\mathsf{Halt}}

\section{Improved Randomized Algorithms for the $\cd$ Model}\label{sect:improve-cd}

In this section we show that the energy complexity for the randomized $\cd$ model in Section~\ref{section:rand} can be further improved to nearly match the lower 
bound, up to a small $O(\log \log \Delta / \log \log \log \Delta)$ factor.  The price for this energy efficiency is a super-linear running time.
The key idea is to exploit the following properties of Lemma~\ref{lemma:1-mod}: (i) if each $v \in S$ is adjacent to at most one vertex in $R$, then $\sr$ consumes $O(\log \log \Delta)$ energy {\em in expectation}; and (ii) if $f$ is {\em sufficiently large} (e.g., $f = 1/\log \Delta$), then $\sr$ also consumes $O(\log \log \Delta)$ energy.

The main algorithm of this section still follows the high-level structure of the algorithm in Section~\ref{section:rand}.
That is, we begin with the trivial all-0 good labeling, and then repeat the procedure of ``computing a new labeling $\mathcal{L}'$ from old labeling $\mathcal{L}$'' for several iterations.
In the algorithm in Section~\ref{section:rand}, we do not explicitly maintain a fixed cluster structure, and it is possible that a ``cluster'' is eaten by multiple adjacent 
``clusters'' during {\sf Up-cast}. Here we use a certain vertex coloring to fix the clustering and a spanning tree of each cluster.
We will only allow a cluster to be merged, in its entirety, into exactly one adjacent cluster. 
The vertex coloring is described in Section~\ref{subsect:coloring}.
The procedure of merging clusters is described in Section~\ref{subsect:merge}.

\subsection{Vertex Colorings}\label{subsect:coloring}
Let $c\ge 1$ and $\xi > 0$ be two parameters to be determined.
Consider $c$ random $n^{\xi} \Delta$-coloring of vertices.
We write $\Color_i(v)$ to denote the color of $v$ in the $i$th coloring.
We write $\ID(v)$ to denote $(\Color_1(v), \ldots, \Color_c(v))$.
For each ordered pair of neighboring vertices $(u,v)$, we write $\ind(u,v)$ to denote the smallest index $i$ such that $\Color_i(v)$ is different from $\Color_i(w)$ for all $w \in N(u) \setminus \{v\}$. The probability that $\ind(u,v)$ does not exist is at most $n^{-c\xi}$. By a union bound, the probability that $\ind(u,v)$ is well-defined for all ordered pairs of neighboring vertices $(u,v)$ is at least $1 - n^{2-c\xi}$. We select $c = O(1 / \xi)$ to be large enough such that $n^{2-c\xi} = 1/\poly(n)$ is negligible.

\paragraph{Cluster Structure.} Recall that a good labeling represents a clustering of the graph, and each cluster $C$ is a rooted tree $T$, where the root $r$ is the unique layer-0 vertex in the cluster $C$. In what follows, we devise an implementation of such a cluster structure, which enables a more energy-efficient implementation of the primitives 
{\sf Upward transmission} and {\sf Downward transmission}. 
For each $i>0$, we assume that each layer-$i$ vertex $u$ has a designated layer-$(i-1)$ parent $v \in N(u)$, and $u$ knows $\ID(v)$.

\begin{lemma}\label{lem:ind}
Consider the task whose goal is to have each $u$ learn $\ind(u,v)$, where $v$ is the parent of $u$.
In both the $\cd$ and $\nocd$ models, 
there is a deterministic algorithm that takes $O(n^{\xi} \Delta / \xi)$ time and $O(1/ \xi)$ energy for this task.
\end{lemma}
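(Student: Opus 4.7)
The plan is to have each vertex $u$ determine $\ind(u,v)$ by a simple ``cycle through all (coloring, color) pairs'' schedule: process the $c = O(1/\xi)$ colorings in order, and within coloring $i$ go through all $n^\xi\Delta$ possible colors. In time slot $(i,j)$, every vertex $w$ with $\Color_i(w)=j$ transmits a fixed signal, and every vertex $u$ that has not yet learned $\ind(u,v)$ and whose parent $v$ has $\Color_i(v)=j$ listens. Since $u$ was given $\ID(v)=(\Color_1(v),\ldots,\Color_c(v))$ at the outset, $u$ knows exactly which one slot per coloring it must listen in, and similarly each $w$ knows exactly when to transmit. This immediately gives the time bound $c\cdot n^\xi\Delta=O(n^\xi\Delta/\xi)$ and the transmit energy bound $c=O(1/\xi)$ (each vertex transmits once per coloring, in the slot of its own color).

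For correctness I would argue as follows. Fix a pair $(u,v)$ with $v$ the parent of $u$ and consider coloring $i$. At slot $(i,\Color_i(v))$, vertex $v$ is guaranteed to transmit, so the channel at $u$ is never truly silent in the sense of no neighbor transmitting. Hence in both $\cd$ and $\nocd$, $u$ can distinguish the two relevant cases: if no $w\in N(u)\setminus\{v\}$ has $\Color_i(w)=\Color_i(v)$, then $v$ is the unique transmitting neighbor and $u$ receives $v$'s signal cleanly; otherwise at least two neighbors transmit simultaneously, which $u$ observes as noise in $\cd$ and as silence in $\nocd$. In either model $u$ distinguishes ``clean reception from $v$'' from ``collision,'' and by the definition of $\ind(u,v)$, the smallest coloring index $i$ for which the former occurs is exactly $\ind(u,v)$.

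For the listening energy, note that once $u$ has heard a clean message in some coloring $i$, it can fix $\ind(u,v):=i$ and refrain from listening in subsequent colorings (it still transmits in its own color slots, but those are already counted). So $u$ listens at most $c=O(1/\xi)$ times. Combined with the transmit bound, the per-vertex energy is $O(1/\xi)$.

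The main conceptual point, and the only thing I would be careful about, is why the distinction ``unique $v$'' vs.\ ``collision'' is detectable in $\nocd$: this is precisely the nontrivial feature exploited by the schedule, namely that because $v$ is guaranteed to participate in slot $(i,\Color_i(v))$, ``silence'' at $u$ in $\nocd$ unambiguously means a collision rather than an empty channel. Everything else is a routine counting argument. Finally, the algorithm is deterministic given the colorings (the randomness in the lemma lives only in the generation of the $\Color_i$'s earlier in Section~\ref{subsect:coloring}), which matches the statement.
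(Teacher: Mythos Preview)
Your proposal is correct and follows essentially the same approach as the paper: iterate over the $c=O(1/\xi)$ colorings and all $n^{\xi}\Delta$ colors, have each vertex speak in the slot of its own color and listen in the slot of its parent's color, and take the first index at which a clean message is heard. Your write-up is in fact more careful than the paper's, since you explicitly justify why the distinction is detectable in $\nocd$ (the parent is guaranteed to transmit, so silence unambiguously signals a collision).
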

\begin{proof}
For $j = 1$ to $c$, and for $k = 1$ to $n^{\xi} \Delta$, do the following.
Each vertex $v'$ with $\Color_j(v') = k$ speaks; each vertex $u$ whose parent $v$ has $\Color_j(v) = k$ listens.
Then $\ind(u,v)$ is the smallest index $j$ such that $u$ successfully hears a message.
\end{proof}

In what follows, we assume that each $u$ already knows $\ind(u,\text{parent}(u))$.
We show how to use this information to efficiently relaying messages within a cluster.
Consider the following two tasks.

\begin{itemize}
\item {\sf Upward transmission.} Let $i > 0$ and $V'$ be a subset of layer-$i$ vertices that have some messages to send. The goal is to have each layer-$(i-1)$ vertex $v$ that has a child in $V'$ to receive some message from one of its child (with high probability). This task can be done in time $O(\log \Delta \log n \cdot (n^{\xi} \Delta / \xi))$ with energy cost $O(\log \log \Delta + (1/\xi)) + X$, where $X \sim \operatorname{Exponential}(\lambda)$, for some $\lambda = O(1)$. This is done by applying Lemma~\ref{lemma:1-mod}, as follows.

    For $j = 1$ to $c$, and for $k = 1$ to $n^{\xi} \Delta$, do $\sr$: $S$ is the set of all vertices $u$ in $V'$ such that (i) its parent $v$ has $\Color_j(v) = k$, and (ii) $\ind(u,v) = j$;  $R$ is the set of all layer-$(i-1)$ vertices $v'$  with  $\Color_j(v') = k$ that have yet to receive a message from a child.\footnote{Suppose that $v$ is the parent of  $u \in V'$, and we have   $\ind(u,v) = j$ and $\Color_j(v) = k$. For any $w \in N(v) \cap V'$ that is not a child of $v$, we cannot simultaneously have $\ind(w,\text{parent}(w)) = j$ and $\Color_j(\text{parent}(w)) = k$ due to the definition of $\ind(\cdot,\cdot)$ (since $v \in N(w)$ and $\Color_j(v) = k$). Thus, our choice of $S$ and $R$ ensures that the communication occurs only between a parent and its children.} Notice that in one round we can let each vertex $v' \in R$ check whether it has a neighbor in $S$; and if there is no such neighbor, then $v'$ can skip this $\sr$. Thus, the energy cost for each vertex is $c = O(1/\xi)$ plus the energy cost for one $\sr$.

\item {\sf Downward transmission.} Let $i \geq 0$ and $V'$ be a subset of layer-$i$ vertices that have some messages to send. The goal is to have each $v \in V'$ deliver its message to \emph{all} its children, with zero probability of failure. This task can be done in time $O(n^{\xi} \Delta / \xi)$ with energy cost $O( 1 / \xi)$, as follows.

    For $j = 1$ to $c$, and for $k = 1$ to $n^{\xi} \Delta$, do the following. Each vertex $v' \in V'$ with $\Color_j(v') = k$ sends its message; each layer-$(i+1)$ vertex $u$ listens to the channel if (i) its parent $v$ has $\Color_j(v) = k$, and (ii) $\ind(u,v) = j$.
\end{itemize}

In the $\cd$ model, a vertex is said to be {\em relevant} to an {\sf Upward transmission} task if either (i) $v \in V'$, or (ii) $v$ has a child in $V'$. Notice that a vertex irrelevant to the task does not need to accomplish anything.
We can lower the energy cost of irrelevant layer-$(i-1)$ vertices to just $O( 1 / \xi)$ (from $O(\log \log \Delta + (1/\xi)) + X$).
This is due to the observation that in $O( 1 / \xi)$ energy and $O(n^{\xi} \Delta / \xi)$ time, each layer-$(i-1)$ vertex can know whether it is relevant to the current  {\sf Upward transmission}  task (i.e., whether it has a child in $V'$).

\subsection{Algorithm for Merging Clusters}\label{subsect:merge}

Let $p,f \in(0,1)$ and $s\geq1$ be three parameters to be chosen.
Given a good labeling $\mathcal{L}$ and a clustering consistent with $\mathcal{L}$, 
the goal is to obtain a new good labeling $\mathcal{L}'$ and clustering with fewer clusters.
Consider the following procedure on the {\em cluster graph}. During the execution of the procedure, a cluster $C$ is in one of three possible states: 
$\Active$, $\Wait$, and $\Halt$.
\begin{enumerate}
\item  Each cluster $C$ initiates a group, which consists of only one member, namely $C$.
\item  \label{step:xx} Initially, each cluster $C$ is $\Active$ with probability $p$, and the remaining ones are $\Wait$.
\item  \label{step:x} Repeat for $s$ iterations:
\begin{enumerate}
\item Each $\Active$ cluster $C$ broadcasts a ``merging request'' to all its neighboring $\Wait$ clusters, and then $C$ resets its status to $\Halt$.
\item For each $\Wait$ cluster $C$ that has some neighbors sending merging requests, with probability at least $1-f$, $C$ successfully receives one request $\gamma$, the group of $C$ is merged into the group of the cluster that sends the request $\gamma$, and then $C$ resets its status to $\Active$.
\end{enumerate}
\item The groups of clusters becomes the new clustering.
\end{enumerate}
In what follows, we show how to implement the above procedure in the underlying graph $G$ such that by the end we obtain a new good labeling $\mathcal{L}'$ that represents the new clustering. We let the root of the cluster be the one who makes all decisions for the cluster. We can use $n-1$ iterations of {\sf Upward transmission} to gather information (e.g., merging requests) from other vertices to the root, and use $n-1$ iterations of {\sf Downward transmission} to broadcast information (e.g., status of the cluster) from the root to other vertices. At the beginning, we first execute the algorithm of Lemma~\ref{lem:ind} to have each $u$ learn $\ind(u,\text{parent}(u))$.

\paragraph{Initializing the Labeling $\mathcal{L}'$.} For each cluster $C$ that is $\Active$ at Step~\ref{step:xx}, all vertices $v \in C$ initialize $\mathcal{L}'(v) = \mathcal{L}(v)$. This can be done using $n-1$ iterations of {\sf Downward transmission}. All remaining vertices $u$ initialize $\mathcal{L}'(u) = \bot$.

\paragraph{Merging Requests.} Suppose that each vertex in a cluster $C$ knows the status of $C$.
We use Lemma~\ref{lemma:1-mod} to implement the transmission of merging requests. Do $\sr$ (with success probability $1-f$) with $S$ being the set of all vertices that are in an $\Active$ cluster, $R$ being the set of all vertices that are in an $\Wait$ cluster. The content of the request sent from $v \in S$ consists of (i) $\ID(v)$ and (ii) the layer of $v$ in $\mathcal{L}'$.

\paragraph{Maintaining a Good Labeling $\mathcal{L}'$.} We need to maintain the new good labeling $\mathcal{L}'$ and its associated parent-child 
relation while simulating Step~\ref{step:x} on the underlying graph $G$.
Consider the $i$th iteration of Step~\ref{step:x}.
Let $W$ denote the set of all vertices that successfully received merging requests in this iteration.
For each $\Wait$ cluster $C$ such that $C \cap W \neq \emptyset$, it needs to select one vertex $v^\star \in C \cap W$, and re-root the tree $T$ of the cluster $C$ at $v^\star$.
This can be done using the algorithm of Section~\ref{subsect:maintain}, and it takes $O(n)$ calls to 
{\sf Upward transmission} and {\sf Downward transmission}, but each vertex in $C$ participates in $O(1)$ of them.

Notice that for a vertex $v$ in a $\Wait$ cluster $C$ such that $C \cap W = \emptyset$, we still run the algorithm of Section~\ref{subsect:maintain}, but no merging occurs. In this case, $v$ is irrelevant to all {\sf Upward transmission} used in the algorithm of Section~\ref{subsect:maintain}, since no one is attempted to transmit a message in $C$.

\ignore{
For each $u \in W$, let $\phi(u)$ be the vertex in a neighboring cluster that successfully sent the merging request to $u$.  For each $u \in C \setminus X$, let $\phi(u) = \bot$ be initially undefined. Intuitively, $\phi(u)$ represents a potential parent of $u$ in the new labeling $\mathcal{L}'$. For each $u \in W$, we initially set $\mathcal{L}'(u) = \mathcal{L}'(\phi(u))+1$. The values $\phi(u)$ and $\mathcal{L}'(u)$ for each $u \in C$ may be reset later on. We do {\sf Up-cast} and then {\sf Down-cast} to re-root the tree $T$.
\begin{itemize}
\item {\sf Up-cast.} For $i = n-2, \ldots, 0$, do the following task for each layer-$i$ vertex $u \in C$. If we already have $u\in W$, then nothing needs to be done for $u$. Otherwise, $u$ has to figure out if it has a child $w$ such that $\phi(w)$ is defined. If such a child $w$ exists, then $u$ sets $\phi(u)=w$ and $\mathcal{L}'(u) = \mathcal{L}'(w)+1$. This step can be implemented by $n-1$ {\sf Upward transmission}.
\item {\sf Down-cast.} For $i = 0, \ldots, n-2$, do the following task for each layer-$i$ vertex $u \in C$. The vertex $u$ broadcasts $\phi(u)$ and $\mathcal{L}'(u)$ to all its children. Each child $w$ of $u$ does the following updates.
    For the case of $\phi(u) = w$, the vertex $w$ sets its new parent to be $\phi(w)$.
    For the case of $\phi(u) \neq w$, the parent of $w$ remains to be $u$, and then $w$ (re-)sets $\phi(w)=\bot$ and $\mathcal{L}'(w) = \mathcal{L}'(u)+1$. This step can be implemented by $n-1$ {\sf Downward transmission}.
\end{itemize}
}

\paragraph{Finalizing the Labeling $\mathcal{L}'$.} After all $s$ iterations of Step~\ref{step:x}, all clusters except the ones that have not  participated in any merging operation have obtained valid $\mathcal{L}'$-labeling. For each vertex $v$ in the remaining clusters (i.e., those in state $\Wait$), 
we set $\mathcal{L}'(v) = \mathcal{L}(v)$. This gives us a desired good labeling $\mathcal{L}'$.

\paragraph{Analysis.}
We analyze the time and energy complexities in the $\cd$ model.
We make $O(ns)$ calls to {\sf Upward transmission} and {\sf Downward transmission}, but each vertex participates in only $O(s)$ of them. 
Moreover, each vertex is only {\em relevant} to $O(1)$ calls to {\sf Upward transmission}.
Therefore, the total time for {\sf Upward transmission} and {\sf Downward transmission} is $O(ns \log \Delta \log n \cdot (n^{\xi} \Delta / \xi))$, and the total energy for  {\sf Upward transmission} and {\sf Downward transmission} is $O((s/\xi) + \log \log \Delta) + Y$, where $Y$ is a summation of $O(s)$ variables drawn i.i.d. from 
$\operatorname{Exponential}(\lambda)$, for some $\lambda = O(1)$.
The transmission of merging requests is invoked $O(s)$ times. 
By Lemma~\ref{lemma:1-mod}, each of these transmissions takes energy 
$O(\log \log \Delta + \log 1/f)$ and time $O(\log \Delta (\log \log \Delta + \log 1/f))$.
To summarize, the total runtime is
$$O(ns \log \Delta \log n \cdot (n^{\xi} \Delta / \xi) + s \log \Delta (\log \log \Delta + \log\fr{1}{f})),$$
and the total energy is
$$O(s \cdot (\fr{1}{\xi} + \log \fr{1}{f}) + \log \log \Delta) + Y.$$

\subsection{Main Theorem}

Let $w$ denote the number of layer-0 vertices in $\mathcal{L}$.
Observe that each layer-0 vertex in $\mathcal{L}$ remains layer-0 in $\mathcal{L}'$ with probability at most $p' \bydef p+(1-p)^{\min\{s+1,w\}}+ f \cdot \min\{s,w-1\}$ during the procedure of merging clusters. The term $f \cdot \min\{s,w-1\}$ is an upper bound for the probability that at least one transmission of merging request fails among (at most) $\min\{s,w-1\}$ trials. 



\begin{theorem}
There is an $O\left(\frac{\log n (\log \log \Delta + (1/\xi))}{ \log \log \log \Delta}\right)$-energy and $O\left({n^{1+\xi} \Delta}\right)$-time randomized algorithm that solves $\broadcast$ with high probability in the $\cd$ model, for any $\xi = \omega(\log \log n / \log n)$.
\end{theorem}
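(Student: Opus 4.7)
The plan is to drive the broadcast algorithm by iterating the cluster-merging procedure of Section~\ref{subsect:merge}, starting from the trivial all-zero good labeling (every vertex its own cluster), until only a single layer-0 vertex remains; then invoke Lemma~\ref{lem:reduce} with $d=0$ to propagate the source's message to all nodes. The core task is therefore to pick the three parameters $p$, $s$, $f$ of the merging procedure so that each invocation shrinks the number of layer-0 vertices by a multiplicative factor close to $\log\log\Delta$, while the per-invocation energy budget matches the target.

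Concretely, I would set $p$, $s$, $f$ so that each of the three contributions to $p' = p+(1-p)^{\min\{s+1,w\}}+f\cdot\min\{s,w-1\}$ is $O(1/\log\log\Delta)$: in particular $p=\Theta(1/\log\log\Delta)$, $s=\Theta(\log\log\Delta \cdot \log\log\log\Delta)$ so that $(1-p)^s \leq 1/\log\log\Delta$, and $f$ a sufficiently small inverse-polylog of $\Delta$ so that $fs \leq 1/\log\log\Delta$. Under these choices a standard concentration argument (the shrinkage across iterations dominates the binomial deviation, since events for different layer-0 vertices are only weakly correlated through disjoint balls in $G_{\mathcal{L}}$) shows that after $K=O(\log n/\log\log\log\Delta)$ invocations the layer-0 count drops from $n$ to $1$ with high probability. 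At that point, Lemma~\ref{lem:reduce} in the $\cd$ model with $d=0$ and $L\leq n$ costs an additional $O(\log n)$ energy and is negligible.

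For the complexity bookkeeping I would plug the chosen parameters into the per-invocation bounds from Section~\ref{subsect:merge}. Per call, the merging procedure costs $O\!\left(s(1/\xi+\log 1/f)+\log\log\Delta\right)+Y$ energy, where $Y$ is the sum of $O(s)$ i.i.d.\ Exp$(\lambda)$ random variables, and $O\!\left(ns\log\Delta\log n\cdot n^\xi\Delta/\xi+s\log\Delta(\log\log\Delta+\log 1/f)\right)$ time. Summed over $K$ invocations, the $K$ independent $Y$-sums concentrate on their mean $\Theta(Ks)$ by a Chernoff bound for sums of sub-exponentials, yielding total energy $\tilde O\!\bigl(K(\log\log\Delta+1/\xi)\bigr)=O\!\bigl(\log n(\log\log\Delta+1/\xi)/\log\log\log\Delta\bigr)$. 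The total time is dominated by the $Kns\log\Delta\log n\cdot n^\xi\Delta/\xi$ term; the hypothesis $\xi=\omega(\log\log n/\log n)$ ensures $n^\xi=\omega(\log n)$, so all polylogarithmic factors (including $K$ and $s$) are absorbed into an extra $n^{o(\xi)}$, yielding total runtime $O(n^{1+\xi}\Delta)$ after replacing $\xi$ by a slightly smaller constant.

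The main obstacle is the joint parameter tuning: the three contributions to $p'$ must be simultaneously $O(1/\log\log\Delta)$, yet $s$ cannot be so large that $s/\xi$ or $s\log 1/f$ inflates the per-invocation energy beyond $O(\log\log\Delta+1/\xi)$ once multiplied by $K$. Verifying that the choice above satisfies both sides cleanly, and in particular handling the high-probability tail of the sum $\sum_{i=1}^K Y_i$ of $O(Ks)$ exponentials, is the delicate step; everything else (the shrinkage recursion, the invocation of Lemma~\ref{lem:reduce}, and the absorption of polylogs into $n^\xi$) is routine given the machinery already developed in Sections~\ref{sect:bb}, \ref{section:rand}, and~\ref{subsect:merge}.
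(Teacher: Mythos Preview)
Your high-level plan (iterate the merging procedure of Section~\ref{subsect:merge} from the all-zero labeling, then invoke Lemma~\ref{lem:reduce}) is the paper's, but your parameter tuning and your termination point both contain gaps. With $p=\Theta(1/\log\log\Delta)$ you are forced to take $s=\Theta(\log\log\Delta\cdot\log\log\log\Delta)$ to make $(1-p)^s=O(1/\log\log\Delta)$. But then the per-invocation energy term $s(1/\xi+\log 1/f)$ is $\Theta(\log\log\Delta\cdot\log\log\log\Delta\,(1/\xi+\log\log\log\Delta))$, and after multiplying by $K=\Theta(\log n/\log\log\log\Delta)$ you get $\Theta(\log n\cdot\log\log\Delta\,(1/\xi+\log\log\log\Delta))$, overshooting the theorem's bound. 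The exponential tail suffers the same fate: $\sum Y_i$ has $\Theta(Ks)=\Theta(\log n\cdot\log\log\Delta)$ summands and hence concentrates around $\Theta(\log n\cdot\log\log\Delta)$, not $O(\log n)$. The paper's fix is to take $p=\log^{-1/2}\log\Delta$ (larger), so that already $s=\log\log\Delta$ suffices; the shrinkage factor degrades to $p'=O(\log^{-1/2}\log\Delta)$, but since $\log(1/p')=\Theta(\log\log\log\Delta)$ either way, $K$ is unchanged while $s$ stays small.

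Second, your plan to iterate until a single layer-$0$ vertex remains does not follow from the shrinkage bound you cite: the estimate on $p'$ uses $(1-p)^{\min\{s+1,w\}}$, and once $w$ drops to a constant, $(1-p)^{w}\approx 1-\Theta(w/\log\log\Delta)$ is close to $1$, so the process stalls. The paper does not attempt to reach $w=1$; it stops as soon as $w\le\log\log\Delta$ and then applies Lemma~\ref{lem:reduce} with $d=\log\log\Delta$ (not $d=0$), which in the $\cd$ model costs only $O(d+\log n)=O(\log n)$ additional energy.
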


\begin{proof}
Use the parameters $p=\log^{-1/2} \log \Delta, s=\log \log \Delta, f=\log^{-3/2} \log \Delta$.
We have $p' =O(\log^{-1/2} \log \Delta)$. Thus, after $O(\log n / \log \log \log \Delta)$ iterations, we obtain a good labeling $\mathcal{L}^\ast$ where the number of layer-0 vertices is at most $\log \log \Delta$, with high probability.  After that, we use Lemma~\ref{lem:reduce} to solve $\broadcast$.

The runtime is $O(\log n /  \log \log \log \Delta) \cdot O(ns \log \Delta \log n \cdot (n^{\xi} \Delta / \xi) + s \log \Delta (\log \log \Delta + \log 1/f))
 = O(\Delta n^{1+\xi}\cdot \frac{\log \Delta \log^2 n \log \log \Delta}{ \xi   \log \log \log \Delta })$.
The energy is $O(\frac{\log n (\log \log \Delta + (1/\xi))}{ \log \log \log \Delta}) + Z$, where $Z$ is a summation of $O( \log n /  \log \log \log \Delta)$ variables drawn i.i.d.~from
$\operatorname{Exponential}(\lambda)$, for some $\lambda = O(1)$. 
With high probability, $Z = O(\log n)$. 
Thus, the energy cost is $O(\frac{\log n (\log \log \Delta + (1/\xi))}{ \log \log \log \Delta})$.

The theorem follows from a change of variable (from $\xi$ to $\xi'$) such that $n^{\xi'} = n^{\xi} \cdot \frac{\log \Delta \log^2 n \log \log \Delta}{ \xi   \log \log \log \Delta }$. 
Notice that $\xi' = \xi + O(\log \log n / \log n)$, and so $1/\xi' = \Theta(1/\xi)$.
\end{proof}

Setting $\xi = O(1 / \log \log \Delta)$, we obtain a randomized algorithm using $O\left(\frac{\log n \log \log \Delta}{\log \log \log \Delta}\right)$ energy and 
$O\left(\Delta n^{1+O(1/\log \log \Delta)}\right)$ time, which nearly matches the $\Omega(\log n)$ energy lower bound.

\algnewcommand{\LineComment}[1]{\Statex  #1}

\section{Algorithm for the Path} \label{section:path}

In this section, we examine the special case where the underlying graph
is a path of $n$ vertices.  In this case, we will see that a broadcast can be
achieved with only a constant factor overhead in time, and only $O(\log n)$
energy per vertex.  It is clear that each of these bounds is within a
constant factor of the best possible, at least if we consider the
energy usage of the most unlucky vertex, in either collision model.
This provides some hope that our results for general
graphs can also be improved further, until they match the known lower
bounds, which we conjecture to be tight.


In light of Theorem~\ref{thm:simlocal} we
will assume we are working in the full duplex $\LOCAL$ model.  Notice that, since the path  has
maximum degree 2, the overhead for this reduction is only a constant factor.  Thus all
of our results will also hold in the $\cd$ and $\nocd$ models as well.

\subsection{The Protocol}

Pseudocode for the path algorithm is given in Algorithm~\ref{alg:path}.
The algorithm is stated for the model in which each vertex knows which
of its neighbors is ``upstream'' (closer to the message source), and
which is ``downstream.''  In the general algorithm for which this
information is not known, each vertex executes this algorithm twice, in
parallel, once with each of its two neighbors in the upstream and downstream role.
Since we are working in the $\LOCAL$ model, there is essentially no
overhead for doing this, except that the energy cost will be double.

It will be convenient to assume that $n$ is a power of 2. This is without loss of generality, since if $n$ is not a power of two, the vertices can simulate the algorithm corresponding to the path of $2^{\lceil \log_2 n\rceil}$ vertices, with at most a constant factor increase in time and energy.

In the algorithm, each vertex $v$ independently samples a ``blocking time'' $B_v$ from the following distribution
\[
B_v = \begin{cases}   2^b & \mbox{ with probability } 2^{-b}    \mbox{ for }  1\le b < \log_2 n \\
 n & \mbox{ otherwise }
\end{cases}
\]
After transmitting the value of $B_v$ downstream and receiving $B_w$ from its upstream neighbor in step 1, $v$ turns its transmitter off until time $B_v$. Any  synchronization messages it may receive during this time are blocked
from being transmitted further downstream. If the payload message arrives during this time, it is delayed from being transmitted further until time $B_v$. Notice that this does not mean that $v$ is listening to the traffic for $B_v$ steps. Rather, it inductively schedules the sequence of listening times  based on the messages it has heard in previous listening times, the first of these being the $B_w$ it heard on step 1.
Starting at time $B_v$,
$v$  switches gears, and begins immediately forwarding all
messages it receives.  The one special case occurs at time $B_v$:
at this time, if $v$ has already received the payload message, it
forwards it at time $B_v$; otherwise, $v$ calculates the time
remaining until it is due to forward its next message from its
upstream neighbor, and sends a message telling its downstream
neighbor how long it will need to wait.  Notice that $v$'s most recent
message from its upstream neighbor will always contain the
information needed to perform this calculation.

Intuitively, vertices with a large blocking time, $B_v$, are going to
protect the downstream vertices from the synchronization traffic
being sent by upstream vertices.  On the other hand, the larger $B_v$
is, the more of an artificial delay can be caused to the eventual
delivery of the payload message.

Algorithm~\ref{alg:path} describes the algorithm in more detail.
Figure~\ref{fig:path} illustrates a timeline of message traffic in one
direction along the path.
Our results for this algorithm are summarized in the following
theorem.

\begin{theorem}\label{thm:path}
Algorithm~\ref{alg:path} solves $\broadcast$ on an $n$-vertex path 
with worst-case running time $2n$ and expected per-vertex energy
cost $O(\log n)$.
\end{theorem}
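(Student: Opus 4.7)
The plan is to verify the time and energy bounds separately. The time bound is deterministic; the energy bound follows from a propagation-blocking argument that amounts to a record count on the sequence of blocking times.

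For the running time, let $T_{v_i}$ denote the slot in which $v_i$ first receives the payload, with $T_{v_1} = 0$. By the protocol, $v_i$ retransmits the payload at slot $\max(B_{v_i}, T_{v_i} + 1)$---delaying to $B_{v_i}$ if it is still in its blocking phase, or relaying in the next slot otherwise. Thus $T_{v_{i+1}} \le \max(B_{v_i}, T_{v_i} + 1)$, and since $B_{v_i} \le n$ deterministically, induction yields $T_{v_n} \le 2n - 2 \le 2n$.

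For the expected energy of a fixed vertex $v_i$, observe that $v_i$ uses at most $O(1) + 2 X_i$ energy, where $X_i$ counts the distinct messages arriving from upstream: the $O(1)$ covers the initial exchange of blocking times at slot $1$ and the transmission at slot $B_{v_i}$, while each upstream message costs one listen (to receive it) and at most one transmit (to forward it). Classify each incoming message by its originator $v_j$, $j < i$---the vertex that first injected it on the channel either as a sync at slot $B_{v_j}$ or as the payload. I claim the message can reach $v_i$ only if every intermediate $v_k$, $j < k \le i$, has left its blocking phase by the arrival slot, equivalently $B_{v_k} \le B_{v_j} + (k - j)$. This follows inductively from the sync-chain scheduling: $v_k$ is awake exactly at the slots its most recent sync predicts, and it forwards iff its own blocking phase has ended. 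Conditioning on $B_{v_j} = b$ and using independence of the $B_v$'s,
\[
\Pr[v_j\text{'s message reaches }v_i \mid B_{v_j} = b] \le \prod_{s=1}^{i-j} \Pr[B_v \le b + s].
\]
Since $\Pr[B_v \le c] = 1 - 2^{-\lfloor \log_2 c \rfloor}$ for $c < n$, this product decays geometrically in $(i-j)/b$. Summing $\Pr[B_{v_j} = 2^k] = 2^{-k}$ times this bound over $j$ contributes $O(1)$ per value of $k$; totalling over $k = 1, \ldots, \log_2 n$ gives $\mathbb{E}[X_i] = O(\log n)$, with the $B_{v_j} = n$ tail contributing only $O(1)$ because $\Pr[B_{v_j} = n] = 2/n$ and there are at most $n$ such $j$.

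The principal obstacle is the scheduling lemma: verifying carefully that $v_i$ is awake precisely on the slots at which an upstream message is expected, and only those slots, so that the energy count genuinely equals the number of propagated messages. One must track simultaneously the sync chain (which determines listening schedules) and the payload chain (which takes priority and terminates the upstream traffic once it arrives), and handle the corner case in which the payload and a trailing sync arrive at overlapping slots. Once this scheduling invariant is pinned down, the energy bound reduces to the probability computation above.
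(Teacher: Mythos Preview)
Your plan is correct and closely parallels the paper's, but the bookkeeping is organized along a different axis. The paper decomposes the energy count by \emph{arrival time}: it shows that for any fixed $T$, the probability some message arrives at $v$ at time $T$ is at most $O(1/T)$, and then sums the resulting harmonic series over $T\le 2n$. You instead decompose by \emph{originator} $v_j$ and origin time $B_{v_j}=2^k$, arguing that the sum over $j$ is $O(1)$ for each fixed $k$ and then summing over $k\le\log n$. Both routes rest on the same survival estimate: each time a message crosses a dyadic interval $[2^r,2^{r+1})$ it picks up a factor $(1-2^{-r})^{2^r}\le 1/e$, since the blocking times are powers of two.

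One point to correct: the survival product does \emph{not} decay geometrically in $(i-j)/b$. After $d=i-j$ steps from origin time $b=2^k$, the message has crossed roughly $\log_2(1+d/b)$ dyadic intervals, so the survival probability is at most about $e^{-\log_2(1+d/b)}=(1+d/b)^{-\log_2 e}$, i.e.\ polynomial in $d/b$ with exponent $\log_2 e\approx 1.44$. This is weaker than your stated geometric rate, but since the exponent exceeds~$1$, the tail $\sum_{d>2^k} 2^{-k}(d/2^k)^{-1.44}=O(1)$ still converges, and your per-$k$ conclusion stands. The paper's arrival-time decomposition sidesteps this subtlety because the number of octaves between origin time $2^i$ and arrival time $T\approx 2^s$ is exactly $s-i$, yielding the clean bound $e^{i-s}$ and hence $p(T)=O(1/T)$ directly. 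Also watch the off-by-one in your blocking condition: the range should be $j<k<i$ (the message need only be \emph{forwarded} by intermediate vertices, not by $v_i$ itself), though this does not affect the asymptotics.
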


We observe that the above upper bound on expected energy cost is within
a constant factor of the worst-vertex energy cost lower bound of
Theorem~\ref{thm:lb-path}.

\begin{algorithm}[H]
	\caption{More Efficient Broadcast on a Path}
	\label{alg:path}
	\begin{algorithmic}[1]
		\LineComment{{\bf Initialize:}}
		\State \hspace{0.5cm} If this vertex is the message source, send the payload message at
  timestep 1, then quit.
\State \hspace{0.5cm} Sample $b$ from a geometric distribution with
mean $2$.  $\Prob(b = i) = 2^{-i}$, for $i \ge 1$.
\State \hspace{0.5cm} If $2^b > n$, set $b= \log_2(n) $ instead.  
\State \hspace{0.5cm} $B \gets 2^b$.  \Comment{$B$ is the ``Blocking time''}
\LineComment{ {\bf Time step 1:}}
 \State{\hspace{0.5cm} Send message ``Next message after $B-1$ timesteps''} at timestep 1.
 \State{\hspace{0.5cm} Set SendAlarm to ring at time $B$.}

	\While{haven't quit}
		     \If{($t = 1$) or ListenAlarm rang at time $t$}
     \State{Receive and store incoming message $M$ at time $t$}
     \If{$M$ is of the form ``Next message after $i$ timesteps''}
         \State{Set ListenAlarm to ring at time $t+i$.}
     \EndIf
     \If{$t \ge B$}   \Comment{``Forwarding mode''}
          \State{Send the most recently received message $M$ downstream at time $t+1$.}
      \If{$M$ was the payload message}
       \State{quit}
       \EndIf
       \EndIf
     \EndIf
      \If{SendAlarm rang at time $t$}   \Comment{$t=B$}
        \If{the payload message was received before time $B$}
          \State Send the payload message at time $B$, then quit.
         \Else
	    \State $A \gets $ time the next ListenAlarm is set for.
           \State{Send message ``Next message after $A + 1 - B$ timesteps'' at
           time $B$.}
         \EndIf
         \EndIf
       \State{Sleep until next alarm.}	
		\EndWhile

	\end{algorithmic}
\end{algorithm}

\begin{figure}
\begin{center}
  \includegraphics[width=4in]{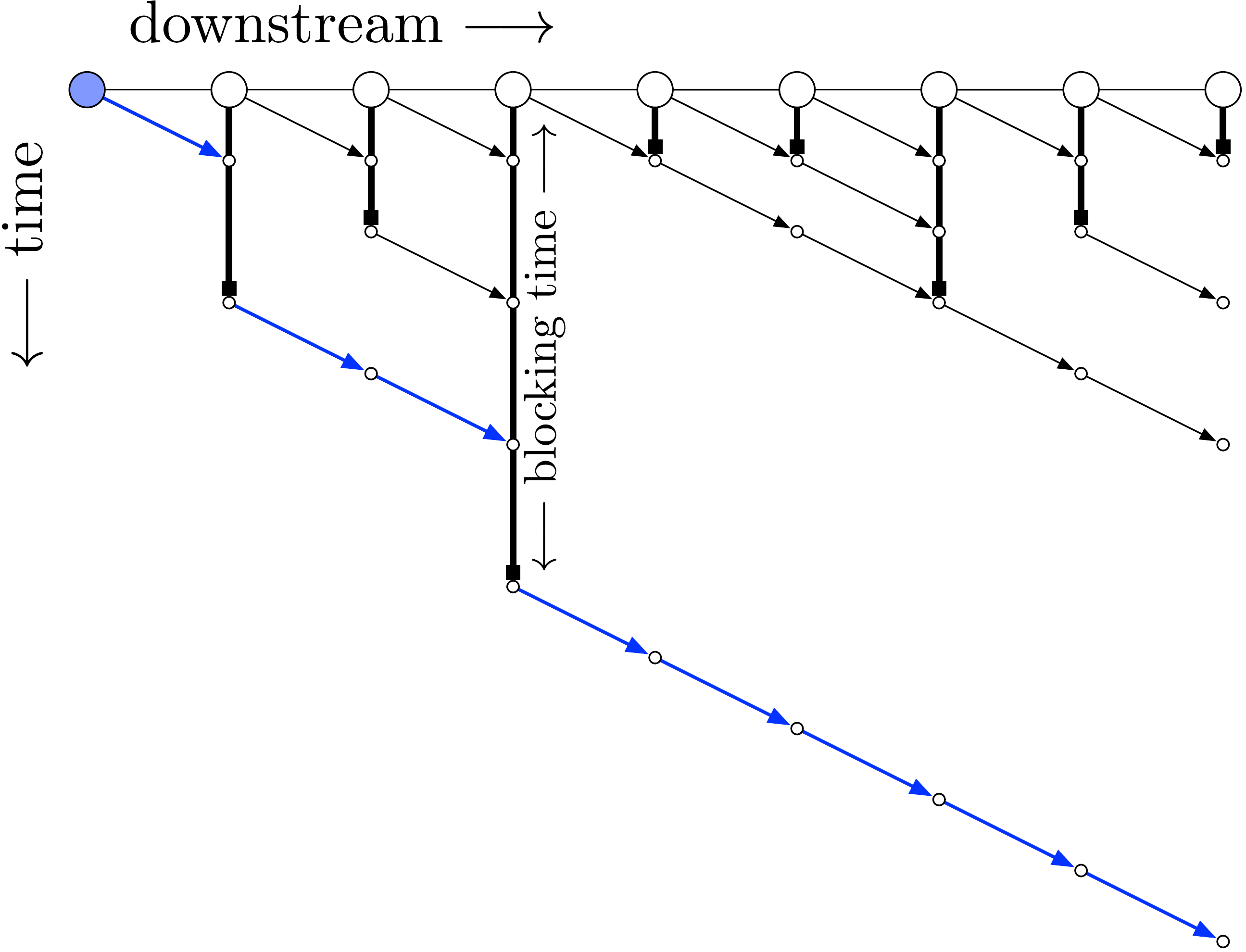}
  \caption{An example run of our path algorithm.  Messages propagate down
and to the right, except when they arrive at a blocking vertex.}
  \label{fig:path}
\end{center}
\end{figure}

%

\subsection{Analysis}

\subsubsection{Time Overhead}
We first observe that no message can be
delayed more than $\max B$ timesteps.
This is because, after time $\max B$, all vertices have gone into
``forwarding mode,'' so no further delays can occur.
Since the distribution of blocking times is supported on
powers of $2$ less than or equal to $n$,
it follows that the maximum possible delay is $n$.
Since the length of the path is $n-1$,
the algorithm has a worst-case running time of $2n-1$.


\par\noindent {\bf Remark: }
In the setting where $n$ is unknown, we could try to run the same
algorithm, but with no \emph{a priori} upper bound on the blocking times, $B$.
Then Markov's inequality, together with a union bound,
implies that the probability that $\max B > \frac{n}{\epsilon}$ is at
most $\epsilon$, for every $\epsilon > 0$.  On the other hand, the
expected value of $\max B$ is infinite, as indeed is the expected
value of even one blocking time $B$.  Thus, we have the unusual
situation that the running time is probably $O(n)$, but its expected
value is infinite.

\subsubsection{Energy Cost}
We make the following observations
\begin{itemize}
\item A message advances one step on the path per time step, except when it is blocked.
\item Each vertex $v$ originates at most two messages, one at time step 1 and another at time step $B_v$.
\end{itemize}
Thus the energy cost is controlled by the
number of incoming messages to the vertex.
Since the messages sent at time $1$ are all received and
blocked, we can ignore these.

To analyze the energy use, we compute, for each time $T$ with $2 \le T < 2n$, an upper
bound on the probability that a message reaches a particular vertex $v$ at time $T$.

First we notice the following about the distribution of the blocking times.
For $1\le b < \log n$ we have $\Prob[B=2^b] = 2^{-b}$, and $\Prob[B>2^b] = 2^{-b}$, and for $b=\log n$, $\Prob[B=2^b] = 2^{-b+1}$, and $\Prob[B>2^b] = 0$.

Let $s = \lfloor \log_2 T\rfloor$ and $t= T-2^s$ so that $T=2^s +t$. Notice that $s \le \log n$, since $T\le 2n-1$.

Consider a message that arrives at vertex $v$ at time $T$. Such a message must have originated at some time $2^i$, $1\le i \le s$ and traveled $T-2^i$ steps on the path without being blocked.

\begin{lemma} For each $i \le s$,
the probability that a message originating at time $2^i$ is not blocked before time $T$ is at most $e^{i-s}$.
\end{lemma}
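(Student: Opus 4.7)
The plan is to track the set of vertices the message must pass through without being blocked, and exploit the mutual independence of their blocking times. First I would identify the (unique) vertex $u$ with $B_u = 2^i$ that originated the message at time $2^i$, and let $w_0 = u, w_1, \ldots, w_d = v$ be the consecutive vertices on the path from $u$ to $v$, where $d = T - 2^i$. A direct inspection of Algorithm~\ref{alg:path} shows that the message is relayed by $w_k$ to $w_{k+1}$ exactly when $w_k$ is already in forwarding mode by the time the message reaches it at time $2^i + k$, i.e., when $B_{w_k} \le 2^i + k$. Since the $B_{w_k}$ are mutually independent,
\[
\Pr[\,\text{message not blocked before time } T\,] \;\le\; \prod_{k=1}^{d-1} \Pr\!\bigl[\,B_{w_k} \le 2^i + k\,\bigr].
\]

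Next I would plug in the distribution. Since $\Pr[B > 2^b] = 2^{-b}$ for $1 \le b < \log_2 n$, writing $b'_k := \lfloor \log_2(2^i + k) \rfloor$ gives $\Pr[B_{w_k} \le 2^i + k] = 1 - 2^{-b'_k}$, and applying $1 - x \le e^{-x}$ to each factor yields
\[
\Pr[\text{not blocked}] \;\le\; \exp\!\left(-\sum_{k=1}^{d-1} 2^{-b'_k}\right).
\]
So it suffices to bound the inner sum below by essentially $s - i$.

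The main step is a dyadic decomposition of $\{1, \ldots, d-1\}$ according to the value of $b'_k$. For each $j \in \{i, i+1, \ldots, s\}$, the $k$'s with $b'_k = j$ form an interval $[2^j - 2^i,\, 2^{j+1} - 2^i)$ intersected with $[1, d-1]$, of length approximately $2^j$ (exactly $2^j$ for $i < j < s$, and slightly truncated at the two boundary levels $j=i$ and $j=s$). Each such $j$ therefore contributes roughly $2^j \cdot 2^{-j} = 1$ to the sum, so the total over $j=i,\ldots,s$ is approximately $s - i + 1$, giving the desired exponent.

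The main obstacle is handling the boundary levels cleanly. At $j = i$ the interval loses one element, costing a term of $2^{-i}$ in the sum; at $j = s$ the contribution depends on $t = T - 2^s$ and can be as small as $t \cdot 2^{-s}$. Together these introduce an additive $O(1)$ slack in the exponent, leading to a bound of the form $e^{-(s-i) + O(1)}$, which (after absorbing the absolute constant) is the stated $e^{i-s}$ bound. A quantitatively tighter version with an explicit constant out front would come from the same computation; since the lemma will later be used inside a geometric sum over $i$, the precise constant is unimportant.
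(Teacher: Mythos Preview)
Your approach is essentially the same as the paper's: both use independence of the blocking times together with a dyadic grouping to extract a factor of roughly $e^{-1}$ for each level $j\in\{i,\ldots,s-1\}$.

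The paper's presentation is a bit cleaner and sidesteps the boundary issues you flag, for two reasons. First, it only asks the message to survive unblocked until time $2^s\le T$ (rather than until time $T$ exactly), so the partial top level $j=s$ never enters. Second, rather than writing the bound as $\exp(-\sum_k 2^{-b'_k})$ and then decomposing the sum, the paper directly observes that during the time window $[2^j,2^{j+1})$ the message must pass through $2^j$ fresh vertices, and since $B$ takes only power-of-two values, each of them forwards iff $B\le 2^j$; this gives the clean per-level factor $(1-2^{-j})^{2^j}\le e^{-1}$ and hence exactly $e^{i-s}$ with no stray multiplicative constant. Your direct-sum computation is equivalent but leaves you handling the endpoints by hand; the fix is simply to truncate at time $2^s$ and use the power-of-two structure of $B$ to make every dyadic block contribute a full $1$ to the exponent.
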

\begin{proof}
Let $i \le j < s$, and consider a message that has survived to time
$2^j$, either because $j=i$ and it has just originated (base case), or
because $j>i$ and it has not been blocked so far (inductive step).
In order to survive, unblocked, for the next $2^j$ time steps,
\emph{i.e.} until time $2^{j+1}$, each of the next $2^j$ vertices must
pick a blocking time $B \le 2^j$.  Since $j < s \le \log_2 n$, this
probability equals $(1 - 2^{-j})^{2^j}$, which is at most $1/e$,
since $1 - x \le e^{-x}$ for all $x$.

By induction, we find that the probability of the message surviving
unblocked from time $2^i$ until time $2^s$ is at most $e^{i-s}$,
and of course this is an upper bound on the probability of going
unblocked until time $T$.
\end{proof}

\begin{lemma}
Fix a vertex $v$ and time $T \ge 2$.
The probability that $v$ receives a message at time $T$ is at most $\frac{4e}{(e-2)T}$.
\end{lemma}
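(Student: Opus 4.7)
The plan is a union bound over all possible origination events that could deliver a message to $v$ at exactly time $T$, weighting each by the joint probability that (i) the source samples the matching blocking time, and (ii) the message then propagates the remaining hops without being blocked. Since an unblocked message advances one step per round, any message arriving at $v$ at time $T$ must have been launched at time $2^i$, for some $i$ with $1 \leq i \leq s = \lfloor \log_2 T \rfloor$, by the unique path-vertex at distance $T - 2^i$ from $v$. Time-$1$ messages may be ignored, since as noted above they are blocked at the first downstream listener. That source emits at time $2^i$ exactly when it samples $B = 2^i$, which happens with probability at most $2^{1-i}$, where the factor of $2$ absorbs the leftover-mass edge case $B = n$ in the truncated geometric distribution.

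Combining the origination probability with the preceding lemma's unblocked-survival bound $e^{i-s}$ gives
\begin{equation*}
\Prob[v \text{ receives a message at time } T] \;\leq\; \sum_{i=1}^{s} 2^{1-i}\cdot e^{i-s} \;=\; 2 e^{-s}\sum_{i=1}^{s}\left(\tfrac{e}{2}\right)^{\!i}.
\end{equation*}
The inner geometric sum has ratio $e/2 > 1$, so it equals $\tfrac{e}{e-2}\bigl((e/2)^s - 1\bigr)$. Using $e^{-s}(e/2)^s = 2^{-s}$ and the standard estimate $2^s \geq T/2$ (from $s = \lfloor \log_2 T\rfloor$) produces the claimed bound $\tfrac{4e}{(e-2)T}$ after a few lines of routine algebra.

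The only non-routine care-point, and where I would focus attention, is the setup of the union bound: for each candidate $i$ we must be sure the origination event (``the vertex at distance $T - 2^i$ has $B = 2^i$'') and the survival event factor cleanly, and that different values of $i$ correspond to disjoint $B$-choices at a given source vertex, so summing the individual probabilities is valid. Once this bookkeeping is in place, the rest is just the geometric-series evaluation, whose final-term dominance nicely reflects the intuition that the messages most likely to still be alive at time $T$ are the \emph{youngest}, originated only a constant number of hops upstream of $v$.
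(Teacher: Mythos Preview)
Your proposal is correct and follows essentially the same approach as the paper: a union bound over possible origination times $2^i$, multiplying the probability $\Prob[B_w = 2^i] \le 2^{1-i}$ by the survival bound $e^{i-s}$ from the preceding lemma, and then summing the resulting geometric series. The only cosmetic difference is that the paper reindexes to a sum with ratio $2/e<1$ and bounds it by the infinite series, whereas you keep the ratio $e/2>1$ and extract the dominant term; both yield the same $\frac{4e}{(e-2)T}$.
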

\begin{proof}
As mentioned before, for a message to arrive at vertex $v$ at time $T$, it must have
originated at time $2^i$ for some $i$ with $1\le i \le s =\lfloor \log T\rfloor$. Moreover, it must have originated at a vertex $w$ at distance exactly $T-2^i$ from $v$, and it must not have been blocked between and $v$. This means $w$ must have selected blocking time $2^i$. This, together with the previous lemma tells us that the probabilty $p(T)$ of a message being received at $v$ at time $T$ is given by
\begin{align*}
p(T) &= \sum_{i=1}^{s} \Prob[B_w = 2^i] \cdot \zero{\Prob[w\mbox{'s message not blocked before time } T]}\\
&\le \sum_{i=1}^{s} 2^{-i+1} e^{i-s}\\
&= 2^{1-s} \cdot \sum_{i=1}^{s}  (2/e)^{s-i}\\
&\le 2^{1-s} \frac{1}{1 - (2/e)} & \mbox{summing the infinite
  arithmetic series} \\
&< \frac{4e}{(e-2)T}  & \mbox{ since $T < 2^{s+1}$.}
\end{align*}
as claimed.
\end{proof}


%

%

Summing this for $T \in \{2, \dots, 2n-1\}$, and approximating the
resulting harmonic series by a logarithm, we find that the total
expected number of messages received at a given vertex is at most $\left(\frac{4e}{e-2}\right)
\ln(2n-1)$.  This establishes that the per-vertex expected energy cost is $O(\log
n)$.

\section{Conclusion}\label{sect:conclusion}

\emph{Energy complexity} is a natural and attractive concept in wireless radio networks. 
In this work we presented what we believe are the first theoretical results on the energy complexity
of problems in multi-hop networks.  It is interesting that many of the techniques we used (lots of sleeping,
tightly scheduled transceiver usage, 2-hop neighborhood coloring) 
are somewhat similar to techniques suggested in systems papers~\cite{vanDam2003,Ye2004,Wang2006,Hong2009,Hong13}, 
but without rigorous asymptotic guarantees.

There are several difficult problems left open by this work.  Assuming energy usage is paramount, is it possible
to design $\broadcast$ algorithms meeting our best lower bounds: $\Omega(\log n)$ in $\cd$ and $\Omega(\log\Delta\log n)$ in $\nocd$? 
Is it possible to get the best of both worlds: near optimality in time and energy?  Specifically, is there a small 
constant $c$ for which $O(D\log^c n)$ time and $O(\log^c n)$ energy suffice to solve $\broadcast$?
Alternatively, can our bounds for the path be extended to general graphs, getting $O(n)$ time and $O(\log n)$ per-node energy use?




\bibliographystyle{plain}
\bibliography{references}

\newpage

\appendix
\section{Deterministic Algorithms}\label{section:det}
We begin with defining $\sr$ in the deterministic model.
Let $S$ and $R$ be two (not necessarily disjoint) vertex sets.
Each vertex $u \in S$ attempts to broadcast a message $m_u$, and each vertex in $R$ attempts to receive a message.
We assume that $m_u \in \{1,2, \ldots, M\}$, and the goal of $\sr$ is to let each vertex $v \in R$ such that $N^+(v) \cap S \neq \emptyset$ knows $m_u$ for any one vertex $u \in N^+(v) \cap S$.

\begin{lemma}\label{lemma:9}
In deterministic $\cd$ model, $\sr$ can be solved in $O(\min\{M, N\})$ time and uses $O(\min\{\log M, \log N\})$ energy.
\end{lemma}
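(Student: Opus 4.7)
The plan is to present two deterministic protocols with complementary costs---one binary-searching over the ID space $\{1,\ldots,N\}$ and one over the message space $\{1,\ldots,M\}$---and to invoke whichever has the cheaper underlying space. This delivers the claimed $O(\min\{M,N\})$ time and $O(\min\{\log M,\log N\})$ energy bounds.

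For the ID-based protocol $\mathcal{A}_N$, I would build the complete binary tree $T_N$ over $\{1,\ldots,N\}$ and allocate one slot to each of its $O(N)$ nodes in BFS order, so every node's slot precedes the slots of its descendants. In the slot for interval $I$, every sender $u\in S$ with $\ID(u)\in I$ transmits $m_u$; each sender therefore speaks at the $O(\log N)$ intervals along the root-to-leaf path determined by its ID. A receiver $v\in R\setminus S$ performs adaptive binary search: it maintains a current interval (initially the root) and listens once in the corresponding slot. On hearing a message it terminates; on silence at the root it terminates with ``no neighboring sender''; on silence at a non-root it switches to the sibling interval (whose slot is the next one at the same level in BFS order); on noise it descends to an arbitrary child. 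Uniqueness of IDs guarantees that each leaf-level slot carries at most one transmission, so the search always terminates. A vertex $v\in R\cap S$ is exempt from listening because it may return $m_v$ (certifying $v\in N^+(v)\cap S$), which eliminates transmit/listen conflicts.

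The companion message-based protocol $\mathcal{A}_M$ is structurally identical on $T_M$, with the membership test $m_u\in I$ replacing $\ID(u)\in I$. The new subtlety is that two senders may share a message, so a leaf slot $\{v\}$ can yield noise; the receiver nonetheless records $m_u=v$ from its position in $T_M$, a valid answer since noise certifies some neighbor in $S$ has message $v$. In both protocols the per-level listen count is at most two (the current interval and, if silent, the immediately following sibling slot), so time is $O(N)$ resp.\ $O(M)$ and per-sender/receiver energy is $O(\log N)$ resp.\ $O(\log M)$.

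The main step to verify is correctness of the silence-triggered sibling backtrack: it rests on the invariant that whenever we descend because of a noise reading, the parent interval contains at least two senders in $N^+(v)\cap S$, so the sibling of a silent child must itself be noisy and descending there remains productive. This is a short case analysis rather than a deep obstacle; the rest is bookkeeping for the per-level energy count and a single $\min$ to select between $\mathcal{A}_N$ and $\mathcal{A}_M$.
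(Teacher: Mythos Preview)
Your approach is correct and essentially the same as the paper's: both perform a level-by-level binary search on the complete binary tree over $\{1,\ldots,M\}$ (resp.\ $\{1,\ldots,N\}$), with senders transmitting on the $O(\log)$ nodes of their root-to-leaf path and receivers using collision detection to descend through $O(1)$ nodes per level; the paper listens to both children at each level and tracks the \emph{minimum} neighboring message, while you listen to one child and backtrack to the sibling on silence, and your $\mathcal{A}_N$ folds the paper's separate ``retrieve $m_u$ in $N$ extra slots'' phase into the search by having senders transmit $m_u$ rather than $\ID(u)$---but none of this changes the argument or the bounds. One small slip to fix: descending to an ``arbitrary child'' on noise is incompatible with your sibling-backtrack, since if you descend to the later-scheduled child and then hear silence, the earlier sibling's slot has already passed in the BFS schedule; you must always descend to the child whose slot comes first.
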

\begin{proof}
We first consider the case where $M \leq N$.
Define  $f_v \bydef \min_{u \in N^+(v) \cap S} m_u$.
The proof idea is to do a binary search to determine all $\log M$ bits of $f_v$. We write $p_x(s)$ as the length-$x$ prefix of a binary string $s$. Suppose at some moment each vertex $v \in R$ knows the first $x$ bits of $f_v$ and whether all message in $v$'s neighborhood share the same prefix $p_x(f_v)$ or not. The following procedure, which takes $O(2^x)$ time and $O(1)$ energy, let each $v \in R$ learn the $(x+1)$th bit of $f_v$ and whether all message in $v$'s neighborhood share the same prefix $p_{x+1}(f_v)$ or not. Each vertex $u \in S$ broadcasts a message at time $p_{x+1}(m_u)$ (this is interpreted as a binary number), each vertex $v \in R$ listens at time $p_x(f_v) \circ 0$ and $p_x(f_v) \circ 1$ (they are interpreted as binary numbers). Due to collision detection, $v$ can learn whether the $(x+1)$th bit of $f_v$ is $0$ or $1$. Notice that a vertex $v \in S \cup R$ does not need to simultaneously send and listen in our algorithm. The total runtime is $\sum_{x=0}^{\log M - 1} O(2^x) = O(M)$; the total energy cost is $\sum_{x=0}^{\log M - 1} O(1) = O(\log M)$.

\medskip
For the case of $M > N$, we first do the above procedure on the space $\{1, \ldots, N\}$ instead of $\{1, \ldots M\}$; and let $\ID(u)$ be the message of each $u \in S$. After this step, each $v \in R$ learns $\ID(u')$ for some $u' \in N^+(v) \cap S$. Then we allocate $O(N)$ time slots, where each $u \in S$ sends its message $m_u$ at slot $\ID(u)$. Then $v \in R$ can learn $m_{u'}$ by listening at slot $\ID(u')$.
\end{proof}

Notice that in deterministic $\LOCAL$ model, we can solve $\sr$ in $O(1)$ time and $O(1)$ energy, and each vertex in $R$ can obtain all messages sent from $N^+(v) \cap S$.

\subsection{Algorithm in $\LOCAL$ Model}
As a warm-up exercise, we present a deterministic algorithm in $\LOCAL$ model. An {\em $(\alpha,\beta)$-ruling set} of a graph $G$ is a set of vertices $I$ such that (i) for any two distinct vertices $u, v \in I$, we have $\dist(u,v) \geq \alpha$, and (ii) for any vertex $u$ in the graph, there is a vertex $v \in I$ with $\dist(u,v) \geq \beta$. A $(k,(k-1)\log N)$-ruling set can be computed in $O(k\log N)$ rounds in deterministic $\LOCAL$~\cite{awerbuch1989network} model.
Any two vertices $u$ and $v$ in a $(3,2\log N)$-ruling set $I$ satisfy $N(u)\cap N(v)=\emptyset$, and thus $|I| \leq |V| / 2$.

\paragraph{Computing a New Labeling $\mathcal{L}'$ from $\mathcal{L}$.}
Suppose that we have a good labeling $\mathcal{L}$; and the number of layer-0 vertices is $w \geq 2$.
The goal is to produce a new good labeling $\mathcal{L}'$ such that the number of layer-0 vertices is at most $w/2$.
The high level idea is to compute a $(3,2\log N)$-ruling set $I$ of $G_\mathcal{L}$,  let $I$ be the set of layer-0 vertices of $\mathcal{L}'$, and update the labeling of the remaining vertices using techniques in Section~\ref{section:rand}.

The mode detailed description of an algorithm for computing  $\mathcal{L}'$ is as follows.
The first step is to find a $(3,2\log N)$-ruling set $I$ of $G_\mathcal{L}$.
Observe that one round in $G_\mathcal{L}$ can be simulated using $O(1)$ energy and $O(n)$ time in $G$ using techniques similar to Lemma~\ref{lem:reduce}.
Thus, this step takes $O(\log N)$ energy and $O(n \log N)$ time.
The second step is to run the algorithm of computing $\mathcal{L}'$ in Section~\ref{section:rand} with $s = 2\log N$; but initialize $\mathcal{L}'(v)=0$ for all $v \in I$, and $\mathcal{L}'(v)=\bot$ for all remaining vertices. This step takes $O(\log N)$ energy and $O(n \log N)$ time. Since $I$ is a $(3,2\log N)$-ruling set of $G_\mathcal{L}$, we obtain a good labeling $\mathcal{L}'$ after this step.

\begin{theorem}
There is an $O(n\log n \log N)$-time and $O(\log n \log N)$-energy deterministic algorithm that solves $\broadcast$ in $\LOCAL$ model.
\end{theorem}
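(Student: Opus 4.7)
The plan is to iterate the labeling-improvement procedure described just before the theorem statement, starting from the trivial good labeling $\mathcal{L}_0$ in which every vertex has label $0$, and producing a sequence $\mathcal{L}_0, \mathcal{L}_1, \ldots, \mathcal{L}^\star$ of good labelings with strictly shrinking layer-$0$ vertex sets. At each step we compute a $(3,2\log N)$-ruling set $I$ of the cluster graph $G_{\mathcal{L}_i}$ using the deterministic $\LOCAL$-model ruling-set algorithm of Awerbuch--Goldberg--Luby--Plotkin in $O(\log N)$ rounds of $G_{\mathcal{L}_i}$, then use this $I$ as the new layer-$0$ set and propagate the new labels outward for $s=2\log N$ rounds using the {\sf Up-cast}/{\sf Down-cast}/{\sf All-cast} machinery of Section~\ref{section:rand}. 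After $O(\log n)$ iterations we reach a good labeling $\mathcal{L}^\star$ with a single layer-$0$ vertex, at which point Lemma~\ref{lem:reduce} (applied with $d=0$, $L=n$) finishes the broadcast in $O(n)$ time and $O(1)$ energy per vertex.

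The analysis of one iteration breaks into two subroutines. For the ruling-set computation, I would argue that each single round of a $\LOCAL$ algorithm on $G_{\mathcal{L}_i}$ can be simulated in $G$ by performing one {\sf Up-cast} followed by one {\sf Down-cast} along the spanning forest encoded by $\mathcal{L}_i$: in deterministic $\LOCAL$ each such cast is just a sequence of collision-free layer-by-layer transmissions, taking $O(n)$ total time but only $O(1)$ energy at each vertex (since every vertex participates in at most a constant number of rounds per cast). Hence the $O(\log N)$-round ruling-set algorithm costs $O(n\log N)$ time and $O(\log N)$ energy. The subsequent labeling-propagation step also consists of $s=2\log N$ iterations of {\sf Down-cast}, {\sf All-cast}, {\sf Up-cast} primitives, costing another $O(n\log N)$ time and $O(\log N)$ energy. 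Because the ruling set covers every vertex of $G_{\mathcal{L}_i}$ within distance $2\log N$, the choice $s=2\log N$ guarantees that every vertex receives a label, so $\mathcal{L}_{i+1}$ is indeed a good labeling whose layer-$0$ set equals $I$.

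The main obstacle, and the key combinatorial point, is to verify that one iteration shrinks the layer-$0$ count by a factor of at least two, i.e.\ $|I|\le w/2$ where $w=|\mathcal{L}_i^{-1}(0)|\ge 2$. I would prove this directly from the $(3,\cdot)$-ruling-set property: since $G_{\mathcal{L}_i}$ is connected (it is a contraction of the connected graph $G$) and every distinct pair $u,v\in I$ satisfies $\dist_{G_{\mathcal{L}_i}}(u,v)\ge 3$, the sets $N_{G_{\mathcal{L}_i}}(u)$ are pairwise disjoint; mapping each $u\in I$ to an arbitrary neighbor in $V(G_{\mathcal{L}_i})\setminus I$ (which exists by connectedness and since $I$ is an independent set) gives an injection $I\hookrightarrow V(G_{\mathcal{L}_i})\setminus I$, so $|I|\le w/2$.

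Putting it together, $O(\log n)$ iterations suffice to reduce the number of layer-$0$ vertices from $n$ to $1$, for a total cost of $O(\log n)\cdot O(n\log N)=O(n\log n\log N)$ time and $O(\log n)\cdot O(\log N)=O(\log n\log N)$ energy, which dominates the $O(n)$-time, $O(1)$-energy final broadcast from Lemma~\ref{lem:reduce}. This matches the claimed bounds.
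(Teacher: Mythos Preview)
Your proposal follows essentially the same strategy as the paper's proof: start from the all-$0$ labeling, repeatedly compute a $(3,2\log N)$-ruling set of $G_{\mathcal L_i}$ to serve as the new layer-$0$ set, propagate labels with $s=2\log N$ iterations of the Section~\ref{section:rand} machinery, observe that the layer-$0$ count halves because distance-$3$ vertices have disjoint neighborhoods, and finish with Lemma~\ref{lem:reduce}. Your injection argument for $|I|\le w/2$ is exactly the paper's ``$N(u)\cap N(v)=\emptyset$'' observation, phrased slightly differently.

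One small descriptive slip: simulating a single $\LOCAL$ round on $G_{\mathcal L_i}$ is not just ``one {\sf Up-cast} followed by one {\sf Down-cast}.'' Those two operations move information only \emph{within} a cluster tree; you also need an inter-cluster exchange step (an {\sf All-cast}, or simply one round where every vertex speaks and listens) so that boundary vertices pick up the states of neighboring clusters before the {\sf Up-cast} conveys them to the center. This does not change your time and energy accounting---the extra step is $O(1)$ time and $O(1)$ energy per vertex in $\LOCAL$---so the claimed $O(n\log N)$ time and $O(\log N)$ energy per iteration, and hence the final bounds, are unaffected.
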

\begin{proof}
The $\broadcast$ algorithm begins with the trivial all-0 good labeling. Then we repeat for $O(\log n)$ iterations of constructing a new good labeling $\mathcal{L}'$ from the current good labeling $\mathcal{L}$ using the above algorithm. We end up with a good labeling $\mathcal{L}^\star$ with only one layer-0 vertex. Then we can run the algorithm of Lemma~\ref{lem:reduce} to solve $\broadcast$.
\end{proof}

\subsection{Algorithm in $\cd$ Model}

The basis of our algorithm is an $\cd^\star$ algorithm for $(2,\log N)$-ruling set.

\begin{lemma}\label{lemma:ruling-set}
In $\cd^\star$ model, there is a deterministic algorithm that computes a $(2,\log N)$-ruling set using $O(N)$ time, $O(\log N)$ energy, and uses only 1-bit messages.
\end{lemma}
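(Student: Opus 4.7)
The plan is to run a bit-by-bit tournament on vertex IDs over $\lceil \log N \rceil$ synchronous rounds, in the spirit of classical Gallager/Willard-style selection schemes. Every vertex $v$ starts \emph{active}. In round $i = 1, \ldots, \lceil \log N \rceil$, each still-active $v$ examines bit $i$ of $\ID(v)$: if the bit is $0$, $v$ transmits a single-bit ``beep''; if the bit is $1$, $v$ listens, and becomes permanently inactive as soon as it detects anything other than silence. Only the silence-versus-non-silence distinction is used, so this works in the $\cd^\star$ model and requires only 1-bit messages. I will define $I$ to be the set of vertices that are still active after round $\lceil \log N \rceil$.

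To show $I$ is independent, I would consider two hypothetical neighbors $u,v \in I$: since their IDs are distinct, their binary expansions first disagree at some bit $i^\ast$, and both must still be active when round $i^\ast$ begins. Exactly one of $u,v$ then transmits while the other listens, so the listener is eliminated in round $i^\ast$, contradicting the assumption that both lie in $I$. For the ``domination within $\log N$'' half I plan to trace a blame chain: if $u \notin I$ drops out at round $i_0$, some neighbor $u_1$ must have transmitted in round $i_0$ and so was active throughout round $i_0$; a vertex cannot be eliminated in a round in which it is transmitting, so if $u_1 \notin I$ then $u_1$ drops out in some round $i_1 > i_0$, caused by a transmitting neighbor $u_2$, and so on. Iterating produces a walk $u = u_0, u_1, \ldots, u_k$ with $i_0 < i_1 < \cdots < i_{k-1} \leq \lceil \log N \rceil$ and $u_k \in I$. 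Strict monotonicity of the dropout rounds bounds $k$ by $\lceil \log N \rceil$ and also forces the $u_j$ to be pairwise distinct (an already-eliminated vertex is inactive in all later rounds), giving $\dist(u,I) \leq \log N$.

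The resource accounting is then immediate: the protocol terminates in $\lceil \log N \rceil = O(\log N) \subseteq O(N)$ time slots; each vertex performs at most one action per round and ceases all activity once eliminated, so its total energy is at most $\lceil \log N \rceil$; and every transmission carries one bit. The one step I would scrutinize first is the blame-chain argument for domination: it is essential that a vertex cannot be eliminated in a round in which it is transmitting (this is what makes the dropout rounds strictly increase along the chain), and it is essential that the IDs be distinct (this is what lets us locate a unique ``first differing bit'' in the independence argument). Everything else is routine bookkeeping.
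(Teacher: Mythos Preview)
Your proof is correct, but it is genuinely different from the paper's. The paper adapts the classical recursive ruling-set construction of Awerbuch et~al.: split $V$ into $V_0,V_1$ by the first ID bit, recurse on each half \emph{sequentially} to obtain $I_0,I_1$, and then keep $I_0 \cup \{v\in I_1 : N(v)\cap I_0=\emptyset\}$ using one extra transmit/listen round. Sequential recursion is what yields the $O(N)$ time bound, with the $O(\log N)$ energy coming from the recursion depth. Your bit-by-bit tournament processes all vertices together in $\lceil\log N\rceil$ rounds, and your blame-chain argument replaces the paper's inductive correctness proof. The two algorithms do not even output the same set in general (e.g.\ on the path with IDs $00$--$10$--$11$ they differ), but both are valid $(2,\log N)$-ruling sets.

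What each approach buys: the paper's version is a direct specialization of a textbook result, and the inductive proof is short. Your version is non-recursive, arguably more elementary, and actually beats the stated time bound, finishing in $O(\log N)$ rather than $O(N)$ rounds; you are simply not claiming the stronger bound. Your independence argument (first differing bit forces one neighbor to transmit while the other listens) and the blame-chain domination argument (dropout rounds strictly increase, hence chain length $\le \lceil\log N\rceil$) are both sound; the observation that a transmitting vertex cannot be eliminated in that round is exactly what makes the chain strictly monotone, as you note.
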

\begin{proof}
The algorithm is essentially the same as the one in~\cite{awerbuch1989network}; but the recursive calls are done sequentially rather in parallel. The algorithm is as follows. Divide the set of vertices $V$ into two sets $V_0$ and $V_1$ according to the first bit of the ID. For $i=0,1$, recursively compute a $(2,\log N -1)$-ruling set $I_i$ of $V_i$ (using the IDs of last $\log N - 1$ bits). Then $I = I_0 \cup \{v\in I_1  \ | \ N(v)\cap I_0 = \emptyset\}$ is a $(2, \log N)$-ruling set of $V$. Notice that the each  $v\in I_1$ can check whether it has a neighbor in $I_0$ in $\cd^\star$ model by letting all vertices in $I_0$ transmit, and letting all vertices in $I_1$ listen.
\end{proof}

Our $\broadcast$ algorithm is based on the idea of {\em iterative clustering}, which is similar to the one in Section~\ref{section:diameter}; but the clustering algorithm in this section is based on ruling set. Notice that a ruling set $I$ naturally induces a clustering by letting each vertex in $I$ initiate a cluster, and each remaining vertex joins the cluster of its nearest vertex in $I$.

\paragraph{Clustering by Ruling Set.}  Notice that the proof of Lemma~\ref{lemma:ruling-set} does not generalize to $(k,(k-1)\log N)$-ruling set for $k>2$. The reason is that for a vertex $u \in I_0$ to send a signal to a vertex $v \in N^{k-1}(u) \cap I_1$, an intermediate vertex $w \notin I_0 \cap I_1$ is required.
However, a clustering resulting from a $(2,\log N)$-ruling set $I$ may have size-$1$ clusters.
This issue can be overcome as follows. Observe that any two size-$1$ clusters $\{u\}$ and $\{v\}$ must not be adjacent, since only vertices in $I$ can initiate a cluster. If we add one additional step which merges each size-$1$ cluster into any of their neighboring clusters (which must be of size at least 2), then all clusters will have size at least $2$. Thus, it is guaranteed that the number of clusters int the new clustering is at most half of  that in the old clustering.

\paragraph{Algorithm.} In what follows, we present and analyze our $\broadcast$ algorithm. Some implementation details are deferred to Section~\ref{subsect:cluster-det}, where we describe the cluster structure for deterministic $\cd$, and show how to perform the three operations {\sf Down-cast}, {\sf All-cast} and {\sf Up-cast}. This allows us to simulate one round of a $\cd^\star$ algorithm (that uses messages in $\{1, \ldots, M\}$) on the cluster graph using $O(n \min\{M,N\} N)$ time and $O(\min\{\log M, \log N\} \log N)$ energy.
Thus, the simulation of the ruling set algorithm of Lemma~\ref{lemma:ruling-set} costs $O(n N^2)$ time and $O(\log^2 N)$ energy (since $M=1$).

Suppose that we have computed a $(2,\log N)$-ruling set $I$ of the cluster graph.
Then we can obtain a new clustering with only $|I|$ clusters as follows. Let each cluster $C \in I$ initiate a new cluster, and let each remaining cluster $C'$ joins the new cluster of some $C \in I$. This can be done using $O(\log N)$ iterations of merging clusters (since $I$ is a $(2,\log N)$-ruling set). Each merging operation can be done by applying using {\sf Up-cast} and {\sf Down-cast} (as described in Section~\ref{subsect:maintain}), and it requires transmitting messages of length $O(\log N)$. Each merging operation costs $O(n N^2)$ time and $O(\log^2 N)$ energy (since $M \geq N$). Thus, it takes $O(n N^2 \log N)$ time and $O(\log^3 N)$ energy to compute the new clustering.

After $O(\log n)$ iterations of clustering based on ruling set computation, the whole graph is a single cluster, and thus we can run the algorithm of Lemma~\ref{lem:reduce} to solve $\broadcast$.
We have the following theorem.

\begin{theorem}
There is an $O(nN^2 \log N \log n)$-time and $O(\log^3 N \log n)$-energy deterministic algorithm that solves $\broadcast$ in $\cd$ model.
\end{theorem}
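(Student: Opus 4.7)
The plan is to adapt the iterative-clustering paradigm from Sections~\ref{section:rand} and~\ref{section:diameter} to the deterministic $\cd$ setting, with the randomized contention-resolution primitives replaced by deterministic primitives built on top of the ruling-set algorithm of Lemma~\ref{lemma:ruling-set}. I start from the trivial all-zero good labeling (every vertex is its own cluster) and, for $O(\log n)$ phases, produce a new good labeling whose number of clusters is at least halved. Once a single cluster remains, a direct call to Lemma~\ref{lem:reduce} (with $L=n$, $d=0$) finishes the $\broadcast$ within the claimed budget.

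Each phase proceeds in two stages. Stage~1 computes a $(2,\log N)$-ruling set $I$ of the current cluster graph by \emph{simulating} the $\cd^\star$ algorithm of Lemma~\ref{lemma:ruling-set} on the cluster graph, using the cluster-structure primitives of Section~\ref{subsect:cluster-det}. One round of $\cd^\star$ with messages in $\{1,\dots,M\}$ on the cluster graph is simulated in $G$ in $O(n\min\{M,N\}N)$ time and $O(\min\{\log M,\log N\}\log N)$ energy; since the ruling-set algorithm runs for $O(N)$ rounds with $M=1$, Stage~1 costs $O(nN^2)$ time and $O(\log^2 N)$ energy. Stage~2 realises the new clustering: every cluster in $I$ becomes a new center, and every non-$I$ cluster is merged into an $I$-neighbor reachable in at most $\log N$ hops of the cluster graph. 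Because $I$ is $(2,\log N)$-ruling, $O(\log N)$ sequential merge operations suffice; each merge transports $O(\log N)$-bit messages via {\sf Up-cast}/{\sf Down-cast}, costing $O(nN^2)$ time and $O(\log^2 N)$ energy per merge (now $M \ge N$, so the second factor in the simulation bound collapses to $N$ and $\log N$). Stage~2 therefore costs $O(nN^2\log N)$ time and $O(\log^3 N)$ energy, which dominates Stage~1.

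The subtle step, and the main obstacle, is proving that the number of clusters actually halves. A $(2,\log N)$-ruling set on its own can produce singleton new clusters: an isolated $I$-node that absorbs no neighbours would survive alone. Fortunately, two singletons cannot be adjacent in the old cluster graph, since both would have to be $I$-vertices at distance $\geq 2$. Hence every singleton has a non-singleton neighbour, and an extra ``absorb each singleton into an arbitrary adjacent cluster'' sub-step (one more merge round, same asymptotic cost) forces every surviving new cluster to have size $\geq 2$, giving the required halving $|I'|\le w/2$ on the number of clusters.

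Multiplying the per-phase cost $O(nN^2\log N)$ time and $O(\log^3 N)$ energy by the $O(\log n)$ phases yields total cost $O(nN^2\log N\log n)$ and $O(\log^3 N\log n)$; the final invocation of Lemma~\ref{lem:reduce} contributes only $O(n\log n)$ time and $O(\log n)$ energy and is absorbed into these bounds, establishing the theorem.
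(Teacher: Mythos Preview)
Your proposal is correct and follows essentially the same approach as the paper: iterative clustering via deterministic $(2,\log N)$-ruling sets on the cluster graph (simulated through the {\sf Up-cast}/{\sf Down-cast}/{\sf All-cast} primitives of Section~\ref{subsect:cluster-det}), with the same singleton-absorption trick to guarantee halving, repeated for $O(\log n)$ phases. One cosmetic point: the final step cites the randomized bounds of Lemma~\ref{lem:reduce}; in the deterministic $\cd$ setting the last {\sf Up-cast}/{\sf Down-cast} should use the $\sr$ of Lemma~\ref{lemma:9}, costing $O(nN)$ time and $O(\log N)$ energy, but this is still absorbed into the stated totals.
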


\subsection{Cluster Structure}\label{subsect:cluster-det}
We describe the cluster structure for deterministic algorithms. Each cluster $C$ is a rooted tree, where the cluster center $r$ is the unique layer-0 vertex in $C$, and each vertex $v$ in $C$ is equipped with a unique identifier $\ID(v)$ and a good labeling $\mathcal{L}(v)$. For each $i>0$, we assume that each layer-$i$ vertex $u$ has a designated layer-$(i-1)$ parent $v \in N(u)$, and $u$ knows $\ID(v)$.
Consider the following two tasks (which are slightly different than the ones defined in Section~\ref{subsect:cluster}).
\begin{itemize}
\item {\sf Upward transmission.} Let $i > 0$, and let $V'$ be a subset of layer-$i$ vertices that have some messages to send. The goal is to have each layer-$(i-1)$ vertex $v$ that has a child in $V'$ to receive some message from one of its child.
\item {\sf Downward transmission.} Let $i \geq 0$, and let $V'$ be a subset of layer-$i$ vertices that have some messages to send. The goal is to have each $v \in V'$ deliver its message to all its children.
\end{itemize}

\begin{lemma}\label{lemma:cluster-transmit-det}
In deterministic $\cd$ model, both {\sf Downward transmission} and  {\sf Upward transmission} can be solved using $O(\min\{M,N\} N)$ time and $O(\min\{\log M, \log N\} \log N)$ energy.
\end{lemma}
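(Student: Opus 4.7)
The plan is to exploit the unique IDs to schedule communication sequentially, with each potential parent taking a dedicated block of time. Since every child $u$ knows $\ID(\parent(u))$, I will iterate over $k \in \{1, \ldots, N\}$: in block $k$, the unique vertex $v_k$ with $\ID(v_k) = k$ (if any) plays the parent role, while the children of $v_k$ play the child role. All other vertices stay idle during block $k$. This avoids any cross-cluster scheduling issues because the parent identifier alone determines who participates.

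For {\sf Upward transmission}, block $k$ invokes the deterministic $\cd$ $\sr$ primitive of Lemma~\ref{lemma:9} with $S$ equal to the children of $v_k$ that lie in $V'$ and $R = \{v_k\}$. Since IDs are unique and only those children transmit, the signal $v_k$ receives is precisely the combined transmissions of its children, so Lemma~\ref{lemma:9} delivers to $v_k$ some $m_u$ for $u$ a child of $v_k$ in $V'$, as required. For {\sf Downward transmission}, block $k$ is simpler: if $v_k \in V'$ then $v_k$ transmits its message once and its children listen; since $v_k$ is the only transmitter anywhere in the graph during this block, every child receives the message without collision.

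For time, each block runs for $O(\min\{M, N\})$ slots (the cost of one application of Lemma~\ref{lemma:9}, and at most the same for a single downward broadcast), and there are $N$ blocks in total, giving $O(\min\{M, N\} \cdot N)$. For energy, each vertex $u$ is active in at most $O(1)$ blocks: $u$ transmits in the block indexed by $\ID(\parent(u))$ in the upward case (respectively $\ID(u)$ in the downward case), and listens in the block indexed by $\ID(u)$ in the upward case (respectively $\ID(\parent(u))$ in the downward case). Within such a block, Lemma~\ref{lemma:9} costs $O(\min\{\log M, \log N\})$ energy, comfortably within the stated $O(\min\{\log M, \log N\}\log N)$ bound.

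The main subtlety—and the only place correctness could fail—is verifying that no cross-interference spoils Lemma~\ref{lemma:9}'s guarantees inside block $k$. Concretely, I need to argue that the only vertices transmitting in block $k$ that are neighbors of the listener $v_k$ are children of $v_k$, and that the only vertex transmitting in a child's neighborhood during the downward block $k$ is $v_k$ itself. Both facts follow immediately from the ID-based schedule: a vertex transmits in block $k$ only if its parent's ID equals $k$, and parent-IDs are unique. Since each vertex's schedule is determined entirely by information already stored locally ($\ID(u)$ and $\ID(\parent(u))$), no additional coordination or preprocessing is needed. A more aggressive parallelization using a distance-$2$ coloring of $G$ could reduce time, but the straightforward sequential schedule already meets the stated bounds, so I will not pursue it.
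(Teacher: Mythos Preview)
Your proposal is correct and takes essentially the same approach as the paper: allocate $N$ blocks indexed by the parent's $\ID$, and in block $k$ run the deterministic $\sr$ primitive of Lemma~\ref{lemma:9} between the unique vertex with that $\ID$ and its children. The paper's proof is terser (it only spells out {\sf Downward transmission} with $S=\{v\}$, $R=$ children of $v$, and says {\sf Upward} is analogous), while you give the symmetric roles explicitly, note that a single transmission suffices for {\sf Downward}, and observe that each vertex is active in only $O(1)$ blocks---yielding the slightly sharper energy bound $O(\min\{\log M,\log N\})$, which the paper simply absorbs into the stated $O(\min\{\log M,\log N\}\log N)$.
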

\begin{proof}
We only present the proof for {\sf Downward transmission}, since {\sf Upward transmission} can be solved analogously.
The algorithm is as follows.
We allocate $N$ time intervals. The $j$th interval is reserved for $\sr$ between the layer-$i$ vertex $v$ with $\ID(v)=j$ and its children (i.e., $S = \{v\}$ and $R$ is the set of all children of $v$).
\end{proof}

The three operations {\sf Down-cast}, {\sf All-cast} and {\sf Up-cast} (they are defined in Section~\ref{subsect:beep}, but the subroutines {\sf Downward transmission} and  {\sf Upward transmission} are the ones defined in this section) can also be done  in our cluster structure.
\begin{itemize}
\item {\sf Down-cast} can be implemented by $n-1$ {\sf Downward transmission} (from $i=0$ to $i = n-2$), and it costs  $O(n \min\{M,N\} N)$ time and $O(\min\{\log M, \log N\} \log N)$ energy.
\item {\sf All-cast} can be implemented by running $\sr$ for $N$ times in a way similar to the proof of Lemma~\ref{lemma:cluster-transmit-det}, and it costs  $O(\min\{M,N\} N)$ time and $O(\min\{\log M, \log N\} \log N)$ energy.
\item {\sf Up-cast} can be implemented by $n-1$ {\sf Upward transmission} (from $i=n-1$ to $i=1$), and it costs  $O(n \min\{M,N\} N)$ time and $O(\min\{\log M, \log N\} \log N)$ energy.
\end{itemize}
Observe that one round of $\cd^\star$ model on cluster graph can be simulated using the three operations {\sf Down-cast}, {\sf All-cast} and {\sf Up-cast}, and so we conclude the following lemma.

\begin{lemma}\label{lemma:simul-cd}
In $\cd$ model, we can deterministically simulate any deterministic algorithm in $\cd^\star$ model on a cluster graph, where each message is an integer in $\{1, \ldots, M\}$. Each round of the algorithm is simulated with $O(n \min\{M,N\} N)$ time and $O(\min\{\log M, \log N\} \log N)$ energy.
\end{lemma}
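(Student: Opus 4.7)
The plan is to reduce one round of a $\cd^\star$ algorithm on the cluster graph to the three sequentially executed primitives $\textsf{Down-cast}$, $\textsf{All-cast}$, $\textsf{Up-cast}$ defined in Section~\ref{subsect:beep}, now instantiated with the deterministic cluster primitives of Section~\ref{subsect:cluster-det}. In a $\cd^\star$ round, each cluster $C$ decides, on the basis of internal state that is kept at its center, whether to transmit a message $m_C \in \{1,\dots,M\}$, to listen, or to stay idle; every listening cluster with at least one transmitting cluster-graph neighbor must obtain some one of the adjacent $m_{C'}$'s, and the choice may be adversarial.

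First I would implement the round in three phases. In \textsf{Down-cast}, each transmitting cluster disseminates $m_C$ from its center to every member by running $n-1$ successive rounds of \textsf{Downward transmission} (layer $i=0,\dots,n-2$), so that at the end every vertex of a transmitting cluster knows its outgoing message. In \textsf{All-cast}, messages cross cluster boundaries: every vertex that now holds $m_C$ participates in a single $\sr$ instance (Lemma~\ref{lemma:9}) whose sender set $S$ is the union of all vertices in transmitting clusters and whose receiver set $R$ is the union of all vertices in listening clusters. In \textsf{Up-cast}, each listening cluster runs $n-1$ successive \textsf{Upward transmission} rounds (layer $i=n-1,\dots,1$) to percolate any received message up to its center.

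Next I would verify correctness. After \textsf{Down-cast}, every vertex of a transmitting cluster knows its cluster's outgoing message. During \textsf{All-cast}, by Lemma~\ref{lemma:9} each listening-cluster vertex $v$ with a transmitting neighbor learns $m_{C'}$ for some $C' \in N(v)\cap S$; in particular, when the listening cluster $C$ has at least one transmitting cluster-graph neighbor, at least one boundary vertex of $C$ receives an $m_{C'}$ with $C' \in \mathcal{N}(C)$. During \textsf{Up-cast}, each parent collects one message from its children (possibly breaking ties arbitrarily), and this propagates to the cluster center. Since every message that reaches the center is some $m_{C'}$ for a transmitting neighbor $C'$ of $C$, the outcome is a legitimate $\cd^\star$ round on the cluster graph.

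Finally I would sum the costs, using the itemized bounds derived just above the lemma: \textsf{Down-cast} and \textsf{Up-cast} each cost $O(n\,\min\{M,N\}\,N)$ time and $O(\min\{\log M,\log N\}\,\log N)$ energy, while \textsf{All-cast} is a single $\sr$ of the same energy cost and strictly smaller time. Adding the three yields the advertised $O(n\,\min\{M,N\}\,N)$ time and $O(\min\{\log M,\log N\}\,\log N)$ energy per simulated $\cd^\star$ round. The only delicate point, and what I expect to be the main obstacle to pin down carefully, is arguing that the adversarial arbitration allowed in $\cd^\star$ covers every ambiguity introduced by the simulation: namely, that nothing produced at a listening cluster's center can be anything other than some $m_{C'}$ for an actually transmitting neighbor $C'$, even when many transmitting clusters touch $C$ simultaneously and multiple boundary vertices of $C$ receive distinct messages. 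Because $\cd^\star$ permits any one of the $m_{C'}$'s to be the delivered message, this reduces to checking that \textsf{All-cast} never invents a message and that \textsf{Up-cast}, which is a tree aggregation, discards down to exactly one of the received messages at the center.
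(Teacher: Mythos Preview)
Your proposal is correct and follows essentially the same approach as the paper: the paper's argument is simply the one-line observation that a $\cd^\star$ round is simulated by \textsf{Down-cast}, \textsf{All-cast}, \textsf{Up-cast}, together with the cost bounds in the itemized list immediately preceding the lemma. Your write-up just fleshes out the correctness verification that the paper leaves implicit; the only cosmetic discrepancy is that you implement \textsf{All-cast} as a single $\sr$ invocation, whereas the paper (conservatively) describes it as $N$ invocations in the style of Lemma~\ref{lemma:cluster-transmit-det}, but either way the bound is dominated by \textsf{Down-cast} and \textsf{Up-cast}.
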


\end{document}